\numberwithin{equation}{section}
\newtheorem{Theorem}{Theorem}[section]
\newtheorem*{Theorem*}{Theorem}
\newtheorem{Corollary}[Theorem]{Corollary}
\newtheorem{Lemma}[Theorem]{Lemma}
\newtheorem{Proposition}[Theorem]{Proposition}
 { \theoremstyle{definition}
\newtheorem{Definition}[Theorem]{Definition}

\newtheorem{Example}[Theorem]{Example}
\newtheorem{Remark}[Theorem]{Remark} }
\newcommand{\im} {\mbox {Im}\hskip 0.5truemm}
\newcommand{\tint}{{\textstyle\int}}
\newcommand{\id}{\mathbbm{1}}
\newcommand{\ba}{\begin{align}}
\newcommand{\ea}{\end{align}}
\newcommand{\nn}{\nonumber}
\newcommand{\R}{{\mathcal R}}
\newcommand{\tr}{{\rm tr}}
\newcommand{\bt}{{\bf t}}
\newcommand{\p}{\partial}
\newcommand{\g}{{\mathfrak g}}
\newcommand{\mf}[1]{\mathfrak{#1}}
\newcommand{\mb}[1]{\mathbb{#1}}
\newcommand{\mc}[1]{\mathcal{#1}}
\DeclareMathOperator{\ad}{ad}
\DeclareMathOperator{\Ker}{Ker}
\DeclareMathOperator{\Res}{Res}
\def\wbigoplus{\mathop{\widehat{\bigoplus}}\limits}
\def\={\; = \;}
\def\+{\, + \,}
\def\:={\; := \; }
\tikzset{node distance=2em, ch/.style={circle,draw,on chain,inner sep=2pt},chj/.style={ch,join},every path/.style={shorten >=4pt,shorten <=4pt},line width=1pt,baseline=-1ex}
\let\dlabel=\alabel
\let\ulabel=\mlabel
\newcommand{\dnode}[2][chj]{%
\node[#1,label={below:\dlabel{#2}}] {};
}
\newcommand{\dnodea}[3][chj]{%
\dnode[#1,label={above:\ulabel{#2}}]{#3}
}
\newcommand{\dnodeanj}[2]{%
\dnodea[ch]{#1}{#2}
}
\newcommand{\QLeftarrow}{%
\begingroup
\tikz
\draw[shorten >=0pt,shorten <=0pt] (0,3pt) -- ++(-1em,0) (0,1pt) -- ++(-1em-1pt,0) (0,-1pt) -- ++(-1em-1pt,0) (0,-3pt) -- ++(-1em,0) (-1em+1pt,5pt) to[out=-105,in=45] (-1em-2pt,0) to[out=-45,in=105] (-1em+1pt,-5pt);
\endgroup
}
\begin{document}
\allowdisplaybreaks

\newcommand{\arXivNumber}{2110.06655}

\renewcommand{\PaperNumber}{077}

\FirstPageHeading

\ShortArticleName{Affine Kac--Moody Algebras and Tau-Functions for the Drinfeld--Sokolov Hierarchies}

\ArticleName{Affine Kac--Moody Algebras and Tau-Functions\\ for the Drinfeld--Sokolov Hierarchies:\\ the Matrix-Resolvent Method}

\Author{Boris DUBROVIN~$^{\rm a}$, Daniele VALERI~$^{\rm bc}$ and Di YANG~$^{\rm d}$}

\AuthorNameForHeading{B.~Dubrovin, D.~Valeri and D.~Yang}

\Address{$^{\rm a)}$~Deceased}

\Address{$^{\rm b)}$~Dipartimento di Matematica, Sapienza Universit\`a di Roma,\\
\hphantom{$^{\rm b)}$}~P.le Aldo Moro 5, 00185 Rome, Italy}
\EmailD{\href{mailto:daniele.valeri@uniroma1.it}{daniele.valeri@uniroma1.it}}
\URLaddressD{\url{https://danielevaleri.site.uniroma1.it/}}
\Address{$^{\rm c)}$~INFN, Section of Rome, Italy}

\Address{$^{\rm d)}$~School of Mathematical Sciences, USTC, Hefei 230026, P.R.~China}
\EmailD{\href{mailto:diyang@ustc.edu.cn}{diyang@ustc.edu.cn}}

\ArticleDates{Received April 07, 2022, in final form September 26, 2022; Published online October 14, 2022}

\Abstract{For each affine Kac--Moody algebra $X_n^{(r)}$ of rank $\ell$, $r=1,2$, or $3$, and for every choice of a vertex~$c_m$, $m=0,\dots,\ell$, of the corresponding Dynkin diagram, by using the matrix-resolvent method we define a gauge-invariant tau-structure for the associated Drinfeld--Sokolov hierarchy and give explicit formulas for generating series of logarithmic derivatives of the tau-function in terms of matrix resolvents, extending the results of [\textit{Mosc. Math.~J.} \textbf{21} (2021), 233--270, arXiv:1610.07534] with $r=1$ and $m=0$. For the case $r=1$ and $m=0$, we verify that the above-defined tau-structure agrees with the axioms of Hamiltonian tau-symmetry in the sense of [\textit{Adv. Math.} \textbf{293} (2016), 382--435, arXiv:1409.4616] and [arXiv:math.DG/0108160].}

\Keywords{Kac--Moody algebra; tau-function; Drinfeld--Sokolov hierarchy; matrix resolvent}

\Classification{37K10; 17B80; 17B67; 37K30}

\section{Introduction}\label{section1}

Let $X_n^{(r)}$ be an affine Kac--Moody algebra of rank~$\ell$, with $r=1,2,3$ (here $n=n(\ell)$, for example, $n(\ell)=\ell$ when $r=1$),
and let
$C=(C_{ij})_{i,j=0}^\ell$ be its Cartan matrix~\cite{Kac94}.
In $X_{n}^{(r)}$ there is a set of
Chevalley generators $\{e_i,h_i,f_i\mid i=0,\dots,\ell\}$
satisfying the following relations:
\begin{equation}\label{rel:Lie}
[h_i,h_j]=0,\quad [e_i,f_j]=\delta_{ij}h_i,\quad[h_i,e_j]=C_{ij}e_j, \quad [h_i,f_j]=-C_{ij}f_j, \quad \forall\, 0\leq i,j \leq \ell,
\end{equation}
and for $i\neq j$ we have
\begin{equation}\label{rel:Serre}
(\ad e_i)^{1-C_{ij}}e_j=(\ad f_i)^{1-C_{ij}}f_j=0.
\end{equation}

Let $a_i$ (respectively $a_i^\vee$) be the positive integers satisfying
$\sum_{j=0}^\ell C_{ij}a_j=0$
\big(respectively $\sum_{j=0}^\ell C_{ji}a_j^\vee=0$\big), for all $i=0,\dots,\ell$, such that their greatest common divisor is~1.
The number{\samepage
\begin{equation*}
h=\sum_{i=0}^\ell a_i
\qquad
\bigg(\text{respectively }
h^\vee=\sum_{i=0}^\ell a_i^\vee\bigg)
\end{equation*}
is called \emph{Coxeter number} (respectively \emph{dual Coxeter number}) of $X_{n}^{(r)}$~\cite{Kac94}.}

Let $\widetilde{\mf g}$ be the quotient of $X_{n}^{(r)}$ by the one-dimensional space generated by the
central element
$K=\sum_{i=0}^\ell a_i^\vee h_i$.
The \emph{principal gradation} on $\widetilde{\mf g}$ is defined by
assigning $\deg^{\rm pr} e_i=-\deg^{\rm pr} f_i=1$, $i=0,\dots,\ell$. Clearly, $\widetilde{\mf g}$ decomposes into the direct sum of homogeneous subspaces
\[
\widetilde{\mf g}=\bigoplus_{k\in\mb Z}\widetilde{\mf g}^{k}
,
\]
where elements in $\widetilde{\mf g}^{\,k}$ have principal degree $k$.
In this paper we are interested with a completion of $\widetilde{\mf g}$ rather than with $\widetilde{\mf g}$ itself.
By an abuse of notation we denote these two objects with the same symbol and let
\begin{equation}\label{deg:principal}
\widetilde{\mf g}=\wbigoplus_{k\in\mb Z}\,\widetilde{\mf g}^{k}
,
\end{equation}
where the direct sum is completed by allowing infinite series in negative degree.
Given an element $a\in\widetilde{\mf g}$ we denote by $a^+$ its projection on $\widetilde{\mf g}^{\geq0}=\oplus_{k\geq0}\widetilde{\mf g}^k$ and
by $a^{-}$ its projection on $\widetilde{\mf g}^{<0}=\widehat{\oplus}_{k<0}\widetilde{\mf g}^k$.

Introduce the {\it cyclic element}
\begin{equation}\label{eq:Lambda}
\Lambda=\sum_{i=0}^\ell e_i\in\widetilde{\mf g}.
\end{equation}
Note that $\Lambda$ is homogeneous of principal degree~$1$.
Let $\mc H=\Ker\ad\Lambda$ be the so-called \emph{principal $($centerless$)$ Heisenberg subalgebra} of $\widetilde{\mf g}$.
According to \cite{Kac78} (cf.~\cite{Kos}), $\mc H$ is abelian
and we have the direct sum decomposition
\begin{equation}\label{dec:Lambda}
\widetilde{\mf g}=\mc H\oplus\im\ad\Lambda
.
\end{equation}
Given $A\in \widetilde{\mf g}$ we denote by $\pi_{\mc H}(A)\in\mc H$ its projection with respect to the direct sum decomposition~\eqref{dec:Lambda} (namely $\Ker\pi_{\mc H}=\im\ad\Lambda$).

It is known that
$\mc H$ and $\im\ad\Lambda$ admit the following decomposition:
\begin{equation}
\label{mcHdecompbasislambda}
\mc H=\wbigoplus_{i\in E}\mb C \Lambda_i,\qquad
\mf\im\ad\Lambda=\wbigoplus_{i\in\mb Z}\,(\im\ad\Lambda)^i,
\end{equation}
where
$E\subset\mb Z$ is the set of exponents of $X_{n}^{(r)}$ (see~\cite{Kac94} for the definition of exponents),
$\deg^{\rm pr} \Lambda_i=i$, and
$(\im\ad\Lambda)^i=\im\ad\Lambda\cap\widetilde{\mf g}^{\,i}$, $i\in \mb Z$.
A convenient normalization of the basis elements~$\Lambda_i$ can be found in Section~\ref{sec:3.2}.
Recall that the set~$E$ has the following form
\begin{equation}\label{exponents_set}
E=\bigsqcup_{a=1}^n(m_a+rh\mb Z)
,
\end{equation}
where
\[
1=m_1<m_2\leq\cdots\leq m_{n-1}<m_n=rh-1
,
\]
satisfy the following relation:
\begin{equation}\label{20171106:eq2}
m_{a}+m_{n+1-a}=rh
,
\qquad
a=1,\dots,n
.
\end{equation}

Let $m$ be an integer from $0$ to $\ell$.
Take $c_m$ to be the $m$th vertex of the Dynkin diagram of~$X_{n}^{(r)}$.
Here we label the Dynkin diagram according to~\cite{Kac94}. Recall that
the vertex $c_0$ is the so-called {\it special vertex}~\cite{Kac94}.
The \emph{standard gradation} corresponding to
$c_m$ is defined by assigning $\deg_{\rm st} e_m=-\deg_{\rm st} f_m=1$, and degree~$0$ to all the remaining Chevalley generators.
Then, we also have the direct sum decomposition
\begin{equation}\label{deg:standard}
\widetilde{\mf g}=\wbigoplus_{k\in\mb Z}\,\widetilde{\mf g}_{k}
,
\end{equation}
where $\widetilde{\mf g}_k$ denotes the homogeneous subspace of elements with standard degree $k$.
Given an element $a\in\widetilde{\mf g}$ we denote by $a_+$ its projection on $\widetilde{\mf g}_{\geq0}=\oplus_{k\geq0}\widetilde{\mf g}_k$ and
by $a_{-}$ its projection on $\widetilde{\mf g}_{<0}=\widehat{\oplus}_{k<0}\widetilde{\mf g}_k$.

Denote $\mf a=\widetilde{\mf g}_0$. Note that $\mf a$ is a semisimple Lie algebra with Cartan matrix
$C_{\mf a}=(C_{ij})_{i,j\neq m}$ and Chevalley generators
$e_i$, $h_i$ and $f_i$, $i\in\{0,\dots,\ell\}\setminus\{m\}$. With respect to the principal gradation we can write
\begin{equation}\label{dec:dynkin}
\mf a=\bigoplus_{i=-h_{\mf a}+1}^{h_{\mf a}-1}\mf a^i
,
\end{equation}
where $\mf a^i=\mf a\cap\widetilde{\mf g}^{\,i}$ and $h_{\mf a}$ is the Coxeter number of $\mf a$.
In particular, $\mf a^0$, that is generated by $h_i$, $i\in\{0,\dots,\ell\}\setminus\{m\}$, is equal to $\widetilde{\mf g}^{\,0}$,
and it is a Cartan subalgebra of~$\mf a$.
Let us further denote by~$\mf n$ the nilpotent subalgebra $\mf n=\mf a^{<0}$ of~$\mf a$ and by~$\mf b$ the
Borel subalgebra $\mf b=\mf n\oplus\mf a^0$ of~$\mf a$. Clearly, $\mf n$~is generated by $f_i$,
$i\in\{0,\dots,\ell\}\setminus\{m\}$, and $\mf b$ is generated by $f_i$, $h_i$,
$i\in\{0,\dots,\ell\}\setminus\{m\}$.
From the defining relations~\eqref{rel:Lie} of the Kac--Moody algebra $\widetilde{\mf g}$ we have
\begin{equation}\label{20171109:eq1}
[\mf n,\mf b]\subset\mf n,\qquad
[\mf n,e_m]=0,\qquad
[\mf n,e_i]\subset\mf b,\qquad
i\in\{0,\dots,\ell\}\setminus\{m\}.
\end{equation}
Let $e=\Lambda-e_m\in\mf a$, where $\Lambda\in\widetilde{\mf g}$ is the cyclic element in~\eqref{eq:Lambda}.
The element~$e$ is called a {\it principal nilpotent element}.

Define a linear operator associated to the pair $(\widetilde{\mf g},c_m)$,
called a Lax operator, by
\begin{equation*}
\mc L=\partial+\Lambda+q,
\end{equation*}
where $\partial:=\partial_x$ and $q:=q(x)\in C^{\infty}\big(S^1,\mf b\big)$ is a smooth function from the circle to the Borel subalgebra~$\mf b$.
We denote by~$\mc A^q$
the algebra of differential polynomials in $q$, namely,
an element of~$\mc A^q$ is a polynomial in the entries of
the smooth function~$q$ and their $x$-derivatives.
Recall from \cite{DS85} that there exist a unique pair of elements $U\in\mc A^q\otimes(\im\ad\Lambda)^{<0}$ and
$H\in\mc A^q\otimes\mc H^{<0}$ such that
\begin{equation}\label{eq:L0}
{\rm e}^{\ad U}(\partial+\Lambda+q)=\partial+\Lambda+H.
\end{equation}
(Observe that the element $U$ used in this paper is $-U$ in \cite{BDY16}.)
\begin{Definition}\label{defresolvents}
An element $R\in\mc A^q\otimes\widetilde{\mf g}$ such that $[\mc L,R]=0$ is called a {\it resolvent}
for $\mc L$.
\end{Definition}
For every $i\in E$, we denote
\begin{equation}\label{eq:R_a}
R_i ={\rm e}^{-\ad U}(\Lambda_i)\in\mc A^q\otimes \widetilde{\mf g}.
\end{equation}
Clearly, $R_i$, $i\in E$, are resolvents for $\mc L$. Indeed, using \eqref{eq:L0} and the fact that $\mc H$ is abelian, we have
\begin{equation}\label{LRcomm}
[\mc L,R_i]
 =
{\rm e}^{-\ad U}[\partial+\Lambda+H,\Lambda_i]
= 0
.
\end{equation}
We call $R_{m_a}$, $a=1,\dots,n$, the {\it basic resolvents}.

Recall that the following system of evolutionary partial differential equations (PDEs)
\begin{equation}\label{eq:preDS}
\frac{\partial \mc L}{\partial t_i}= \bigl[(R_i)_+,\mc L\bigr],\qquad
i\in E\cap \mathbb{Z}_{>0},
\end{equation}
is called the \emph{pre-Drinfeld--Sokolov $($pre-DS$)$ hierarchy} associated to the pair $(\widetilde{\mf g},c_m)$.
The proof of the fact that~\eqref{eq:preDS} indeed defines evolutionary PDEs can be found in~\cite{DS85}.
Also according to~\cite{DS85}, the flows
in this system of PDEs all commute. Recall also that, for all $j\in E$ and $i\in E\cap \mathbb{Z}_{>0}$, we have
\begin{equation}\label{20171104:eq2}
\frac{\partial R_j}{\partial t_i}= [(R_i)_+, R_j],
\end{equation}

A \emph{gauge transformation}
is a change of variables $q\mapsto \tilde q \in\mc A^q\otimes\mf b$ of the form
\begin{equation}\label{gauge}
\widetilde{\mc L}={\rm e}^{\ad N}\mc L=\partial+\Lambda+\tilde q, \qquad N\in\mc A^q\otimes\mf n.
\end{equation}
Explicitly, we have
\begin{equation}\label{gauge2}
\tilde q=q-\sum_{k\geq1}\frac{(\ad N)^{k-1}}{k!}(\partial N)+\sum_{k\geq1}\frac{(\ad N)^k}{k!}(q+\Lambda).
\end{equation}
Due to the commutation relations in \eqref{20171109:eq1}, the expression for~$\tilde q$ in~\eqref{gauge2} is a~well-defined element of~$\mc A^q\otimes\mf b$.
Via the gauge transformation~\eqref{gauge2}, a resolvent~$R$ transforms as follows:
\begin{equation}\label{202110906rr}
R\mapsto {\rm e}^{\ad N} R =\widetilde{R}
.
\end{equation}

By a {\it gauge invariant}, we mean a differential polynomial $g(q,q_x,q_{xx},\dots)$ in~$\mc A^q$, such that $g(\tilde q, \tilde q_x,\tilde q_{xx},\dots)=g(q,q_x,q_{xx},\dots)$
for all gauge transformations~\eqref{gauge}.
The space of all gauge invariants, denoted by~$\mc R$, is a differential algebra~\cite{DS85} that can be identified with
the classical
$\mc W$-algebra~$\mc W(\mf a,e)$ associated to the Lie algebra $\mf a$ and its principal nilpotent element $e$~\cite{DSKV13}.

Since, $\ad e\colon \mf n\to\mf b$ is injective~\cite{DS85} (where we recall that $e$ is the principal nilpotent element),
we may choose a space $V\subset\mf b$ complementary to
$[e,\mf n]$ and compatible with the direct sum decomposition \eqref{dec:dynkin}, i.e.,
\begin{equation}\label{bVen}
\mf b= V \oplus [e,\mf n] .
\end{equation}
Note that
$\dim V=\dim\mf{b}-\dim\mf n=\dim\mf a^0=\ell$. The vector space $V\subset\mf b$ is called a \emph{Drinfeld--Sokolov $($DS$)$ gauge}.
It is proved by Drinfeld and Sokolov that there exists a unique $N^{\rm can}\in\mc A^q\otimes\mf n$ such that
\begin{equation}\label{Lcan}
\mc L^{\rm can}={\rm e}^{\ad N^{\rm can}}\mc L=\partial+\Lambda+q^{\rm can},
\qquad
q^{\rm can}\in\mc A^q\otimes V
.
\end{equation}
If $g(q,q_x,q_{xx},\dots)$ is an element in~$\mc R$, then $g(q,q_x,q_{xx},\dots)=g(q^{\rm can},q_x^{\rm can},q_{xx}^{\rm can},\dots)$. Hence $\mc R$ can be realized as
an algebra of polynomials in the entries $u_1,\dots,u_{\ell}$ of $q^{\rm can}$ and their $x$-derivatives.

By the results of~\cite{DS85}, the differential algebra~$\mc R$ is preserved by the flows of the pre-DS hierarchy
\eqref{eq:preDS}, namely, for every $i\in E\cap \mathbb{Z}_{>0}$ we have that
$\frac{\p}{\p t_i}(\mc R)\subset\mc R$.
\begin{Definition}
The \emph{DS hierarchy} associated to the affine Kac--Moody algebra $\widetilde{\mf g}$
and a vertex~$c_m$ of its Dynkin diagram is the set of equations
\begin{equation}\label{DS:hier}
\frac{\p u_s}{\p t_i} = P_{s,i}\in\mc R,
\qquad s=1,\dots,\ell,\quad i\in E\cap \mathbb{Z}_{>0},
\end{equation}
where the RHS of \eqref{DS:hier} can be computed by applying the flow in \eqref{eq:preDS}
to the gauge invariants~$u_s$.
\end{Definition}
It is known from~\cite{DS85} that
\[
\frac{\p u_s}{\p t_1} = -\frac{\p u_s}{\p x}.
\]
Therefore, for the DS hierarchy, we identify $t_1$ with~$-x$, and a solution $u_s(x,{\bf t})$
to the DS hierarchy~\eqref{DS:hier} will be simply denoted by $u_s({\bf t})$.
We also remark that, as it is shown in~\cite{DS85},
if~a~vertex of the Dynkin diagram of~$\widetilde{\mf g}$ is the image of the vertex~$c_m$
under an automorphism of the diagram, then the corresponding DS hierarchies
coincide.

The main theme of this paper is on computing logarithmic derivatives of tau-functions for the DS hierarchy using the
matrix-resolvent method~\cite{BDY16,BDY21,Zhou}.
 To proceed let us realize~$\widetilde{\g}$
as a~subalgebra $L(\mf g,\sigma_m)$ of~$L(\g)=\g\otimes\mb C\big(\big(\lambda^{-1}\big)\big)=\mf g\big(\big(\lambda^{-1}\big)\big)$, where $\g$ is a certain simple Lie algebra and $\sigma_m\colon \mf g\to\mf g$ is a finite-order automorphism
both depending on the pair $(\widetilde{\mf g},c_m)$ (see Section~\ref{sec:realizationKM} for the details). Denote $N_m=r a_m$, and let
$\pi_{\lambda}\colon \mb C\big(\big(\lambda^{-1}\big)\big)\to \lambda^{-1}\mb C\big[\big[\lambda^{-N_m}\big]\big]$ be defined by~\eqref{eq:pi} and $\pi_{\lambda,\mu}:=\pi_\mu\circ \pi_\lambda$.

\begin{Definition}
Define the series $F_{a,b}(\lambda,\mu)\in\mc A^{q}\otimes\lambda^{-1}\mu^{-1}\mb C\big[\big[\lambda^{-N_m},\mu^{-N_m}\big]\big]$,
$a,b=1,\dots,n$, by
\begin{equation}\label{eq:two-point-series}
F_{a,b}(\lambda,\mu)
= \pi_{\lambda,\mu}\bigg(
\frac{\big(R_{m_a}(\lambda)\big|R_{m_b}(\mu)\big)-\frac{\delta_{a+b,n+1}}{r} \big(m_a\lambda^{N_m}+m_b\mu^{N_m}\big)}{(\lambda-\mu)^2}\bigg).
\end{equation}
Here $(\cdot|\cdot)$ denotes the normalized Cartan--Killing form of~$\g$ with the natural extension to~$L(\g)$
(cf.~\eqref{ckdef1}--\eqref{ckdef2} for its precise definition).
\end{Definition}
The fact that the right-hand side of~\eqref{eq:two-point-series} belongs to
$\mc A^{q}\otimes\lambda^{-1}\mu^{-1}\mb C\big[\big[\lambda^{-N_m},\mu^{-N_m}\big]\big]$
will be proved in the beginning of Section~\ref{section33}.

Write
\begin{equation}\label{eq:Fab}
F_{a,b}(\lambda,\mu)=:\sum_{l,k\in\mb Z_{\geq0}} \Omega_{a,l;b,k}\lambda^{-N_ml-1}\mu^{-N_mk-1}.
\end{equation}
It follows from~\eqref{202110906rr} and the invariance property of~$(\cdot|\cdot)$ that the differential polynomials
$\Omega_{a,l;b,k}$, $a,b=1,\dots,n$, $l,k\in\mb Z_{\geq0}$,
defined via~\eqref{eq:Fab} belong to~$\mc R$. In particular,
$F_{a,b}(\lambda,\mu)$ does not change if we replace~$R_{m_c}$ in the right-hand side of~\eqref{eq:two-point-series} with
\begin{equation}\label{RcanNR}
R^{\rm can}_{m_c}:={\rm e}^{\ad N^{\rm can}} R_{m_c}
\end{equation}
(cf.~\eqref{202110906rr}).
We will prove in Section~\ref{section33} that
the differential polynomials~$\Omega_{a,l;b,k}\in\mc R$ also have the following properties:
\begin{align}
& \Omega_{a,l;b,k} = \Omega_{b,k;a,l}, \label{tau09081}\\
& \p_{t_{m_c+mrh}}\Omega_{a,l;b,k} = \p_{t_{m_a+lrh}} \Omega_{b,k;c,m} . \label{tau09082}
\end{align}
We call $\{\Omega_{a,l;b,k}\mid a,b=1,\dots,n,l,\,k\in\mb Z_{\geq0}\}$ the {\it tau-structure} for the DS hierarchy.
For $N\ge3$, $c_1,\dots,c_N\in \{1,\dots,n\}$, and $k_1,\dots,k_N\geq0$, we also define
\begin{equation}\label{defOmegaN}
\Omega_{c_1,k_1;\dots;c_N,k_N}:=
\p_{t_{m_{c_N}+k_Nrh}} \cdots \p_{t_{m_{c_3}+k_3rh}} (\Omega_{c_1,k_1;c_2,k_2}).
\end{equation}
Clearly, these elements all belong to~$\R$.
It follows from~\eqref{tau09081}--\eqref{tau09082} that $\Omega_{c_1,k_1;\dots;c_N,k_N}$ are totally symmetric
with respect to permuting its index-pairs.

For every $N\geq2$, we define a cyclic symmetric invariant $N$-linear form $B\colon \mf g\times\dots\times\mf g\to\mb C$ by
\[
B(x_1,\dots,x_{N})=\tr(\ad x_1\circ\dots\circ\ad x_{N}),
\qquad
x_1,\dots, x_{N}\in\mf g.
\]
We extend $B$ to a cyclic symmetric invariant linear $N$-form on $L(\mf g,\sigma_m)\times\dots\times L(\mf g,\sigma_m)$ with values in
$\mb C\big(\big(\lambda_1^{-1},\dots,\lambda_N^{-1}\big)\big)$ in the obvious way.
The main result of the paper is given by the following theorem.
\begin{Theorem}\label{thm:N-point}
For each $N\geq2$, let $c_1,\dots,c_N$ be arbitrarily given integers in~$\{1,\dots,n\}$.
We have
\begin{gather}
\sum_{k_1,\dots,k_N\in\mb Z_{\geq0}} \prod_{j=1}^N
\lambda_j^{-N_m k_j-1} \Omega_{c_1,k_1;\dots;c_N,k_N}
=-\frac{\pi_{\lambda_1,\dots,\lambda_N}}{2h_{\mf g}^\vee}\label{eq:N-point-series2}
\\ \qquad
{}\times\Bigg(\sum_{s\in S_N/C_N}
\frac{B\bigl(R^{\rm can}_{c_{s_1}}(\lambda_{s_1}),\dots,R^{\rm can}_{c_{s_N}}(\lambda_{s_N})\bigr)}
{\prod_{j=1}^N(\lambda_{s_j}-\lambda_{s_{j+1}})}
-\frac{\delta_{N,2}\delta_{c_1+c_2,n+1}}{2r}\frac{m_{c_1}\lambda_1^{N_m}+m_{c_2}\lambda_2^{N_m}}{(\lambda_1-\lambda_2)^2}
\Bigg),\nn
\end{gather}
where $S_N$ denotes the symmetric group and $C_N$ the cyclic group, and $s_{N+1}=s_1$.
\end{Theorem}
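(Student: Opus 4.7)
The plan is to prove Theorem~\ref{thm:N-point} by induction on $N\geq 2$, using that $\Omega_{c_1,k_1;\dots;c_N,k_N}$ is built from $\Omega_{c_1,k_1;c_2,k_2}$ by successive time derivatives and that these derivatives act on the basic resolvents through~\eqref{20171104:eq2}.

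For the base case $N=2$, the orbit set $S_2/C_2$ consists of a single representative, so the cyclic sum in the right-hand side of~\eqref{eq:N-point-series2} reduces to the single term $B\bigl(R^{\rm can}_{c_1}(\lambda_1),R^{\rm can}_{c_2}(\lambda_2)\bigr)/\bigl((\lambda_1-\lambda_2)(\lambda_2-\lambda_1)\bigr)$. The standard Lie-algebraic identity $\tr(\ad x\circ\ad y)=2h_{\mf g}^\vee\,(x|y)$, together with the matching of the explicit $\delta_{N,2}$ correction against the subtraction in~\eqref{eq:two-point-series}, identifies this expression with $F_{c_1,c_2}(\lambda_1,\lambda_2)$. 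The left-hand side also equals $F_{c_1,c_2}(\lambda_1,\lambda_2)$ by the generating-series expansion~\eqref{eq:Fab}. The gauge-invariance of the $\Omega$'s noted around~\eqref{RcanNR} justifies the use of $R^{\rm can}$ in place of $R$.

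For the inductive step, suppose the formula holds at level $N-1$. By~\eqref{defOmegaN},
\[
\sum_{k_N\geq 0}\lambda_N^{-N_mk_N-1}\Omega_{c_1,k_1;\dots;c_N,k_N}=\sum_{k_N\geq 0}\lambda_N^{-N_mk_N-1}\p_{t_{m_{c_N}+k_Nrh}}\Omega_{c_1,k_1;\dots;c_{N-1},k_{N-1}}.
\]
Applying this operator to both sides of the inductive hypothesis and invoking~\eqref{20171104:eq2} together with the Leibniz rule, the derivative acts on each factor $R^{\rm can}_{c_{s_i}}(\lambda_{s_i})$ inside $B(\,\cdot\,,\dots,\,\cdot\,)$ as commutation with the $\widetilde{\mf g}_{\geq 0}$-projection of the corresponding resolvent. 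The $\ad$-invariance of $B$ then rewrites each commutator as the difference of two $B$-traces in which a new factor (the generating series in $\lambda_N$) is inserted between neighbouring entries of the cyclic word. Under the outer projection $\pi_{\lambda_1,\dots,\lambda_N}$ this generating series is effectively $R^{\rm can}_{c_N}(\lambda_N)$, since the positive-part projection only discards terms polynomial in $\lambda_N$ which are in any case killed by $\pi_{\lambda_N}$.

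The key step — and what I expect to be the main obstacle — is the combinatorial reorganisation: summing the Leibniz contributions over insertion positions $i=1,\dots,N-1$ and cyclic classes $s\in S_{N-1}/C_{N-1}$, one must recover the sum over $s\in S_N/C_N$ with the full denominator $\prod_{j=1}^N(\lambda_{s_j}-\lambda_{s_{j+1}})$. This reorganisation rests on partial-fraction identities of the form
\[
\frac{1}{(\lambda_{s_i}-\lambda_N)(\lambda_N-\lambda_{s_{i+1}})}=\frac{1}{\lambda_{s_i}-\lambda_{s_{i+1}}}\left(\frac{1}{\lambda_{s_i}-\lambda_N}+\frac{1}{\lambda_N-\lambda_{s_{i+1}}}\right),
\]
which converts the $(N{-}1)$-cycle denominator, multiplied by an ``outside'' factor $1/(\lambda_N-\lambda_{s_j})$, into the $N$-cycle denominator at each insertion site. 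Verifying that every $N$-cyclic class appears exactly once with the correct coefficient $1/(2h_{\mf g}^\vee)$ and sign will require careful bookkeeping and constitutes the principal technical step. Finally, the $\delta_{N,2}$ correction on the right-hand side is independent of all times $t_i$ and is annihilated by $\p_{t_{m_{c_N}+k_Nrh}}$, hence drops out at every step of the induction, consistent with its appearance only at $N=2$.
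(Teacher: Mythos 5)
Your proposal follows essentially the same route as the paper's (largely delegated) proof: the paper first establishes the loop-operator formula of Lemma~\ref{20171104:lem1}, which is exactly the generating-series form of the step you invoke via~\eqref{20171104:eq2}, and then asserts that the induction on $N$ --- base case $N=2$ from the definition of $F_{a,b}$, inductive step via the Leibniz rule, the $\ad$-invariance of $B$, and the partial-fraction reorganisation of the cyclic denominators that you describe --- proceeds as in the KdV case of~\cite{BDY16}, with the passage from $R_{m_a}$ to $R^{\rm can}_{m_a}$ at the end justified by~\eqref{RcanNR} and the invariance of~$B$, just as you note. The one place where your justification is too quick is the claim that the positive-part projection may be dropped because it ``only discards terms polynomial in $\lambda_N$ which are killed by $\pi_{\lambda_N}$'': in the twisted setting with $N_m>1$ the projection $(\cdot)_+$ is taken in the spectral parameter $\mu$ of the resolvent generating the flow, not in the generating variable $\lambda_N$, and the honest argument is the roots-of-unity residue computation carried out in the proof of Lemma~\ref{20171104:lem1}.
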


It also follows from~\eqref{tau09081}--\eqref{tau09082} and the fact that $ \Omega_{a,l;b,k}\in\R$
that for an arbitrary solution~$u_s({\bf t})$, $s=1,\dots,n$,
to the DS hierarchy~\eqref{DS:hier}, there exists a power series $\tau({\bf t})$, such that
\[
\frac{\p^2 \log \tau(\bt)}{\p t_{m_a+lrh} \p t_{m_b+krh}} = \Omega_{a,l;b,k}(\bt).
\]
Here $\Omega_{a,l;b,k}(\bt)$ are understood as the substitution of the solution in~$\Omega_{a,l;b,k}$.
We call $\tau({\bf t})$ the {\it tau-function of the solution $u_s({\bf t})$} to the DS hierarchy.
Note that the tau-function
$\tau({\bf t})$ is uniquely determined by the solution $u_s({\bf t})$ to the DS hierarchy~\eqref{DS:hier}
up to only a factor of the form
\begin{equation*}
\exp\biggl(d_0+\sum_{i\in E\cap \mathbb{Z}_{>0}} d_i t_i \biggr),\qquad
d_0, \, d_i\text{ are arbitrary constants}.
\end{equation*}
Clearly, for any $N\geq3$,
\[
 \frac{\partial^N \log\tau(\bt)}{\partial t_{m_{c_1}+k_1rh}\cdots\partial t_{m_{c_N}+k_Nrh}} = \Omega_{c_1,k_1;\dots;c_N,k_N} (\bt).
\]

It immediately follows from Theorem~\ref{thm:N-point} the next corollary.
\begin{Corollary}
For each $N\geq2$, let $c_1,\dots,c_N$ be arbitrarily given integers in~$\{1,\dots,n\}$.
For an arbitrary solution $u_s({\bf t})$ to the DS hierarchy~\eqref{DS:hier}, let $\tau({\bf t})$ be the
tau-function of the solution.
Then the generating series of logarithmic derivatives of $\tau({\bf t})$ $\big($replacing $\Omega_{c_1,k_1;\dots;c_N,k_N}$ in the left-hand side
of~\eqref{eq:N-point-series2} by $ \frac{\partial^N \log\tau(\bt)}{\partial t_{m_{c_1}+k_1rh}\cdots\partial t_{m_{c_N}+k_Nrh}}\big)$ is equal to
the right-hand side of~\eqref{eq:N-point-series2} with~$R_{m_a}(\lambda)$ being replaced by $R_{m_a}(\lambda;{\bf t})$.
\end{Corollary}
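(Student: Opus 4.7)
The plan is to obtain the corollary as an immediate consequence of Theorem~\ref{thm:N-point} by substituting an arbitrary solution $u_s(\bt)$ of the DS hierarchy into the identity~\eqref{eq:N-point-series2}. First, I would invoke the paragraph immediately preceding the corollary: the symmetry and compatibility identities~\eqref{tau09081}--\eqref{tau09082}, together with definition~\eqref{defOmegaN}, guarantee the existence of a power series $\tau(\bt)$ (unique up to the stated exponential factor) whose $N$-th logarithmic derivatives with respect to the times $t_{m_{c_j}+k_jrh}$ coincide with $\Omega_{c_1,k_1;\dots;c_N,k_N}(\bt)$.

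Next, I would substitute $q=q(x,\bt)$ into both sides of~\eqref{eq:N-point-series2}, viewed as an equality of formal series in $\lambda_1^{-1},\dots,\lambda_N^{-1}$ whose coefficients lie in $\mc R$. By the previous step, the left-hand side becomes precisely the asserted generating series of $N$-th logarithmic derivatives of $\log\tau(\bt)$. On the right-hand side, each $R^{\rm can}_{m_a}(\lambda)$ lies in $\mc A^q\otimes\widetilde{\mf g}$, i.e., it is coefficient-by-coefficient in $\lambda^{-1}$ a differential polynomial in $q$; substitution therefore replaces it by the $\bt$-dependent series $R^{\rm can}_{m_a}(\lambda;\bt)$. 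The multilinear form $B$, the projection $\pi_{\lambda_1,\dots,\lambda_N}$, and the rational prefactors $\prod_j(\lambda_{s_j}-\lambda_{s_{j+1}})^{-1}$ are $\CC$-linear and inert under the substitution, so the right-hand side becomes the stated expression in the resolvents evaluated at the solution.

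Finally, to justify the corollary's use of $R_{m_a}(\lambda;\bt)$ in place of $R^{\rm can}_{m_a}(\lambda;\bt)$, I would note that $B$ is built from traces of iterated adjoint operators and is therefore invariant under every automorphism of $\widetilde{\mf g}$; since $R^{\rm can}_{m_a}=\mathrm{e}^{\ad N^{\rm can}}R_{m_a}$ by~\eqref{RcanNR} and $\mathrm{e}^{\ad N^{\rm can}}$ acts as an automorphism of $L(\mf g,\sigma_m)$ at each point $(q,q_x,\dots)$, one has
\[
B\bigl(R^{\rm can}_{c_{s_1}},\dots,R^{\rm can}_{c_{s_N}}\bigr)=B\bigl(R_{c_{s_1}},\dots,R_{c_{s_N}}\bigr)
\]
as an identity in $\mc A^q\otimes\CC\big(\big(\lambda_1^{-1},\dots,\lambda_N^{-1}\big)\big)$, so either choice of resolvents yields the same right-hand side.

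There is no genuine obstacle: the corollary is a direct translation of Theorem~\ref{thm:N-point} into the language of tau-functions, and the only minor item to record is the $\ad$-automorphism invariance of $B$ that reconciles the canonical and non-canonical forms of the resolvents.
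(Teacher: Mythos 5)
Your proposal is correct and matches the paper's treatment: the paper states the corollary as an immediate consequence of Theorem~\ref{thm:N-point}, obtained exactly as you describe by substituting the solution into both sides of~\eqref{eq:N-point-series2} and using the existence of $\tau(\bt)$ guaranteed by~\eqref{tau09081}--\eqref{tau09082}. Your closing remark on the $\ad$-invariance of $B$ reconciling $R^{\rm can}_{m_a}$ with $R_{m_a}$ is likewise precisely the argument the paper uses to pass from Proposition~\ref{prop:N-point} to Theorem~\ref{thm:N-point}.
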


A further investigation of the interplay between the Hamiltonian structure~\cite{DSKV13, DS85} of the DS hierarchy
and the tau-structure $\Omega_{a,l;b,k}$ will be given in Section~\ref{sec:Ham}. In particular,
for an untwisted affine Kac--Moody algebra and the choice of the special vertex $c_0$ of its Dynkin diagram, we verify
in Theorem~\ref{thmunt} that $\Omega_{a,l;b,k}$ agree with the axioms of Hamiltonian tau-symmetry in the sense of~\cite{DLYZ16, DZ-norm}.

The paper is organized as follows. In Section~\ref{sec:matrix_res}
we apply the matrix-resolvent method to the study of tau-functions for the DS hierarchies.
In Section~\ref{sec:exa}
we apply the matrix-resolvent method to the DS hierarchies for the affine Kac--Moody
algebra $A_2^{(2)}$.
In Section~\ref{sec:Ham} we investigate relationship between
the Hamiltonian structure of the DS hierarchy
and the tau-structure.

\section[The matrix-resolvent method to tau-functions for the DS hierarchy]
{The matrix-resolvent method to tau-functions \\for the DS hierarchy}\label{sec:matrix_res}

In this section, we apply the matrix-resolvent (MR) method to the study of tau-functions for the DS hierarchies. In particular,
we will prove Theorem~\ref{thm:N-point}.

\subsection{The standard realization of affine Kac--Moody algebras}\label{sec:realizationKM}
Let $\mf g$ be a simple finite-dimensional Lie algebra of rank $n$, and let $\sigma$ be an automorphism of
$\mf g$ satisfying $\sigma^N=1$
for a positive integer~$N$.
Since $\sigma$ is diagonalizable, we have the direct sum decomposition
\begin{equation}\label{dec:sigma}
\mf g=\bigoplus_{\bar k\in\mb Z/N\mb Z}\mf g_{\bar k}
,
\end{equation}
where $\mf g_{\bar{k}}$ is the eigenspace of $\sigma$ with eigenvalue ${\rm e}^{\frac{2\pi {\rm i} k}{N}}$.

As in Section~\ref{section1}, denote by $L(\mf g)=\mf g\otimes\mb C\big(\big(\lambda^{-1}\big)\big)=\mf g\big(\big(\lambda^{-1}\big)\big)$ the space of Laurent series in the variable $\lambda^{-1}$
with coefficients in~$\mf g$.
The Lie algebra structure of $\mf g$ extends to a Lie algebra structure on $L(\mf g)$ in the obvious way.
We extend $\sigma$ to a Lie algebra homomorphism
(which we still denote by $\sigma$) $\sigma\colon L(\mf{g})\to L(\mf g)$
given by
\begin{equation}\label{sigmaonkm}
\sigma(a\otimes f(\lambda))=\sigma(a)\otimes f\big({\rm e}^{-\frac{2\pi {\rm i}}{N}}\lambda\big),
\end{equation}
for $a\in \mf g$, $f\in \mb C\big(\big(\lambda^{-1}\big)\big)$.
The subalgebra of invariant elements with respect to $\sigma$ is the twisted algebra of Laurent series in the variable $\lambda^{-1}$
with coefficients in $\mf g$, and we denote it by
\[
L(\mf g,\sigma)= L(\mf g)^\sigma
=\{a\in L(\mf{g})\mid \sigma(a)=a\}.
\]
On $L(\mf g,\sigma)$ we have the following gradation induced by the gradation \eqref{dec:sigma} of $\mf g$
and the action of $\sigma$ given by \eqref{sigmaonkm}:
\begin{equation}\label{dec:sigmatwisted}
L(\mf g,\sigma)
=\wbigoplus_{k\in \mb Z}\, L(\mf g,\sigma)_k
,
\end{equation}
where $L(\mf g,\sigma)_k=\mf g_{\bar k}\otimes \lambda^k$.

Let $r$ be the least positive integer such that $\sigma^r$ is an inner automorphism of $\mf g$. Then $r=1,2$ or $3$, and we have that $\widetilde{\mf g}$, the quotient of the affine Kac--Moody algebra $X_n^{(r)}$ by the central element $K$ (cf.~Section~\ref{section1}),
can be realized as
\[
\widetilde{\mf g}\cong L(\mf g,\sigma)
.
\]
For every vertex $c_m$ of the Dynkin diagram of $X_n^{(r)}$ there exists an automorphism $\sigma_m$ of $\mf g$
of order $N_m=ra_m$, such that $\widetilde{\mf g}\cong L(\mf g,\sigma_m)$, and the standard gradation \eqref{deg:standard}
of $\widetilde{\mf g}$ becomes the gradation of $L(\mf g,\sigma_m)$ in powers of $\lambda$ given
by \eqref{dec:sigmatwisted}, see \cite{Kac94}.
We call $L(\mf g,\sigma_m)$ the \emph{standard realization} of $\widetilde{\mf g}$ corresponding to the vertex $c_m$.

For the remaining of the section we will work with the standard realization of $\widetilde{\mf g}$ corresponding to the vertex $c_m$.

Recall from Section~\ref{section1} that $\mf a:=L(\mf g,\sigma_m)_0=\mf g_{\bar0}$ is a semisimple Lie algebra, and $e=\Lambda-e_m$ is a principal nilpotent
element. By the Jacobson--Morozov theorem~\cite{CM} there exist $\rho^\vee$ and~$f$ in~$\mf a$ such that
\begin{equation*}
\big[\rho^\vee,e\big]=e,\qquad
\big[\rho^\vee,f\big]=-f,\qquad
[e,f]=\rho^\vee.
\end{equation*}
The decomposition \eqref{dec:dynkin} of $\mf a$ is the decomposition in $\ad \rho^\vee$-eigenspaces.
Note that $\rho^\vee\in\mf a^0$, which is a Cartan subalgebra of $\mf a$.
The centralizer of $\mf a^0$ is a Cartan subalgebra of $\mf g$ (see~\cite{Kac94}). Hence,
$\rho^\vee\in\mf g$ is a semisimple element and by the representation
theory of $\mf{sl}_2$ we have the $\ad \rho^\vee$-eigenspace
decomposition of $\mf g$:
\[
\mf g=\bigoplus_{i\in\frac12\mb Z}\mf g^i,\qquad
\mf g^i=\big\{a\in\mf g\mid \big[\rho^\vee,a\big]=ia\big\}.
\]
For an eigenvector $a\in\mf g$ with respect to the adjoint action of $\rho^\vee$, we denote by $\delta(a)$ the corresponding eigenvalue,
namely $\big[\rho^\vee,a\big]=\delta(a)a$. Note that the maximal eigenvalue for the adjoint action of $\rho^\vee$ is $\frac{r(h-1)}{N_m}$.

The principal gradation \eqref{deg:principal} on $\widetilde{\mf g}\cong L(\mf g,\sigma_m)$ is then defined as follows:
if $a\otimes\lambda^k\in \widetilde{\mf g}$, $k\in\mb Z$, and $a$ is an eigenvector for $\ad\rho^\vee$, then
\begin{equation}\label{deg:principal2}
\deg^{\rm pr}\big(a\otimes\lambda^k\big)=\delta(a)+k\frac{r h}{N_m}.
\end{equation}
From equation \eqref{deg:principal2} we have that the principal gradation \eqref{deg:principal} on $\widetilde{\mf g}$ is defined by the
following linear map:
\begin{equation}\label{deg:principal3}
\ad\rho^\vee\otimes1+1\otimes\frac{rh}{N_m}\lambda\frac{\rm d}{{\rm d}\lambda}\colon\
\widetilde{\mf g}\to \widetilde{\mf g}.
\end{equation}
(In the sequel we will often omit the tensor product sign.)

As in Section~\ref{section1} we write
\begin{equation}\label{dec:principal_loop}
L(\mf g,\sigma_m)=\wbigoplus_{k\in\mb Z}\, L(\mf g,\sigma_m)^k,
\end{equation}
where elements in $\widetilde{\mf g}^k\cong L(\mf g,\sigma)^k$ have principal degree $k$.

Denote by $(\cdot\,|\,\cdot)$ the normalized invariant bilinear form on $\mf g$:
\begin{equation}\label{ckdef1}
(a|b)=\frac{1}{2h_{\mf g}^\vee}\tr(\ad a\circ\ad b),\qquad
a,b\in\mf g,
\end{equation}
where $h_{\mf g}^\vee$ is the dual Coxeter number of $\mf g$.
We extend it to a bilinear form on $L(\mf g)$ with values in $\mb C\big(\big(\lambda^{-1}\big)\big)$ by
\begin{equation}\label{ckdef2}
(a\otimes f(\lambda)|b\otimes g(\lambda))=(a|b)f(\lambda)g(\lambda),
\qquad
a,b\in\mf g,\quad
f(\lambda),g(\lambda)\in\mb C\big(\big(\lambda^{-1}\big)\big)
.
\end{equation}
Throughout the paper we will consider the restriction of this $\mb C\big(\big(\lambda^{-1}\big)\big)$-valued bilinear form to~$\widetilde{\mf g}$.

\begin{Remark}
If $a(\lambda)\in \widetilde{\mf g}\subset L(\mf g)$ and $b(\lambda)\in L(\mf g)$, then we can compute
$(a(\lambda)| b(\lambda))$.
Note that $\widetilde{\mf g}$ is not preserved by $\partial_\lambda$ if $r>1$. Nevertheless, an expression of the form
$(\partial_\lambda a(\lambda)|b(\lambda))$, $a(\lambda),b(\lambda)\in\widetilde{\mf g}$ still makes sense.
We note that, however, the operator $\lambda\partial_\lambda$ does preserve $\widetilde{\mf g}$, and one could think of
$(\partial_\lambda a(\lambda)|b(\lambda))$ as defined by $(\partial_\lambda a(\lambda)|b(\lambda))=(\lambda \partial_\lambda a(\lambda)|b(\lambda))\lambda^{-1}$.
\end{Remark}
%

\subsection{Basis of the principal Heisenberg subalgebra and basic resolvents}\label{sec:3.2}
Under the standard realization, we often write $\Lambda=\Lambda(\lambda)$, and
let us fix a basis $\{\Lambda_i(\lambda)\mid i\in E\}$ of $\mc H$ (cf.~\eqref{mcHdecompbasislambda}), with $\deg^{\rm pr}\Lambda_i(\lambda)=i$, as follows. We let
$\Lambda_1(\lambda)=\Lambda(\lambda)$
and
\begin{equation}\label{eq:normalization1}
\Lambda_{m_a+rhk}(\lambda)=\Lambda_{m_a}(\lambda)\lambda^{kN_m},\qquad
k\in\mb Z,\quad
1\leq a,b\leq n,
\end{equation}
where $\Lambda_{m_a}(\lambda)$ are normalized by the condition
\begin{equation}\label{eq:normalization}
(\Lambda_{m_a}(\lambda)|\Lambda_{m_b}(\lambda))=\delta_{a+b,n+1} h\lambda^{N_m}
,
\qquad
1\leq a,b\leq n
.
\end{equation}
Recall that $\Lambda(\lambda)=e+e_m(\lambda)$, with $e_m(\lambda)\in \widetilde{\mf g}^1$. By~\eqref{dec:sigmatwisted}
we have that $e_m(\lambda)=\tilde e_m\lambda$, for some $\tilde e_m\in\mf g_{\bar1}$.

We note that the invariance of the bilinear form \eqref{ckdef2} and the fact that $\mc H$ is abelian imply that the decomposition \eqref{dec:Lambda} is orthogonal with respect to $(\cdot\,|\,\cdot)$.
\begin{Lemma}\label{20210121:lem1b}
For $a,b=1,\dots,n$, we have
\begin{equation}\label{20210128:eq1b}
(\partial_\lambda \Lambda_{m_a}(\lambda)|\Lambda_{m_b}(\lambda))
=\delta_{a+b,n+1}\frac{m_aN_m}{r}\lambda^{N_m-1}.
\end{equation}
\end{Lemma}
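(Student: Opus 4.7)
The plan is to reduce the identity \eqref{20210128:eq1b} to the vanishing $([\rho^\vee,\Lambda_{m_a}]|\Lambda_{m_b})=0$ and then to establish this vanishing using the structure of the decomposition \eqref{dec:Lambda}. Applying the principal grading operator \eqref{deg:principal3} to $\Lambda_{m_a}\in\widetilde{\mf g}^{\,m_a}$ yields
\[
m_a\Lambda_{m_a}=[\rho^\vee,\Lambda_{m_a}]+\frac{rh}{N_m}\lambda\partial_\lambda\Lambda_{m_a}.
\]
Pairing with $\Lambda_{m_b}$, using $(\lambda\partial_\lambda A|B)=\lambda(\partial_\lambda A|B)$, and invoking the normalization \eqref{eq:normalization}, I obtain
\[
\lambda(\partial_\lambda\Lambda_{m_a}|\Lambda_{m_b}) = \delta_{a+b,n+1}\frac{m_aN_m}{r}\lambda^{N_m} - \frac{N_m}{rh}([\rho^\vee,\Lambda_{m_a}]|\Lambda_{m_b}),
\]
so the lemma is equivalent to $([\rho^\vee,\Lambda_{m_a}]|\Lambda_{m_b})=0$.

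To establish this vanishing, I would first observe that $\rho^\vee\in\mathrm{Im}\,\ad\Lambda$. Indeed, $\rho^\vee\in\mf a^0=\widetilde{\mf g}^{\,0}$ (as noted just after \eqref{dec:dynkin}), while $0\notin E$ since $m_a\in\{1,\dots,rh-1\}$ for every $a$; hence the principal-degree-zero component $\mc H^0$ of the Heisenberg is zero, and the decomposition \eqref{dec:Lambda} forces $\widetilde{\mf g}^{\,0}=(\mathrm{Im}\,\ad\Lambda)^0$. Thus there exists $Y\in\widetilde{\mf g}$ with $\rho^\vee=[\Lambda,Y]$. Combined with $[\Lambda,\Lambda_{m_a}]=0$, the Jacobi identity then gives
\[
[\rho^\vee,\Lambda_{m_a}]=[[\Lambda,Y],\Lambda_{m_a}]=[\Lambda,[Y,\Lambda_{m_a}]]\in\mathrm{Im}\,\ad\Lambda.
\]
Since $\Lambda_{m_b}\in\mc H$ and the decomposition \eqref{dec:Lambda} is orthogonal under $(\cdot|\cdot)$ (recalled immediately above the lemma), we conclude $([\rho^\vee,\Lambda_{m_a}]|\Lambda_{m_b})=0$, completing the proof.

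The only non-routine ingredient is the reduction $\rho^\vee\in\mathrm{Im}\,\ad\Lambda$, which in turn rests on the arithmetic observation $0\notin E$; everything else is a formal manipulation using the Jacobi identity and the orthogonality of the $\mc H\oplus\mathrm{Im}\,\ad\Lambda$ decomposition. A conceptual benefit of this argument is that the asymmetric coefficient $m_a$ (rather than $(m_a+m_b)/2$) appears naturally from the grading identity applied on the $\Lambda_{m_a}$-side only, as opposed to using the $\lambda$-derivative of the normalization alone, which would only give the symmetric combination.
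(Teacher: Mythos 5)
Your proof is correct and follows the same skeleton as the paper's: apply the principal grading operator \eqref{deg:principal3} to $\Lambda_{m_a}(\lambda)$, pair with $\Lambda_{m_b}(\lambda)$, and reduce everything to the vanishing of $\big(\big[\rho^\vee,\Lambda_{m_a}(\lambda)\big]\big|\Lambda_{m_b}(\lambda)\big)$. The only difference is how that vanishing is justified: the paper gets it in one line from invariance of the form together with $\mc H$ being abelian, since $\big(\big[\rho^\vee,\Lambda_{m_a}\big]\big|\Lambda_{m_b}\big)=\big(\rho^\vee\big|[\Lambda_{m_a},\Lambda_{m_b}]\big)=0$, whereas you take a detour through the (correct) observations that $0\notin E$, hence $\rho^\vee\in(\im\ad\Lambda)^0$, hence $\big[\rho^\vee,\Lambda_{m_a}\big]\in\im\ad\Lambda$ by the Jacobi identity, and then invoke orthogonality of the decomposition \eqref{dec:Lambda}. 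Your route is valid but strictly longer, and it rests on the same two ingredients in the end, because the orthogonality of \eqref{dec:Lambda} is itself derived from invariance and the abelianness of $\mc H$ (as the paper notes just before the lemma); still, the intermediate fact $\rho^\vee\in\im\ad\Lambda$ is a nice structural observation that the paper does not spell out.
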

\begin{proof}
Since $\deg^{\rm{pr}}\Lambda_{m_a}(\lambda)=m_a$, using the grading operator defined in \eqref{deg:principal3}, we
get the identity
\begin{gather}\label{20171207:eq1}
m_a\Lambda_{m_a}(\lambda)
=\big[\rho^\vee,\Lambda_{m_a}(\lambda)\big]+\frac{rh}{N_m}\lambda \partial_\lambda\Lambda_{m_a}(\lambda).
\end{gather}
By pairing both sides of \eqref{20171207:eq1} with $\Lambda_{m_b}(\lambda)$ we get
\begin{align*}
m_a(\Lambda_{m_a}(\lambda)|\Lambda_{m_b}(\lambda))
&=\big(\big[\rho^\vee,\Lambda_{m_a}(\lambda)\big]|\Lambda_{m_b}(\lambda)\big)
+\frac{rh}{N_m}\lambda (\partial_\lambda \Lambda_{m_a}(\lambda)|\Lambda_{m_b}(\lambda))
\\
&=\frac{rh}{N_m}\lambda (\partial_\lambda\Lambda_{m_a}(\lambda)|\Lambda_{m_b}(\lambda)),
\end{align*}
where in the last identity we used the invariance of the bilinear form and the fact that $\mc H$ is abelian.
Equation \eqref{20210128:eq1} follows by using the normalization condition given in \eqref{eq:normalization} in the LHS of the above identity.
\end{proof}

The following result will be used in Section~\ref{sec:Ham}.
\begin{Lemma}
For $a=1,\dots,n$, we have that
\begin{equation}\label{20200828:eq3}
\pi_{\mc H}(\lambda \partial_{\lambda}\Lambda_{m_a}(\lambda))=\frac{m_aN_m}{rh}\Lambda_{m_a}(\lambda),
\end{equation}
where we recall that $\pi_{\mc H}$ denotes the projection onto~$\mc H$ with respect to~\eqref{dec:Lambda}.
\end{Lemma}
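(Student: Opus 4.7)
The strategy is to combine the grading identity already derived in the proof of the previous lemma with the non-degeneracy of the invariant bilinear form restricted to $\mc H$. Note first that, by the Remark preceding Section~\ref{sec:3.2}, the operator $\lambda\partial_\lambda$ preserves $\widetilde{\mf g}$, so the left-hand side of~\eqref{20200828:eq3} makes sense. The plan is to show that
\[
\Delta:=\lambda\partial_\lambda\Lambda_{m_a}(\lambda)-\frac{m_aN_m}{rh}\Lambda_{m_a}(\lambda)
\]
lies in $\im\ad\Lambda$, i.e., $\pi_{\mc H}(\Delta)=0$.

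The invariant form satisfies $\mc H\perp\im\ad\Lambda$: for $h\in\mc H$ and $x\in\widetilde{\mf g}$, the invariance of $(\cdot\,|\,\cdot)$ together with abelianness of $\mc H$ gives $(h\,|\,[\Lambda,x])=([h,\Lambda]\,|\,x)=0$. Consequently, $\pi_{\mc H}(\Delta)=0$ will follow if I can check that $(\Delta\,|\,\Lambda_i)=0$ for every basis element $\Lambda_i$, $i\in E$, of $\mc H$, and then invoke non-degeneracy of the restriction of $(\cdot\,|\,\cdot)$ to $\mc H$. Using $\Lambda_{m_b+rhk}=\Lambda_{m_b}(\lambda)\lambda^{kN_m}$ and applying Lemma~\ref{20210121:lem1b} together with the normalization~\eqref{eq:normalization}, I compute directly
\[
(\lambda\partial_\lambda\Lambda_{m_a}\,|\,\Lambda_{m_b+rhk})
=\delta_{a+b,n+1}\frac{m_aN_m}{r}\lambda^{N_m(k+1)}
=\Bigl(\tfrac{m_aN_m}{rh}\Lambda_{m_a}\,\Big|\,\Lambda_{m_b+rhk}\Bigr),
\]
so $(\Delta\,|\,\Lambda_i)=0$ for all $i\in E$.

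For the concluding non-degeneracy step, the explicit pairing $(\Lambda_{m_a+rhk}\,|\,\Lambda_{m_b+rhl})=\delta_{a+b,n+1}h\lambda^{N_m(k+l+1)}$ shows that if $X=\sum c_{a,k}\Lambda_{m_a+rhk}\in\mc H$ satisfies $(X\,|\,\Lambda_j)=0$ for all $j\in E$, then comparing coefficients of the resulting Laurent polynomial in $\lambda$ forces all $c_{a,k}=0$. Writing $\Delta=X+Y$ with $X\in\mc H$ and $Y\in\im\ad\Lambda$ and using $\mc H\perp\im\ad\Lambda$ then yields $X=0$, as required. The only delicate point is that $\mc H$ is completed in negative degrees, so one must verify that the non-degeneracy argument goes through level-by-level in the principal grading, but this is immediate because the pairing above relates degrees $m_a+rhk$ and $rh(k+l+1)-m_a$ in a bijective manner. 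The rest of the argument is a direct unwinding of the grading identity from Lemma~\ref{20210121:lem1b} and definitions, with no serious obstacle.
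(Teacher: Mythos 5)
Your proposal is correct and follows essentially the same route as the paper's proof: both rest on the orthogonality of the decomposition~\eqref{dec:Lambda}, the pairing formula of Lemma~\ref{20210121:lem1b}, the normalization~\eqref{eq:normalization}, and the resulting non-degeneracy of the form on~$\mc H$ (which the paper uses implicitly when it solves for the coefficients $c_{i,l}$ by comparing powers of $\lambda$). Recasting the argument as ``the difference $\Delta$ pairs to zero with every $\Lambda_i$'' rather than expanding $\pi_{\mc H}(\lambda\partial_\lambda\Lambda_{m_a})$ in the basis is a purely presentational difference.
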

\begin{proof}
From~\eqref{mcHdecompbasislambda} we have that
$\pi_{\mc H}(\lambda \partial_{\lambda}\Lambda_{m_a}(\lambda))=\sum_{j\in E}c_j\Lambda_j(\lambda)$.
Using \eqref{exponents_set} and equations \eqref{eq:normalization1}--\eqref{eq:normalization} we get
\begin{align}
\big(\pi_{\mc H}(\lambda \partial_{\lambda}\Lambda_{m_a}(\lambda))|\Lambda_{m_b+rhk}(\lambda)\big)
&=\sum_{i=1,\dots,n,\,l\in\mb Z}\delta_{i+b,n+1}c_{i,l}h\lambda^{(l+k+1)N_m}\nonumber
\\
&=\sum_{l\in\mb Z}c_{n+1-b,l}h\lambda^{(l+k+1)N_m}.\label{eq:lemma1}
\end{align}
Hence, since the decomposition \eqref{dec:Lambda} is orthogonal with respect to $(\cdot\,|\,\cdot)$, from equations \eqref{eq:normalization1}--\eqref{eq:normalization} and
\eqref{20210128:eq1b} we have that
\begin{align}
\big(\pi_{\mc H}(\lambda \partial_{\lambda}\Lambda_{m_a}(\lambda))|\Lambda_{m_b+rhk}(\lambda)\big)
&=(\lambda\partial_{\lambda}\Lambda_{m_a}(\lambda)|\Lambda_{m_b+rhk}(\lambda))\nonumber
\\
&=\delta_{a+b,n+1}\frac{m_aN_m}{r}\lambda^{(k+1)N_m}.\label{eq:lemma2}
\end{align}
Combining equations \eqref{eq:lemma1} and \eqref{eq:lemma2} it follows that
$c_{i,l}=\delta_{i,a}\delta_{l,0}\frac{m_aN_m}{rh}$, for $i=1,\dots,n$ and $l\in\mb Z$ thus proving equation \eqref{20200828:eq3}.
\end{proof}

Recall from Section~\ref{section1} the definition (cf.~Definition~\ref{defresolvents}) of the resolvents $R_i$, $i\in E$.
Under the standard realization of~$\widetilde{\mf g}$, we will often write $R_i=R_i(\lambda)$.
Using the normalization \eqref{eq:normalization} and the invariance of the $\mb C\big(\big(\lambda^{-1}\big)\big)$-valued
bilinear form on~$\widetilde{\mf g}$ we get
\begin{equation}\label{20171106:eq1}
(R_{m_a}(\lambda)|R_{m_b}(\lambda))=(\Lambda_{m_a}(\lambda)|\Lambda_{m_b}(\lambda))=\delta_{a+b,n+1} h \lambda^{N_m}
,
\qquad
1\leq a,b\leq n
.
\end{equation}
For every $a=1,\dots,n$, we decompose $R_a(\lambda)$ according to~\eqref{dec:sigmatwisted} as follows:
\begin{equation}\label{20210930:eq1}
R_a(\lambda)=\sum_{k\in\mb Z}R_{a;k}(\lambda)
,
\end{equation}
where $R_{a;k}(\lambda)\in\mf g_{\bar k}\otimes\lambda^k$. On the other hand, by \eqref{eq:R_a} we have that
\begin{equation}\label{eq:R_a-principal}
R_a(\lambda) =\Lambda_{m_a}(\lambda)+\text{lower order terms},
\end{equation}
where lower order terms are considered with respect to the principal gradation~\eqref{dec:principal_loop}.

\begin{Lemma}
For $a,b=1,\dots,n$, we have
\begin{equation}\label{20210128:eq1}
(\partial_\lambda R_a(\lambda)|R_b(\lambda))
=\delta_{a+b,n+1}\frac{m_aN_m}{r}\lambda^{N_m-1}
.
\end{equation}
\end{Lemma}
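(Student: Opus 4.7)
The plan is to reduce to the previous lemma, equation~\eqref{20210128:eq1b}, by using the defining relation $R_a(\lambda) = {\rm e}^{-\ad U}(\Lambda_{m_a}(\lambda))$ together with the $\ad$-invariance of the normalized Cartan--Killing form.

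First I would differentiate
\[
\partial_\lambda R_a = \bigl(\partial_\lambda {\rm e}^{-\ad U}\bigr)(\Lambda_{m_a}) + {\rm e}^{-\ad U}(\partial_\lambda \Lambda_{m_a})
\]
and pair each summand with $R_b = {\rm e}^{-\ad U}(\Lambda_{m_b})$. Since $\ad$-invariance says that $\ad U$ is skew-adjoint with respect to $(\cdot|\cdot)$, the operator ${\rm e}^{-\ad U}$ is an isometry, so the contribution of the second summand collapses to $(\partial_\lambda \Lambda_{m_a}|\Lambda_{m_b})$. By~\eqref{20210128:eq1b} this is exactly $\delta_{a+b,n+1}\frac{m_aN_m}{r}\lambda^{N_m-1}$, the desired answer. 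Thus the lemma reduces to showing that the first summand $\bigl(\bigl(\partial_\lambda {\rm e}^{-\ad U}\bigr)(\Lambda_{m_a})\,\big|\,R_b\bigr)$ vanishes.

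Using invariance once more this quantity equals $(M\Lambda_{m_a}|\Lambda_{m_b})$, where $M := {\rm e}^{\ad U}\partial_\lambda\bigl({\rm e}^{-\ad U}\bigr)$. Applying the standard identity $\partial_\lambda {\rm e}^{A} = \int_0^1 {\rm e}^{sA}(\partial_\lambda A){\rm e}^{(1-s)A}\,ds$ with $A=-\ad U$, combined with the conjugation formula ${\rm e}^{t\ad U}\circ\ad X\circ {\rm e}^{-t\ad U}=\ad\bigl({\rm e}^{t\ad U}X\bigr)$, one rewrites
\[
M = -\ad W,\qquad W=\int_0^1 {\rm e}^{t\ad U}(\partial_\lambda U)\,dt.
\]
In particular $M$ is an inner derivation, so
\[
(M\Lambda_{m_a}|\Lambda_{m_b}) = -([W,\Lambda_{m_a}]|\Lambda_{m_b}) = (W|[\Lambda_{m_a},\Lambda_{m_b}]) = 0,
\]
because $\mc H$ is abelian.

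The main obstacle is exactly the identification $M = -\ad W$; once this is in hand the abelianness of $\mc H$ does the rest. A minor technical subtlety: when $r>1$ the operator $\partial_\lambda$ does not preserve $\widetilde{\mf g}\cong L(\mf g,\sigma_m)$, so $\partial_\lambda U$ and $W$ only live in the ambient space $L(\mf g)$. As noted in the Remark in Section~\ref{sec:realizationKM}, the $\mb C\big(\big(\lambda^{-1}\big)\big)$-valued pairing still makes sense on such expressions, and the vanishing $(W|0)=0$ is unaffected; equivalently, one may replace $\partial_\lambda$ by $\lambda\partial_\lambda$ throughout (which does preserve $\widetilde{\mf g}$) and divide by $\lambda$ at the end.
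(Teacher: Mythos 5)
Your proof is correct, but it follows a genuinely different route from the paper's. The paper's argument never differentiates the dressing operator: it first checks, using \eqref{LRcomm}, the invariance of $(\cdot|\cdot)$ and $[R_{m_a}(\lambda),R_{m_b}(\lambda)]=0$, that $\partial_x(\partial_\lambda R_{m_a}(\lambda)|R_{m_b}(\lambda))=0$; since a differential polynomial in $q$ with vanishing total $x$-derivative is a constant, the pairing equals its value at $q=0$, namely $(\partial_\lambda\Lambda_{m_a}(\lambda)|\Lambda_{m_b}(\lambda))$, and Lemma~\ref{20210121:lem1b} finishes the job. You instead work directly with $R_a={\rm e}^{-\ad U}(\Lambda_{m_a})$: the isometry of ${\rm e}^{-\ad U}$ (skew-adjointness of $\ad U$) reduces one term to \eqref{20210128:eq1b}, and the derivative-of-the-exponential (Duhamel) identity exhibits ${\rm e}^{\ad U}\partial_\lambda\bigl({\rm e}^{-\ad U}\bigr)$ as $-\ad W$, so the remaining term dies against $[\Lambda_{m_a},\Lambda_{m_b}]=0$. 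Both arguments are sound and both ultimately rest on Lemma~\ref{20210121:lem1b} and the abelianness of $\mc H$; the paper's is shorter and avoids any manipulation of ${\rm e}^{\ad U}$, while yours is more self-contained in that it does not invoke the ``$\partial_x$-closed differential polynomials are constants'' fact. It is worth noting that your key computation is essentially the same mechanism the paper deploys later, in the proof of Proposition~\ref{prop1} (see \eqref{20200828:eq4} and Lemma~\ref{lemma2}), where $\partial_\lambda H(\lambda)$ is computed via exactly this derivative-of-exponential identity; so your route is consistent with, and anticipates, the paper's own toolkit. Your handling of the $r>1$ subtlety ($\partial_\lambda$ not preserving $\widetilde{\mf g}$) by passing to $\lambda\partial_\lambda$ or to the ambient $L(\mf g)$ matches the Remark in Section~\ref{sec:realizationKM} and is fine.
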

\begin{proof}
It is immediate to check, using equation~\eqref{LRcomm}, the invariance of the bilinear form $(\cdot|\cdot)$
and the fact that $[R_{m_a}(\lambda),R_{m_b}(\lambda)]=0$, that $\frac{\partial}{\partial_x}(\partial_\lambda R_{m_a}(\lambda)|R_{m_b}(\lambda))=0$.
Hence,
\[
(\partial_\lambda R_{m_a}(\lambda)|R_{m_b}(\lambda))=(\partial_\lambda\Lambda_{m_a}(\lambda)|\Lambda_{m_b}(\lambda)).
\]
The claim follows from Lemma \ref{20210121:lem1b}.
\end{proof}
%

\subsection{From basic resolvents to tau-function}\label{section33}
Using the basis of $\mc H$ given by \eqref{eq:normalization1} and the fact that the standard gradation
of $\widetilde{\mf g}$ corresponds to the gradation of $L(\mf g,\sigma_m)$ in powers of $\lambda$, we write the pre-DS hierarchy \eqref{eq:preDS}
as
\begin{equation}\label{eq:preDS2}
\frac{\partial\mc L}{\partial t_{m_a+rhk}}= \big[\big(\lambda^{kN_m}R_{m_a}(\lambda)\big)_+,\mc L\big],\qquad
a=1,\dots,n, \quad k\in\mb Z_+,
\end{equation}
where the subscript $+$ stands for the polynomial part in~$\lambda$
(we are choosing $i=m_a+rhk\in E\cap \mathbb{Z}_{>0}$).
Let
$\pi_\lambda\colon \mb C\big(\big(\lambda^{-1}\big)\big)\to \lambda^{-1}\mb C\big[\big[\lambda^{-N_m}\big]\big]$ be the linear map defined via
\[
\lambda^k\mapsto
\begin{cases}
\lambda^k & \text{if}\  k\equiv -1 \, (\bmod N_m),\ k<0,
\\
0 & \text{otherwise},
\end{cases}
\]
where $k\in\mb Z$.
Let $\epsilon$ be a primitive $N_m$-root of unity. Recall that, for $h\in\mb Z$, we have
\[
\sum_{k=0}^{N_m-1}\epsilon^{kh}=
\begin{cases}
N_{m} & \text{if}\  h\equiv0 \, (\bmod N_m),
\\
0 &\text{otherwise}.
\end{cases}
\]
Note that $\pi_\lambda$ can be equivalently defined as follows:
\begin{equation}\label{eq:pi}
\pi_\lambda\big( f(\lambda) \big)=\frac{1}{N_m}\sum_{k=0}^{N_m-1}\epsilon^kf(\epsilon^k\lambda)_-,
\qquad
f(\lambda)\in\mb C\big(\big(\lambda^{-1}\big)\big),
\end{equation}
where $f(\lambda)_-\in\lambda^{-1}\mb C\big[\big[\lambda^{-1}\big]\big]$ denotes the singular part of $f(\lambda)$.
Similarly, we will denote
\[
\pi_{\lambda,\mu}=\pi_\lambda\circ\pi_\mu\colon\ \mb C\big(\big(\lambda^{-1},\mu^{-1}\big)\big)\to\lambda^{-1}\mu^{-1}\mb C\big[\big[\lambda^{-N_{m}},\mu^{-N_m}\big]\big]
\]
and
\[
\pi_{\lambda,\mu,\eta}=\pi_\lambda\circ\pi_\mu\circ\pi_\eta\colon\ \mb C\big(\big(\lambda^{-1},\mu^{-1},\eta^{-1}\big)\big)\to\lambda^{-1}\mu^{-1}\eta^{-1}\mb C\big[\big[\lambda^{-N_{m}},\mu^{-N_m},\eta^{-N_m}\big]\big].
\]
Clearly, the maps $\pi_\lambda$, $\pi_\mu$ and $\pi_\eta$ commute.
We extend $\pi_\lambda$ to a map $\mc A^q\otimes\mb C\big(\big(\lambda^{-1}\big)\big)\to\mc A^q\otimes\lambda^{-1}\mb C\big(\big(\lambda^{-N_m}\big)\big)$
in the obvious way.
By using equations \eqref{20171106:eq1}, \eqref{20210128:eq1} and \eqref{20171106:eq2} we find
\[
(\lambda-\mu)^{-2}\bigg(\big(R_{m_a}(\lambda)|R_{m_b}(\mu)\big) -\frac{\delta_{a+b,n+1}}{r}\big(m_a\lambda^{N_m}+m_b\mu^{N_m}\big)\bigg)
\in\mc A^q\big(\big(\lambda^{-1},\mu^{-1}\big)\big).
\]
Hence, the LHS of \eqref{eq:two-point-series} is well defined; in other words, $\Omega_{a,k;b,l}\in\mc A^q$ are well defined (see~\eqref{eq:Fab}).
Recall also from Section~\ref{section1} that $\Omega_{a,k;b,l}$ are gauge invariant, hence they actually belong to $\mc R\subset\mc A^q$.

For a Laurent series $a(\lambda)=\sum_{i\leq M} a_i\lambda^i$, we denote
$\Res_{\lambda}a(\lambda)=a_{-1}$ (which is equal to $-\Res_{\lambda=\infty}a(\lambda)$).
The following result extends Proposition~2.3.2 in \cite{BDY21} to our current more general setting.
\begin{Lemma}
For $a,b=1,\dots,n$, we have
\begin{equation}\label{20171113:eq1}
\Res_\mu F_{a,b}(\lambda,\mu)=\pi_{\lambda}(R_{m_a}(\lambda)|\partial_\lambda R_{m_b}(\lambda)_+)
.
\end{equation}
In particular, for every $a=1,\dots,n$, we have
\begin{equation}\label{20171113:eq2}
\Res_\mu F_{a,1}(\lambda,\mu)=\pi_{\lambda}(R_{m_a}(\lambda)|\tilde e_m).
\end{equation}
\end{Lemma}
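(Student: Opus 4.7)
My approach is to push the projections $\pi_\lambda,\pi_\mu$ outside the $\mu$-residue and reduce the claim to a routine residue computation in a single variable. Since $\pi_\lambda$ acts only on $\lambda$, it commutes with $\Res_\mu$; and by inspection of~\eqref{eq:pi}, $\pi_\mu f$ has the same coefficient of $\mu^{-1}$ as $f$, so $\Res_\mu\circ\pi_\mu=\Res_\mu$. Consequently,
\[
\Res_\mu F_{a,b}(\lambda,\mu)=\pi_\lambda\,\Res_\mu\frac{A(\lambda,\mu)}{(\lambda-\mu)^2},
\]
where $A(\lambda,\mu)=(R_{m_a}(\lambda)|R_{m_b}(\mu))-\frac{\delta_{a+b,n+1}}{r}(m_a\lambda^{N_m}+m_b\mu^{N_m})$ is the numerator in~\eqref{eq:two-point-series}.

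To evaluate the inner residue I expand $(\lambda-\mu)^{-2}=\sum_{k\geq 0}(k+1)\lambda^k\mu^{-k-2}$ at $\mu=\infty$, from which the elementary identity $\Res_\mu[P(\mu)/(\lambda-\mu)^2]=P'(\lambda)$ for every polynomial $P(\mu)$ is immediate. I then split $R_{m_b}(\mu)=R_{m_b}(\mu)_++R_{m_b}(\mu)_-$ according to the $\mu$-gradation. The polynomial part $(R_{m_a}(\lambda)|R_{m_b}(\mu)_+)$ contributes $(R_{m_a}(\lambda)|\partial_\lambda R_{m_b}(\lambda)_+)$; the negative part, which starts at $\mu^{-1}$, yields after multiplication by $(\lambda-\mu)^{-2}$ a series beginning at $\mu^{-3}$ and hence contributes $0$; the subtraction term gives $-\frac{\delta_{a+b,n+1}m_bN_m}{r}\lambda^{N_m-1}$, which is annihilated by $\pi_\lambda$ because the exponent $N_m-1$ is non-negative. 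Combining these produces~\eqref{20171113:eq1}.

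Identity~\eqref{20171113:eq2} is the $b=1$ case of~\eqref{20171113:eq1}, so it only remains to show that $\partial_\lambda R_1(\lambda)_+=\tilde e_m$. Writing $R_1={\rm e}^{-\ad U}(e+\tilde e_m\lambda)$ with $U$ of strictly negative principal degree, the bound $|\delta|\leq r(h-1)/N_m<rh/N_m$ on $\ad\rho^\vee$-eigenvalues in $\mf g$ forces the $\lambda$-degree of every component of $U$ to be $\leq 0$; in particular the $\lambda^0$-component $U_0$ lies in $\mf a^{<0}=\mf n$. Hence $R_1(\lambda)_+$ is a polynomial of degree at most $1$ in $\lambda$, and extracting its $\lambda^1$-coefficient from $\sum_{n\geq 0}\frac{(-1)^n}{n!}(\ad U)^n(\tilde e_m\lambda)$ retains only the contribution of $U_0$, giving $\partial_\lambda R_1(\lambda)_+={\rm e}^{-\ad U_0}\tilde e_m$. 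Finally, $[\mf n,e_m]=0$ from~\eqref{20171109:eq1}, read in $\mf g$ after stripping the overall factor of $\lambda$, gives $[\mf n,\tilde e_m]=0$, so $[U_0,\tilde e_m]=0$ and the exponential collapses to $\tilde e_m$. The main delicate point is this combined principal/$\lambda$-degree analysis and the key use of $[\mf n,e_m]=0$; the rest is formal manipulation of Laurent series.
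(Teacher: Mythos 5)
Your proof of \eqref{20171113:eq1} is correct and is essentially the paper's argument: the paper likewise takes $\Res_\mu$ of \eqref{eq:two-point-series} and invokes the identity $\Res_\mu a(\mu)\,\iota_\mu(\mu-\lambda)^{-2}=\partial_\lambda a(\lambda)_+$, which is exactly your splitting of $R_{m_b}(\mu)$ into polynomial and singular parts under the $|\mu|>|\lambda|$ expansion; you additionally spell out why the subtraction term $-\tfrac{\delta_{a+b,n+1}}{r}\big(m_a\lambda^{N_m}+m_b\mu^{N_m}\big)$ is killed by $\pi_\lambda$, which the paper leaves implicit. Where you genuinely diverge is \eqref{20171113:eq2}: the paper simply asserts $R_1(\lambda)_+=\Lambda(\lambda)$, whereas you extract the $\lambda^1$-coefficient of ${\rm e}^{-\ad U}(e+\tilde e_m\lambda)$ via the bound $\deg^{\rm pr}U<0\Rightarrow$ ($\lambda$-degree of $U$) $\leq 0$, identify the $\lambda^0$-component $U_0$ as lying in $\mf n$, and use $[\mf n,e_m]=0$ to collapse ${\rm e}^{-\ad U_0}\tilde e_m$ to $\tilde e_m$. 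Your route is more careful and in fact more accurate: $R_1(\lambda)_+$ can differ from $\Lambda(\lambda)$ by a $\lambda^0$-term with values in $\mf b$ (for instance, in the $A_2^{(2)}$, $c_0$ example the explicit series for $b_1$ and $p$ give $R_1(\lambda)_+=\Lambda(\lambda)+\tfrac{2u}{3}f_1$), but any such term is annihilated by $\partial_\lambda$, so only the $\lambda^1$-coefficient matters --- which is precisely what your degree analysis isolates. Both approaches yield the same conclusion; yours costs a short computation with the two gradations but avoids relying on an identity for $R_1(\lambda)_+$ that holds only up to a $\lambda$-independent correction.
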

\begin{proof}
Equation \eqref{20171113:eq1} follows by taking the residue in $\mu$ in both sides of equation \eqref{eq:two-point-series}
and using the identity (which holds for an arbitrary Laurent series $a(\mu)$)
\[
\Res_\mu a(z)\iota_\mu(\mu-\lambda)^{-2}=\partial_\lambda a(\lambda)_+,
\]
where $\iota_\mu$ denotes the expansion in the region $|\mu|>|\lambda|$.
Equation \eqref{20171113:eq2} is obtained from \eqref{20171113:eq1} by recalling that
$R_1(\lambda)_+=\Lambda(\lambda)=e+\tilde e_m\lambda$.
\end{proof}

For simplicity of notation, let us denote $G_a(\lambda)=\pi_\lambda (R_{m_a}(\lambda)|\tilde e_m)$, $a=1,\dots,n$.
Using \eqref{eq:Fab} and \eqref{20171113:eq2}, we have
\begin{equation}\label{eq:G}
G_a(\lambda)=\sum_{k\in\mb Z_{\geq0}}\Omega_{a,k;1,0}\lambda^{-N_mk-1}\in\mc \R\big[\big[\lambda^{-N_m}\big]\big]\lambda^{-1}.
\end{equation}
It follows from the relations~\eqref{tau09082} with $c=1$, $m=0$ and the identification $x=-t_1$
 that $\Omega_{a,k;1,0}$ are densities of conservations laws for the DS hierarchy \eqref{DS:hier}.
The relationship between the series $G_a(\lambda)$ and the Hamiltonian structure of the DS hierarchies will be studied in Section~\ref{sec:Ham}.

Let us introduce the
{\it loop operator} for
the pre-DS hierarchy~\eqref{eq:preDS2} as follows:
\begin{equation*}
\nabla_a(\lambda)=\sum_{k\in\mb Z_{\geq0}}\lambda^{-N_m k-1}\frac{\partial}{\partial t_{m_a+rhk}}, \qquad a=1,\dots,n.
\end{equation*}
\begin{Lemma}\label{20171104:lem1}
For every $a=1,\dots,n$, we have
\begin{equation}\label{20171104:eq1}
\nabla_a(\lambda)R_{m_b}(\mu)=\pi_\lambda\frac{[R_{m_a}(\lambda),R_{m_b}(\mu)]}{\lambda-\mu}.
\end{equation}
\end{Lemma}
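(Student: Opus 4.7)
My plan is to start from the commutation relation~\eqref{20171104:eq2} for the resolvents. With $j=m_b$ and $i=m_a+rhk$, and using~\eqref{eq:normalization1}, which in the standard realization gives $R_{m_a+rhk}(\mu)=\mu^{kN_m}R_{m_a}(\mu)$, one gets
$$\frac{\partial R_{m_b}(\mu)}{\partial t_{m_a+rhk}}=\bigl[\bigl(\mu^{kN_m}R_{m_a}(\mu)\bigr)_+,R_{m_b}(\mu)\bigr],$$
where $(\,\cdot\,)_+$ denotes the polynomial part in $\mu$ (i.e., the projection to the standard non-negative degree part). Multiplying by $\lambda^{-N_mk-1}$ and summing over $k\geq 0$ yields
$$\nabla_a(\lambda)R_{m_b}(\mu)=\bigl[A(\lambda,\mu),R_{m_b}(\mu)\bigr],\qquad A(\lambda,\mu):=\sum_{k\geq 0}\lambda^{-N_mk-1}\bigl(\mu^{kN_m}R_{m_a}(\mu)\bigr)_+.$$

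The core step is the formal-series identity
$$A(\lambda,\mu)=\pi_\lambda\frac{R_{m_a}(\lambda)}{\lambda-\mu},$$
where on the right $(\lambda-\mu)^{-1}$ is expanded in the region $|\lambda|>|\mu|$ as $\sum_{n\geq 0}\mu^n\lambda^{-n-1}$. I would verify this by writing $R_{m_a}(\lambda)=\sum_j r_j\lambda^j$, which has only finitely many positive powers due to its bounded principal degree, and matching coefficients on the two sides. On the right, the term $r_j\lambda^j$ produces $r_j\sum_{n\geq 0}\mu^n\lambda^{j-n-1}$; $\pi_\lambda$ keeps exactly those monomials with $j-n-1=-N_mk-1$, i.e., with $n=j+N_mk\geq 0$. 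On the left, the same $r_j\mu^j$ contributes to $(\mu^{kN_m}R_{m_a}(\mu))_+$ precisely when $j+kN_m\geq 0$. These constraints coincide, so both sides equal the double sum $\sum_j r_j\sum_{k\geq 0,\,j+kN_m\geq 0}\mu^{j+kN_m}\lambda^{-N_mk-1}$.

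To conclude, since $R_{m_b}(\mu)$ is independent of $\lambda$, the operator $\pi_\lambda$ commutes with $\mathrm{ad}\,R_{m_b}(\mu)$, and therefore
$$\nabla_a(\lambda)R_{m_b}(\mu)=\Bigl[\pi_\lambda\frac{R_{m_a}(\lambda)}{\lambda-\mu},R_{m_b}(\mu)\Bigr]=\pi_\lambda\frac{[R_{m_a}(\lambda),R_{m_b}(\mu)]}{\lambda-\mu},$$
which is the claim. The main obstacle is the formal-series identity $A(\lambda,\mu)=\pi_\lambda\frac{R_{m_a}(\lambda)}{\lambda-\mu}$, since it requires careful reconciliation of the $\mu$-polynomial projection on one side with the $|\lambda|>|\mu|$ expansion of the propagator followed by the projection $\pi_\lambda$ on the other; once this identity is established, the rest of the proof is immediate.
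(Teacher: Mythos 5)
Your proof is correct, and its skeleton coincides with the paper's: both start from \eqref{20171104:eq2} together with $R_{m_a+rhk}(\mu)=\mu^{kN_m}R_{m_a}(\mu)$, and both reduce the lemma to the generating-series identity $\sum_{k\geq0}\lambda^{-N_mk-1}\big(\mu^{kN_m}R_{m_a}(\mu)\big)_+=\pi_\lambda\big(R_{m_a}(\lambda)/(\lambda-\mu)\big)$, the commutator with $R_{m_b}(\mu)$ being harmless since $\pi_\lambda$ is $\mb C\big(\big(\mu^{-1}\big)\big)$-linear and acts only on $\lambda$. The genuine difference lies in how that identity is verified. The paper sums the geometric series into $\lambda^{N_m-1}\iota_\rho\big(\lambda^{N_m}-\rho^{N_m}\big)^{-1}$ (equation \eqref{20171104:eq4}), evaluates a contour integral over $|\mu|<|\rho|<|\lambda|$ by the residue theorem, and identifies the sum over the poles $\rho=\epsilon^k\lambda$ minus the correction term $[R_{m_a;N_m}(\lambda),R_{m_b}(\mu)]/\lambda$ with the roots-of-unity form \eqref{eq:pi} of $\pi_\lambda$. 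You instead match coefficients directly against the monomial definition of $\pi_\lambda$ (keep $\lambda^k$ with $k<0$ and $k\equiv-1$ mod $N_m$): the exponent bookkeeping is right, since on the right-hand side the surviving terms are those with $n=j+N_mk$, $k\geq0$, $n\geq0$, which is exactly the condition $j+kN_m\geq0$ for survival under $(\cdot)_+$ on the left. Your route is more elementary --- no residue calculus, no roots-of-unity identity --- and makes transparent where the congruence condition in $\pi_\lambda$ enters; the paper's route stays within the residue formalism it uses elsewhere and records the intermediate formula \eqref{20171104:eq5} explicitly. Either argument is complete.
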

\begin{proof}
We have
\begin{align}
\nabla_a(\lambda)R_{m_b}(\mu)
&=\sum_{k\in\mb Z_{\geq0}}\frac{\partial R_{m_b}(\mu)}{\partial t_{m_a+rhk}}\lambda^{-N_mk-1}
=\sum_{k\in\mb Z_{\geq0}} \big[\big(\mu^{N_mk}R_{m_a}(\mu)\big)_+,R_{m_b}(\mu)\big]\lambda^{-N_mk-1}\nonumber
\\
&=\sum_{k\in\mb Z_{\geq0}}
\Res_\rho \big[\rho^{N_mk}R_{m_a}(\rho),R_{m_b}(\mu)\big]\lambda^{-N_mk-1}\iota_\rho(\rho-\mu)^{-1}.
\label{20171104:eq3}
\end{align}
In the second identity we used equation \eqref{20171104:eq2} and in the third identity we used the fact that
\[
a(w)_+=\Res_z a(z)\iota_z(z-w)^{-1},
\]
which holds for any formal series $a(z)$ (here $\Res_z$ is the coefficient of $z^{-1}$ and $\iota_z$ is the expansion in the region
$|z|>|w|$).
Note that
\begin{equation}\label{20171104:eq4}
\sum_{k\in\mb Z_\geq0}
\rho^{N_mk}\lambda^{-N_mk-1}=\lambda^{N_m-1}\iota_\rho\big(\lambda^{N_m}-\rho^{N_m}\big)^{-1}.
\end{equation}
Using equation \eqref{20171104:eq4}, we can rewrite \eqref{20171104:eq3} as
{\samepage\begin{align}
\nabla_a(\lambda)R_{m_b}(\mu)
&=\frac{\lambda^{N_m-1}}{2\pi {\rm i}}\oint_{|\mu|<|\rho|<|\lambda|}
\frac{[R_{m_a}(\rho),R_{m_b}(\mu)]}{(\lambda^{N_m}-\rho^{N_m})(\rho-\mu)}\nonumber
\\
&=\lambda^{N_m-1}
\Bigg(\sum_{k=0}^{N_m-1}\frac{[R_{m_a}(\epsilon^k\lambda),R_{m_b}(\mu)]} {N_m(\epsilon^k\lambda)^{N_m-1}(\epsilon^k\lambda-\mu)}
-\frac{[R_{m_a;N_m}(\lambda),R_{m_b}(\mu)]}{\lambda^{N_m}}\Bigg)\nonumber
\\
&=\frac{1}{N_m}\sum_{k=0}^{N_m-1}\epsilon^k\frac{[R_{m_a}(\epsilon^k\lambda),R_{m_b}(\mu)]} {(\epsilon^k\lambda-\mu)}
-\frac{[R_{m_a;N_m}(\lambda),R_{m_b}(\mu)]}{\lambda},
\label{20171104:eq5}
\end{align}
where the second identity follows by the residue theorem.}
Equation \eqref{20171104:eq1} is obtained by combining equations \eqref{20171104:eq5} and \eqref{eq:pi} and the fact that (cf.~\eqref{20210930:eq1})
\[
\frac{1}{N}\sum_{k=0}^{N_m-1}\epsilon^k \bigg(\frac{[R_{m_a}(\epsilon^k\lambda),R_{m_b}(\mu)]}{(\epsilon^k\lambda-\mu)}\bigg)_+
=\frac{[R_{m_a;N_m}(\lambda),R_{m_b}(\mu)]}{\lambda},
\]
where $(\cdot)_+$ denotes the non-negative part in powers of $\lambda$. This concludes the proof.
\end{proof}

Recall the definitions of the differential polynomials $\Omega_{c_1,k_1;\dots;c_N,k_N}$ from \eqref{eq:Fab} and~\eqref{defOmegaN}.
We~have the following proposition.

\begin{Proposition}\label{prop:N-point}
For each $N\geq2$, let $c_1,\dots,c_N$ be arbitrarily given integers in~$\{1,\dots,n\}$.
We have
\begin{gather}
\label{eq:N-point-series2-Om}
\sum_{k_1,\dots,k_N\in\mb Z_{\geq0}} \prod_{j=1}^N
\lambda_j^{-N_m k_j-1} \Omega_{c_1,k_1;\dots;c_N,k_N}
=-\frac{\pi_{\lambda_1,\dots,\lambda_N}}{2h_{\mf g}^\vee}
\\ \qquad
{}\times\Bigg(\sum_{s\in S_N/C_N}\!\!\!\!
\frac{B\bigl(R_{c_{s_1}}(\lambda_{s_1};{\bf t}),\dots,R_{c_{s_N}}(\lambda_{s_N};{\bf t})\bigr)}
{\prod_{j=1}^N(\lambda_{s_j}-\lambda_{s_{j+1}})} -\frac{\delta_{N,2}\delta_{c_1+c_2,n+1}}{2r}\frac{m_{c_1}\lambda_1^{N_m}+m_{c_2}\lambda_2^{N_m}} {(\lambda_1-\lambda_2)^2}\Bigg),\nn
\end{gather}
where $S_N$ denotes the symmetric group and $C_N$ the cyclic group, and $s_{N+1}=s_1$.
\end{Proposition}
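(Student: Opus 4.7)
The strategy is induction on $N$. The base case $N=2$ reduces to the definition~\eqref{eq:two-point-series} of $F_{c_1,c_2}$: the unique representative in $S_2/C_2$ gives $\prod_{j=1}^2(\lambda_{s_j}-\lambda_{s_{j+1}})=-(\lambda_1-\lambda_2)^2$, the identification $B(x,y)=2h_{\mf g}^\vee (x|y)$ coming from~\eqref{ckdef1} converts the $B$-term into the bilinear-form piece of~\eqref{eq:two-point-series}, and the $\delta_{N,2}$ subtraction term matches the regularizing correction built into the definition of $F_{c_1,c_2}$.

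For the inductive step, assume the formula holds for $N-1$ variables and apply the loop operator $\nabla_{c_N}(\lambda_N)=\sum_{k\geq0}\lambda_N^{-N_m k-1}\partial_{t_{m_{c_N}+rhk}}$ to both sides. By~\eqref{defOmegaN} the left-hand side becomes the $N$-variable generating series in~\eqref{eq:N-point-series2-Om}. On the right-hand side the $\delta_{N-1,2}$-subtraction piece is independent of $\bt$ and is therefore annihilated by $\nabla_{c_N}(\lambda_N)$ as soon as $N\geq3$; this accounts for the absence of such a correction term in~\eqref{eq:N-point-series2-Om} for $N\geq3$. Applying the Leibniz rule to the cyclic-sum piece and using Lemma~\ref{20171104:lem1} to evaluate $\nabla_{c_N}(\lambda_N)R_{c_{s_j}}(\lambda_{s_j})=\pi_{\lambda_N}\frac{[R_{c_N}(\lambda_N),R_{c_{s_j}}(\lambda_{s_j})]}{\lambda_N-\lambda_{s_j}}$ produces a double sum over $s\in S_{N-1}/C_{N-1}$ and positions $j\in\{1,\dots,N-1\}$.

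The combinatorial heart of the proof is identifying this double sum with the cyclic sum over $S_N/C_N$ on the RHS of~\eqref{eq:N-point-series2-Om}. I use the identity $\ad[R_{c_N},R_{c_{s_j}}]=[\ad R_{c_N},\ad R_{c_{s_j}}]$ inside the trace to rewrite $B(\dots,[R_{c_N},R_{c_{s_j}}],\dots)$ as the difference of two traces in which $R_{c_N}$ has been inserted immediately before and immediately after $R_{c_{s_j}}$ in the cyclic word $(c_{s_1},\dots,c_{s_{N-1}})$. A partial-fraction manipulation, or equivalently a telescoping across the $N-1$ insertion positions, combines each $\frac{1}{\lambda_N-\lambda_{s_j}}$ with the original denominator $\prod_{k=1}^{N-1}(\lambda_{s_k}-\lambda_{s_{k+1}})$ to produce $\prod_{k=1}^{N}(\lambda_{t_k}-\lambda_{t_{k+1}})$-type denominators. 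Since each $t\in S_N/C_N$ corresponds to a unique $(s,p)$ with $s\in S_{N-1}/C_{N-1}$ and insertion position $p$ (obtained by deleting $c_N$ from $t$), and since the count $(N-2)!(N-1)=(N-1)!$ matches $|S_N/C_N|$, the claimed identity~\eqref{eq:N-point-series2-Om} follows.

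The main obstacle is this combinatorial/partial-fraction identity. Although the telescoping is conceptually straightforward, care is needed to track the signs coming from the antisymmetry of the bracket and from interchanging $\lambda_N-\lambda_{s_j}$ with $\lambda_{s_j}-\lambda_N$, the exact correspondence between insertion positions and cyclic-permutation representatives in $S_N/C_N$, and the compatibility of the partial projections $\pi_{\lambda_i}$ assembling into $\pi_{\lambda_1,\dots,\lambda_N}$, which is ensured by the commutativity and linearity of the individual projections noted after~\eqref{eq:pi}. An efficient way is to prove the identity first at the level of rational functions in $\mb C\big(\big(\lambda_1^{-1},\dots,\lambda_N^{-1}\big)\big)$ and only apply $\pi_{\lambda_1,\dots,\lambda_N}$ at the very end.
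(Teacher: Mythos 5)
Your proposal takes essentially the same route as the paper: the paper's proof of Proposition~\ref{prop:N-point} consists entirely of invoking Lemma~\ref{20171104:lem1} and asserting that the argument then follows the same lines as the KdV case of~\cite{BDY16}, which is precisely the induction on $N$ via the loop operator together with the cyclic insertion/telescoping identity that you outline. You in fact supply more detail (the $N=2$ base case via the definition of $F_{a,b}$ and the combinatorial step matching insertion positions with representatives of $S_N/C_N$) than the paper itself records.
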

\begin{proof}
With the above Lemma~\ref{20171104:lem1}, the proof could
now follow the same lines as the proof of the KdV case~\cite{BDY16} (cf.~also~\cite{BDY21,DYZ21}); so we omit the details here.
\end{proof}

The properties \eqref{tau09081}, \eqref{tau09082} follow from the symmetric nature
of identities with $N=2$ and $N=3$ of Proposition~\ref{prop:N-point}, respectively.

\begin{proof}[Proof of Theorem~\ref{thm:N-point}]
The theorem follows from~\eqref{eq:N-point-series2-Om}, equation~\eqref{RcanNR}
and the invariance of the multilinear form~$B$.
\end{proof}

\begin{Remark}
DS hierarchies associated to untwisted affine Kac--Moody algebras and their tau functions
were intensively studied via various methods \cite{BFORFW, BDY21, CW19, DLZ,
FGM, HM, KW, Wu17}.
In particular, using the matrix-resolvent method, gauge invariant differential polynomials $\Omega_{a,l;b,k}$
satisfying the tau-structure properties \eqref{tau09081} and \eqref{tau09082} (hence leading to
the construction of tau-function following the scheme of~\cite{DZ-norm})
were defined in~\cite{BDY21} (cf.~also references therein for earlier results).
In the twisted cases, these were also constructed in \cite{LWZ19,LWZZ20} with a different method.
It would be interesting to extend the matrix-resolvent method to the {\it generalized
Drinfeld--Sokolov hierarchies} \cite{dGHM, DSJKV21, DSKV13, Dinar2, Dinar3, HM}.
\end{Remark}

\section{Examples}\label{sec:exa}
In this section, we apply the matrix-resolvent construction to the
Drinfeld--Sokolov hierarchies associated to the affine Kac--Moody algebra $A_2^{(2)}$.

The Dynkin diagram of the affine Kac--Moody algebra $A_2^{(2)}$ is
\begin{equation}\label{diagram}
\begin{tikzpicture}[start chain]
\dnodeanj{}{0}
\dnodeanj{}{1}
\path (chain-1) -- node {\QLeftarrow} (chain-2);
\end{tikzpicture}
\end{equation}
and its Cartan matrix is
\begin{equation}\label{eq:Cartan}
C=\begin{pmatrix}
\hphantom{-}2&-4\\-1&\hphantom{-}2
\end{pmatrix}
.
\end{equation}
As discussed in Section~\ref{section1} we have a set of Chevalley generators $e_0$, $e_1$, $h_0$, $h_1$, $f_0$, $f_1$ satisfying
the relations \eqref{rel:Lie} and \eqref{rel:Serre}. It follows immediately from the definition \eqref{eq:Cartan} of $C$ that
\[
a_0=a_1^\vee=2,\qquad a_1=a_0^\vee=1,
\]
hence the Coxeter number and dual Coxeter number of $A_2^{(2)}$ are $h=h^\vee=3$. The set of exponents $E$ has the form (cf.~\eqref{exponents_set})
\[
E=\left(1+6\mb Z\right)\cup\left(5+6\mb Z\right),
\]
that is we have $m_1=1$ and $m_2=5$ in \eqref{exponents_set}.

Let $K=h_0+2h_1$ be the central element. Recall that we are interested in the quotient of $A_2^{(2)}$ by the one-dimensional space generated by $K$, which we denote by $\widetilde{\mf g}$. We describe this quotient as a subspace of $L(\mf{sl}_3)$ following the discussion in Section~\ref{sec:realizationKM}. The normalized invariant bilinear form~\eqref{ckdef1} on $\mf{sl}_3$ is the trace form which we extend to $L(\mf{sl}_3)$ in the natural way.

\subsection{The Sawada--Kotera hierarchy}\label{sec:SKhierarchy}
The Drinfeld--Sokolov hierarchy associated to $A_2^{(2)}$ and the choice of the vertex~$c_0$ of its Dynkin diagram~\eqref{diagram} is known to be
 the Sawada--Kotera hierarchy \cite{SK74}. Following Section~\ref{sec:matrix_res} we compute the basic matrix resolvents for this hierarchy.

\subsubsection[Principal and standard gradations for A\_2\textasciicircum{}\{(2)\} and the c\_0 vertex]{Principal and standard gradations for $\boldsymbol{A_2^{(2)}}$
and the $\boldsymbol{c_0}$ vertex}

In this case there exists an automorphism $\sigma_0$ of $\mf{sl}_3$ of order $N_0=4$
such that $\widetilde{\mf g}=L(\mf{sl}_3,\sigma_0)\subset L(\mf{sl}_3)$ (cf.~Section~\ref{sec:realizationKM}). Explicitly,
\[
\widetilde{\mf g}=\left\{
\left.
\begin{pmatrix}
a_1(\lambda)+a_2(\lambda)&b_1(\lambda)+b_2(\lambda)&p(\lambda)\\
c_1(\lambda)+c_2(\lambda)&-2a_2(\lambda)&b_1(\lambda)-b_2(\lambda)\\
r(\lambda)&c_1(\lambda)-c_2(\lambda)&a_2(\lambda)-a_1(\lambda)
\end{pmatrix}
\right|
\begin{array}{l}
a_1(\lambda),p(\lambda),r(\lambda)\in\mb C((\lambda^{-4}))\\
b_2(\lambda),c_1(\lambda)\in\mb C((\lambda^{-4}))\lambda \\
a_2(\lambda)\in\mb C((\lambda^{-4}))\lambda^2\\
b_1(\lambda),c_2(\lambda)\in\mb C((\lambda^{-4}))\lambda^{3}
\end{array}
\right\}\!.
\]
Let us consider the following Chevalley generators for $\widetilde{\mf g}$ ($E_{ij}$ denotes
the elementary matrix):
\begin{gather}
e_0(\lambda)=(E_{21}+E_{32})\lambda,\qquad
h_0=-2(E_{11}-E_{33}),\qquad
f_0(\lambda)=2(E_{12}+E_{23})\lambda^{-1},\nn
\\
e_1=E_{13},\qquad
h_1=E_{11}-E_{33},\qquad
f_1=E_{31}.
\label{gen_c0}
\end{gather}
The {\it principal gradation} is defined by the linear map \eqref{deg:principal3}, where $\rho^\vee=h_1/2$. Explicitly, we have
\begin{align*}
&\widetilde{\mf g}^{6k}=\mb C h_1\lambda^{4k},\qquad
\widetilde{\mf g}^{6k+1}=\mb C e_0(\lambda)\lambda^{4k}\oplus\mb C e_1\lambda^{4k},\qquad
\widetilde{\mf g}^{6k+2}=\mb C (E_{12}-E_{23})\lambda^{4k+1},
\\
&\widetilde{\mf g}^{6k+3}=\mb C (E_{11}-2E_{22}+E_{33})\lambda^{4k+2},\qquad
\widetilde{\mf g}^{6k+4}=\mb C (E_{21}-E_{32})\lambda^{4k+3},
\\
&\widetilde{\mf g}^{6k+5}=\mb C f_0(\lambda)\lambda^{4k+4}\oplus\mb C f_1\lambda^{4k+4},
\end{align*}
where $k\in\mb Z$.
The {\it standard gradation} corresponding to~$c_0$ is the gradation in powers of $\lambda$, i.e.,
\begin{align*}
&\widetilde{\mf g}_{4k}=\mb C f_1\lambda^{4k}\oplus\mb C h_1\lambda^{4k}\oplus\mb C e_1\lambda^{4k},
\qquad
\widetilde{\mf g}_{4k+1}=\mb C e_0(\lambda)\lambda^{4k}\oplus\mb C(E_{12}-E_{23})\lambda^{4k+1},
\\
&\widetilde{\mf g}_{4k+2}=\mb C (E_{11}-2E_{22}+E_{33})\lambda^{4k+2},
\qquad
\widetilde{\mf g}_{4k+3}=\mb C f_0(\lambda)\lambda^{4k+4}\oplus\mb C(E_{21}-E_{32})\lambda^{4k+3},
\end{align*}
where $k\in\mb Z$.
Note that $\mf a=\widetilde{\mf g}_0\cong\mf{sl}_2$,
moreover, $\mf n = \mb Cf_1 \subset \mf b= \mb C f_1 \oplus \mb C h_1$.

Recall that the element $\Lambda(\lambda)=e_0(\lambda)+e_1\in\widetilde{\mf g}$ is semisimple
and we have the direct sum decomposition~\eqref{dec:Lambda}.
Let, as in Section~\ref{section1}, $\mc H=\Ker \ad\Lambda(\lambda)$. It is immediate to check that
\begin{equation}\label{eq:Lambda-c0}
\mc H=\Big(\widehat\bigoplus_{k\in\mb Z}\mb C\Lambda(\lambda)\lambda^{4k}\Big)\oplus
\Big(\widehat\bigoplus_{k\in\mb Z}\mb C(f_0(\lambda)+2f_1)\lambda^{4k}\Big).
\end{equation}
We rewrite~\eqref{eq:Lambda-c0} as in~\eqref{mcHdecompbasislambda} using the following basis $\{\Lambda_i(\lambda)\mid i\in E\}$:
\begin{gather*}
\Lambda_1(\lambda)=\Lambda(\lambda),\qquad
\Lambda_5(\lambda)=\bigg(\frac12f_0(\lambda)+f_1\bigg)\lambda^4,
\\
\Lambda_{1+6k}(\lambda)=\Lambda_1(\lambda)\lambda^{4k},\qquad
\Lambda_{5+6k}(\lambda)=\Lambda_5(\lambda)\lambda^{4k}.
\end{gather*}
Here $k\in\mb Z$.
This basis satisfies the normalization conditions \eqref{eq:normalization1} and \eqref{eq:normalization}.

\subsubsection{The matrix resolvent}
Take the DS gauge $V=\mb Cf_1$ (cf.~\eqref{bVen}). The element $\mc L^{\text{can}}$ in~\eqref{Lcan} has the form
\[
\mc L^{\text{can}}=\partial+\Lambda(\lambda)+uf_1
=\begin{pmatrix}
\partial&0&1\\
\lambda&\partial&0\\
u&\lambda&\partial
\end{pmatrix}\!.
\]
We have that $\R=\mb C[u,u_x,u_{2x},\dots]$ is the algebra of differential polynomials
in~$u$.
(Here and in what follows, for a smooth function $y=y(x)$ of $x$, we denote $y_{nx}:=\p_x^n(y)$, $n\geq0$.)
Let
\begin{equation}\label{eq:R-SK}
R(\lambda)=\begin{pmatrix}
a_1(\lambda)+a_2(\lambda)&b_1(\lambda)+b_2(\lambda)&p(\lambda)\\
c_1(\lambda)+c_2(\lambda)&-2a_2(\lambda)&b_1(\lambda)-b_2(\lambda)\\
r(\lambda)&c_1(\lambda)-c_2(\lambda)&a_2(\lambda)-a_1(\lambda)
\end{pmatrix}
\in \R\otimes \widetilde{\mf g}
\end{equation}
be a resolvent of~$\mc L^{\text{can}}$. Then
$[\mc L^{\text{can}},R(\lambda)]=0$; cf.~\eqref{LRcomm}.
It follows that
\begin{align*}
&a_1(\lambda)=\frac12 p_x(\lambda),\qquad a_2(\lambda)=\frac13 (ub_1(\lambda)-b_{1,2x}(\lambda))\lambda^{-1},\\
&b_2(\lambda)=\frac13(ub_{1,x}(\lambda)+u_x b_1(\lambda)-b_{1,3x}(\lambda))\lambda^{-2},
\\
&c_1(\lambda)=p(\lambda)\lambda -\frac13(2b_{1,x}(\lambda)u_x+ub_{1,2x}(\lambda)+b_1(\lambda)u_{2x}-b_{1,4x}(\lambda))\lambda^{-2},\\
& c_2(\lambda)=b_{1,x}(\lambda), \qquad r(\lambda)=b_1(\lambda)\lambda+up(\lambda)-\frac12 p_{2x}(\lambda)
,
\end{align*}
where
$b_1=b_1(\lambda)\in\R\big(\big(\lambda^{-4}\big)\big)\lambda^3$ and $p=p(\lambda)\in\R((\lambda^{-4}))$ satisfy the following system of ODEs:
\begin{gather}\label{eq:system1}
\begin{cases}
9p_x\lambda^3+2(uu_x-u_{3x})b_1+2\big(u^2-3u_{2x}\big)b_{1,x}-6u_x b_{1,2x}-4u b_{1,3x}+2b_{1,5x}=0,
\\
6\lambda b_{1,x}+2u_x p+4up_x-p_{3x}=0.
\end{cases}
\end{gather}
To find the basic resolvent $R_1(\lambda)$, we write it as in~\eqref{eq:R_a-principal} as follows:\vspace{-1mm}
\[
R_1(\lambda)=\Lambda(\lambda)+\text{ terms of lower degree} .
\]
Then one can solve~\eqref{eq:system1} for $b_1(\lambda)=\sum_{k\geq0}b_{1,k}\lambda^{-4k-1}$ and $p(\lambda)=1+\sum_{k\geq1}p_{k}\lambda^{-4k}$
recursively. The first few terms are given by\vspace{-1mm}
\begin{gather}
b_1(\lambda)=-\frac u3 \lambda^{-1}
+\frac1{243}\big({-}7u^4+42u^2u_{2x}+21u u_{x}^2-21uu_{4x}-21u_{2x}^2-21u_{x}u_{3x}+3u_{6x}\big)\lambda^{-5}\nn
\\ \hphantom{b_1(\lambda)=}
{}+\cdots,\nn
\\
p(\lambda)=1+\frac{1}{81}\big(4u^3-9u_{x}^2-18uu_{2x}+6u_{4x}\big)\lambda^{-4}
+\frac1{6561}\big({-}18 u_{10x}+162 uu_{8x}\notag
\\ \hphantom{p(\lambda)=}
{}-522 u^2u_{6x}+798 u^3u_{4x}+1395 u_{4x}^2-3456 u u_{3x}^2-630 u_{2x}u^4+3591 u^2 u_{2x}^2\notag
\\ \hphantom{b_1(\lambda)=}
{}-3252 u_{2x}^3-1260 u^3 u_{x}^2+1134 u_{x}^4+648 u_{x}u_{7x}+1548 u_{2x}u_{6x}+2376 u_{3x}u_{5x}
\notag
\\ \hphantom{b_1(\lambda)=}
{}-3132 u u_{x}u_{5x}-6066 u u_{2x}u_{4x}-4428 u_{x}^2u_{4x}+4788 u^2 u_{x}u_{3x}
+9324 u u_{x}^2 u_{2x}\nn
\\ \hphantom{b_1(\lambda)=}
{}-14184 u_{x} u_{2x}u_{3x}+35 u^6\big)\lambda^{-8}+\cdots.
\label{b1SK}
\end{gather}
Similarly, to find the basic resolvent $R_5(\lambda)$, we write it as in~\eqref{eq:R_a-principal} as follows:
\[
R_5(\lambda)=\Lambda_5(\lambda)+\text{terms of lower degree} .
\]
Then one can solve~\eqref{eq:system1} for $b_1(\lambda)=\lambda^3+\sum_{k\geq0}b_{1,k}\lambda^{-4k-1}$
and $p(\lambda)=\sum_{k\geq0}p_{k}\lambda^{-4k}$ recursively. The first few terms are
\begin{gather}
b_1(\lambda)=\lambda^{3}
+\frac1{81}\big(5u^3-15uu_{2x}+3u_{4x}\big)\lambda^{-1}
+\frac{1}{6561}\big({-}9 u_{10x}+99 uu_{8x}-396 u^2u_{6x}\notag
\\ \hphantom{b_1(\lambda)=}
{}+726 u^3u_{4x}
+693 u_{4x}^2-2079 uu_{3x}^2-660 u^4 u_{2x}+2772 u^2 u_{2x}^2-1716 u_{2x}^3\notag
\\ \hphantom{b_1(\lambda)=}
{}-990 u^3 u_{x}^2+297 u_{x}^4+297 u_{x}u_{7x}+792 u_{2x}u_{6x}+1188 u_{3x}u_{5x}-1782 uu_{x }u_{5x}\notag
\\ \hphantom{b_1(\lambda)=}
{}-3762 uu_{2x}u_{4x}-1782 u_{x}^2u_{4x}+3366 u^2u_{x}u_{3x}
+4950 u u_{x}^2 u_{2x}-6732 u_{x}u_{2x}u_{3x}\notag
\\ \hphantom{b_1(\lambda)=}
{}+44 u^6\big)\lambda^{-5}+\cdots,\notag
\\
p(\lambda)=\frac19\big(2u_{2x}-u^2\big)
+\frac{1}{729}\big({-}6 u_{8x}+42 uu_{6x}-96 u^2u_{4x}+117 u_{3x}^2
+100 u^3 u_{2x}\notag
\\ \hphantom{b_1(\lambda)=}
{}-288 u u_{2x}^2+150 u^2 u_{x}^2+126 u_{x}u_{5x}+222 u_{2x}u_{4x}
-384 uu_{x}u_{3x}\nn
\\ \hphantom{b_1(\lambda)=}
{}-396 u_{x}^2 u_{2x}-8 u^5\big)\lambda^{-4}+\cdots.
\label{b1SK2}
\end{gather}
Recall from Section~\ref{sec:3.2} that $\Lambda(\lambda)=e+e_0(\lambda)$, where $e=e_1$ and
$e_0(\lambda)=\tilde e_0\lambda$. From~\eqref{gen_c0} we
have that $\tilde e_0=E_{21}+E_{32}$. Let $R(\lambda)$ be as in~\eqref{eq:R-SK}, then $(R(\lambda)|\tilde e_0)=2b_1(\lambda)$.
Hence, recalling the definition of the series $G_a(\lambda)$, $a=1,2$, given in~\eqref{eq:G} and using
\eqref{b1SK}--\eqref{b1SK2} we have the following expression for the first few terms of the
tau-structure of the DS hierarchy
\begin{gather}
\Omega_{1,0;1,0}=-\frac{2}{3}u,\nn
\\
\Omega_{1,1;1,0}=\frac2{243}\big({-}7u^4+42u^2u_{2x}+21uu_{x}^2-21uu_{4x}-21u_{2x}^2-21u_{x}u_{3x}+3u_{6x}\big),\nn
\\
\Omega_{2,0;1,0}=\frac2{81}\big(5u^3-15uu_{2x}+3u_{4x}\big),\nn
\\
\Omega_{2,1;1,0}=\frac{2}{6561}\big({-}9 u_{10x}+99 uu_{8x}-396 u^2u_{6x}
+726 u^3u_{4x}+693 u_{4x}^2-2079 uu_{3x}^2\nn
\\ \hphantom{\Omega_{2,1;1,0}=}
{}-660 u^4 u_{2x}+2772 u^2 u_{2x}^2-1716 u_{2x}^3 -990 u^3 u_{x}^2+297 u_{x}^4
+297 u_{x}u_{7x}\nn
\\ \hphantom{\Omega_{2,1;1,0}=}
{}+792 u_{2x}u_{6x}+1188 u_{3x}u_{5x}
-1782 uu_{x}u_{5x}-3762 uu_{2x}u_{4x}-1782 u_{x}^2u_{4x}\nn
\\ \hphantom{\Omega_{2,1;1,0}=}
{}+3366 u^2u_{x}u_{3x}+4950 u u_{x}^2 u_{2x}-6732 u_{x}u_{x}u_{3x}+44 u^6\big).\label{tau:SK}
\end{gather}
Taking $b=c=1$ and $k=m=0$ in \eqref{tau09082}, and using the first equation in \eqref{tau:SK}
and the fact that $\partial_x=-\partial_{t_1}$, the
DS hierarchy for $A_2^{(2)}$ and~$c_0$ can be written as
\begin{equation}\label{latestchange}
\frac{\p u}{\p t_{m_a+6l}} = \frac32\partial_x \Omega_{a,l;1,0},\qquad
a=1,2, \quad l\geq0.
\end{equation}
From \eqref{tau:SK} and \eqref{latestchange} we get the following first few equations of the hierarchy:
\begin{gather*}
\frac{\partial u}{\partial t_1}=-u_{x},
\\
\frac{\partial u}{\partial t_7}=
\frac{1}{27} u_{7x}-\frac{7}{27} uu_{5x}+\frac{14}{27} u^2u_{3x}
-\frac{28}{81} u^3 u_{x}+\frac{7}{27} u_{x}^3-\frac{14}{27} u_{x}u_{4x}
-\frac{7}{9} u_{2x}u_{3x}+\frac{14}{9} u u_{x}u_{2x},
\\
\frac{\partial u}{\partial t_5}=\frac{1}{9} u_{5x}-\frac{5}{9} u_{x} u_{2x}-\frac{5}{9} uu_{3x}
+\frac{5}{9} u^2 u_{x},
\\
\frac{\partial u}{\partial t_{11}}=
-\frac{1}{243} u_{11x}+\frac{11}{243} uu_{9x}-\frac{44}{243} u^2u_{7x}
+\frac{242}{729} u^3u_{5x}-\frac{220}{729} u^4u_{3x}+\frac{88}{729} u^5 u_{x}
\\ \hphantom{\frac{\partial u}{\partial t_{11}}=}
{}-\frac{110}{81} u^2 u_{x}^3+\frac{44}{243} u_{x}u_{8x}+\frac{121}{243} u_{2x}u_{7x}
+\frac{220}{243} u_{3x}u_{6x}-\frac{286}{243} uu_{x}u_{6x}
+\frac{286}{243} u_{4x}u_{5x}
\\ \hphantom{\frac{\partial u}{\partial t_{11}}=}
{}-\frac{616}{243} uu_{2x}u_{5x}-\frac{44}{27} u_{x}^2u_{5x}
-\frac{880}{243} uu_{3x}u_{4x}+\frac{616}{243} u^2u_{x}u_{4x}
+\frac{110}{27} u^2u_{2x}u_{3x}
\\ \hphantom{\frac{\partial u}{\partial t_{11}}=}
{}-\frac{440}{81} u_{2x}^2u_{3x}
+\frac{1298}{243} uu_{x}^2u_{3x}-\frac{979}{243} u_{x}u_{3x}^2
-\frac{1540}{729} u^3 u_{x} u_{2x}+\frac{572}{81} u u_{x} u_{2x}^2
\\ \hphantom{\frac{\partial u}{\partial t_{11}}=}
{}+\frac{682}{243} u_{x}^3 u_{2x}-\frac{1562}{243} u_{x}u_{2x}u_{4x}.
\end{gather*}
The equation corresponding to the flow $\frac{\partial}{\partial t_5}$ is the Sawada--Kotera equation~\cite{SK74}.

\subsubsection{Matrix resolvent and residues of fractional powers of Lax operators}\label{sec:LaxSK}
Let us consider the space $\mc R\big(\big(\lambda^{-1}\big)\big)^3$ and let us denote $\psi_1=\left(\begin{smallmatrix}0\\0\\1\end{smallmatrix}\right)$
and $\psi_2=\left(\begin{smallmatrix}0\\1\\0\end{smallmatrix}\right)$.
We have the following decomposition
\[
\mc R\big(\big(\lambda^{-1}\big)\big)^3=W_1\oplus W_2,
\]
where
\[
W_1=\widehat{\bigoplus}_{k\in\mb Z}\big(\mc R\otimes \Lambda(\lambda)^k\psi_1\big)
\qquad\text{and}\qquad
W_2=\widehat{\bigoplus}_{k\in\mb Z}\big(\mc R\otimes \Lambda(\lambda)^k\psi_2\big).
\]
This decomposition can be checked directly using the formulas
\[
\Lambda(\lambda)^{3k}=\lambda^{2k}\id_3,\qquad
\Lambda(\lambda)^{3k+1}=\lambda^{2k}\Lambda(\lambda),\qquad
\Lambda(\lambda)^{3k+2}=\lambda^{2k}\Lambda(\lambda)^2,\qquad
k\in\mb Z.
\]
Following \cite{DS85} we introduce a $\mc R((\partial^{-1}))$-module structure on $\mc R\big(\big(\lambda^{-1}\big)\big)^3$ by setting
\[
\partial^n.\eta(\lambda) := (\partial+q^{\rm can}+\Lambda(\lambda))^n(\eta(\lambda)),\qquad
\eta(\lambda)\in \mathcal{R}\big(\big(\lambda^{-1}\big)\big)^3,\quad n\in\mb Z.
\]
Note that $\Lambda(\lambda)$ is invertible,
hence the action of $\partial^{-1}$ is well defined using the geometric series expansion
\begin{align}
(\partial+q^{\rm can}+\Lambda(\lambda))^{-1} &=\big(\Lambda(\lambda)(\id_3+\Lambda(\lambda)^{-1}(\partial+q^{\rm can}))\big)^{-1}\nn
\\
&=\sum_{k\in\mb Z_{\geq0}}(-1)^k\big(\Lambda(\lambda)^{-1}(\partial+q^{\rm can})\big)^k\Lambda(\lambda)^{-1},\label{eq:20210708:eq1}
\end{align}
which gives a well-defined operator on $\mc R\big(\big(\lambda^{-1}\big)\big)^3$. Indeed, we note that multiplication by
\[
\Lambda(\lambda)^{-1}=
\begin{pmatrix} 0 & \lambda^{-1} & 0 \\ 0 & 0 &\lambda^{-1} \\
1 & 0 & 0\\
\end{pmatrix}
\]
does not
increase the orders of powers of~$\lambda$ of elements in $\R\big(\big(\lambda^{-1}\big)\big)^3$.

Since $q^{\rm can}(W_i)\subset W_i$ we have that $(\partial+q^{\rm can}+\Lambda(\lambda))(W_i)\subset W_i$, $i=1,2$.
Using the arguments in~\cite{DS85} we can show that any vector in $W_i$, $i=1,2$,
which contains only non-negative powers of~$\lambda$ can be uniquely expressed
as $A(\partial).\psi_i$, where $A(\partial)\in\mc R[\partial]$ is a differential operator.

Note that $\lambda\psi_1\in W_2$ and $\lambda\psi_2\in W_1$. Hence, there exist unique $L_1(\partial),L_2(\partial)\in\mc R[\partial]$ such that
\[
L_1(\partial).\psi_1=\lambda\psi_2
,
\qquad
L_2(\partial).\psi_2=\lambda\psi_1
.
\]
It is immediate to check that $L_1(\partial)=\partial^2-u$ and $L_2(\partial)=\partial$. Moreover, we have that $L_1(\partial)L_2(\partial)\psi_2=\lambda^2\psi_2$,
from which it follows (denoting $L(\partial)=L_1(\partial)L_2(\partial)$ and using the standard arguments in~\cite{DS85})
that $\big(\lambda^{4k}R_1(\lambda)\big)\psi_2=L(\partial)^{\frac{6k+1}{3}}.\psi_2$ and $\big(\lambda^{4k}R_5(\lambda)\big)\psi_2=L(\partial)^{\frac{6k+5}{3}}.\psi_2$, $k\in\mb Z_{\ge0}$.

Recall from \cite{DS85} that the DS hierarchy for $A_2^{(2)}$ and the $0$-th vertex of its Dynkin diagram
can be rewritten as
\[
\frac{\p L(\partial)}{\p t_i}= \big[L(\partial),\big(L(\partial)^{\frac{i}{3}}\big)_+\big],
\]
where $i\in E\cap \mathbb{Z}_{>0}$ (recall that $i=1+6k$ or $i=5+6k$, $k\geq0$).

\begin{Proposition}\label{prop:SK}
For $a=1,2$ and $k\in\mb Z_{\ge0}$, we have
\[
\Omega_{a,k;1,0}=2\Res_\partial L(\partial)^{\frac{m_a+6k}{3}},
\]
where $\Res_\p L(\partial)^{\frac{m_a+6k}{3}}$ denotes the coefficient of $\partial^{-1}$ of the pseudodifferential oeprator $L(\partial)^{\frac{m_a+6k}{3}}$.
\end{Proposition}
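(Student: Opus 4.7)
The plan is to use the operator identity
\[
\lambda^{4k}R_{m_a}(\lambda).\psi_2 \;=\; L(\partial)^{(m_a+6k)/3}.\psi_2
\]
established in the paragraph preceding the proposition, combined with the generating-series formula~(\ref{eq:G}) for $\Omega_{a,k;1,0}$, to read off the equality by matching the first component of both sides.

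First I would observe that, from the parametrization~(\ref{eq:R-SK}), the first component of the column vector $R_{m_a}(\lambda).\psi_2$ equals $b_1(\lambda)+b_2(\lambda)$ (with the entries of $R$ specialized to those of $R_{m_a}$). Since $(R|\tilde e_0)=2b_1$, the series in~(\ref{eq:G}) satisfies $G_a(\lambda)=2\pi_\lambda b_1(\lambda)$; and since the twisted realization of Section~\ref{sec:SKhierarchy} forces $b_1$ to contain only $\lambda$-powers $\equiv-1\pmod 4$ and $b_2$ to contain only powers $\equiv 1\pmod 4$, the coefficient of $\lambda^{-4k-1}$ in $b_1+b_2$ coincides with that in $b_1$. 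Therefore
\[
\Omega_{a,k;1,0} \;=\; 2\,\bigl[\lambda^{-1}\bigr]\bigl(\text{first component of }\lambda^{4k}R_{m_a}(\lambda).\psi_2\bigr),
\]
where $[\lambda^{-1}]$ denotes coefficient extraction; by the operator identity this equals $2[\lambda^{-1}]$ of the first component of $L(\partial)^{(m_a+6k)/3}.\psi_2$.

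The remaining task is to identify the last quantity with $2\Res_\partial L(\partial)^{(m_a+6k)/3}$. I would write the pseudo-differential expansion $L^{(m_a+6k)/3}=\sum_{j\geq 0}\alpha_j\partial^{m_a+6k-j}$ with $\alpha_0=1$, so that $\Res_\partial L^{(m_a+6k)/3}=\alpha_{m_a+6k+1}$. Introducing the auxiliary vector $\psi_0=(1,0,0)^T$, the basic actions read $\partial.\psi_2=\lambda\psi_1$, $\partial.\psi_1=\psi_0$, $\partial.\psi_0=\lambda\psi_2+u\psi_1$, and the geometric-series formula~(\ref{eq:20210708:eq1}) for $\partial^{-1}$ yields $\partial^{-1}.\psi_0=\psi_1$, $\partial^{-1}.\psi_1=\lambda^{-1}\psi_2$, and $\partial^{-1}.\psi_2=\lambda^{-1}\psi_0-u\lambda^{-2}\psi_2+O\bigl(\lambda^{-3}\bigr)$. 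An induction on $|s|$ then shows the following dichotomy: for $s\geq 0$, the first component of $\partial^{s}.\psi_2$ is polynomial in $\lambda$; while for $s=-r$ with $r\geq 1$, the first component is a power series in $\lambda^{-1}$ whose leading (least negative) term is of order $\lambda^{-2r+1}$. Consequently the only summand in $L(\partial)^{(m_a+6k)/3}.\psi_2$ contributing a $\lambda^{-1}$ term to the first component is $\alpha_{m_a+6k+1}\partial^{-1}.\psi_2$, whose contribution equals $\alpha_{m_a+6k+1}\lambda^{-1}$, completing the proof.

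The main technical step is the inductive claim on the $\lambda$-range of the first component of $\partial^{-r}.\psi_2$. The cleanest way to handle this is to exhibit a natural weight grading, where $\lambda$, $\partial$, $u$ and the vectors $\psi_i$ are assigned compatible weights making $\partial+q^{\rm can}+\Lambda$ weight-homogeneous, so that the stated leading $\lambda$-order is forced on degree grounds; an alternative is a direct induction using the explicit recursions above, which amounts to the same bookkeeping.
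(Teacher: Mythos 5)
Your proposal is correct and follows essentially the same route as the paper's proof: match the first component of $\lambda^{4k}R_{m_a}(\lambda).\psi_2=L(\partial)^{(m_a+6k)/3}.\psi_2$, discard $b_2$ because its $\lambda$-powers are $\equiv 1\pmod 4$, and identify the $\lambda^{-1}$ coefficient with $\Res_\partial L(\partial)^{(m_a+6k)/3}$. The only difference is that you spell out (via the explicit action of $\partial^{\pm1}$ on $\psi_0,\psi_1,\psi_2$ and an induction on the order) the step the paper merely asserts by citing the geometric-series expansion, namely that only the $\alpha_{m_a+6k+1}\partial^{-1}$ term contributes to the $\lambda^{-1}$ coefficient of the first component.
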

\begin{proof}
Note that, if $A(\partial)\in\R[\partial]$, then $A(\partial).\psi_2\in\mc R[\lambda]^3$. Hence,
using \eqref{eq:20210708:eq1} we have that
\begin{equation}\label{eq:20210728:eq2}
L(\partial)^{\frac{6k+m_a}{3}}.\psi_2=\text{a polynomial in $\lambda$}
+\begin{pmatrix}
\Res_\partial L(\partial)^{\frac{6k+m_a}{3}}
\\
*
\\
*
\end{pmatrix}\lambda^{-1}+O\big(\lambda^{-2}\big).
\end{equation}
On the other hand, let $R_{m_a}(\lambda)$ be written as in~\eqref{eq:R-KK}. Then we have that
\begin{equation}\label{eq:20210728:eq3}
\big(\lambda^{4k}R_{m_a}(\lambda)\big)\psi_2=\begin{pmatrix}b_1(\lambda)+b_2(\lambda)
\\-2a_2(\lambda)\\c_1(\lambda)-c_2(\lambda)\end{pmatrix}\lambda^{4k}.
\end{equation}
Recall that $b_2(\lambda)\in\R((\lambda^{-4}))\lambda$, hence $\Res_{\lambda}b_2(\lambda)=0$.
Since $\left(\lambda^{4k}R_{m_a}(\lambda)\right)\psi_2=L(\partial)^{\frac{6k+m_a}{3}}.\psi_2$, from equations \eqref{eq:20210728:eq2} and \eqref{eq:20210728:eq3} we then have
\[
2\Res_\p L(\partial)^{\frac{6k+m_a}{3}}=\Res_\lambda 2b_1(\lambda)\lambda^{4k}
=\Res_\lambda \left(R_{m_a}(\lambda)|\tilde e_0\right)\lambda^{4k}
=\Res_{\lambda}G_a(\lambda)\lambda^{4k}=\Omega_{a,k;1,0}.
\]
In the last identity we used \eqref{eq:G}. This concludes the proof.
\end{proof}

\begin{Remark}\label{rem:SK}
It is claimed in \cite{DS85} that the differential polynomials
$\Res_{\partial}L^{\frac{m_a+6k}{3}}(\partial)$, $k\geq0$, $a=1,2$, up to constant multiples can be served as Hamiltonian
densities for the Hamiltonian structure of the SK hierarchy (for a review of the Hamiltonian formalism of DS hierarchies~\cite{DS85} see Section~\ref{sec:Ham}).
\end{Remark}

\subsection{The Kaup--Kupershmidt hierarchy}\label{sec:KK}
The Drinfeld--Sokolov hierarchy associated to $A_2^{(2)}$ and the vertex $c_1$ of its Dynkin diagram~\eqref{diagram} is known as the Kaup--Kupershmidt hierarchy~\cite{Kaup80}. Following Section~\ref{sec:matrix_res} we compute the basic matrix resolvents for this hierarchy.

\subsubsection[Principal and standard gradations for A\_2\textasciicircum{}\{(2)\} and the c\_1 vertex]{Principal and standard gradations for $\boldsymbol{A_2^{(2)}}$ and the $\boldsymbol{c_1}$ vertex}

In this case there exists an automorphism $\sigma_1$ of $\mf{sl}_3$ of order $N_1=2$ (cf.~Section~\ref{sec:realizationKM})
such that $\widetilde{\mf g}=L(\mf{sl}_3,\sigma_1)\subset L(\mf{sl}_3)$:
\begin{gather*}
\widetilde{\mf g}=\left\{\!\!
\left.
\begin{pmatrix}
a_1(\lambda)+a_2(\lambda)&b_1(\lambda)+b_2(\lambda)&p(\lambda)\\
c_1(\lambda)+c_2(\lambda)&-2a_2(\lambda)&b_1(\lambda)-b_2(\lambda)\\
r(\lambda)&c_1(\lambda)-c_2(\lambda)&a_2(\lambda)-a_1(\lambda)
\end{pmatrix}
\right|
\begin{array}{l}
a_1(\lambda),b_1(\lambda),c_1(\lambda)\in\mb C((\lambda^{-2}))
\\
a_2(\lambda),b_2(\lambda),c_2(\lambda),
\\ \quad
p(\lambda),r(\lambda)\in\mb C((\lambda^{-2}))\lambda
\end{array}\!\!
\right\}\!.
\end{gather*}
Let us consider the following Chevalley generators for $\widetilde{\mf g}$:
\begin{gather}
e_0=E_{12}+E_{23},\quad\,
h_0=2(E_{11}-E_{33}),\quad
f_0=2(E_{21}+E_{32}),\nn
\\
e_1(\lambda)=E_{31}\lambda,\qquad
h_1=E_{33}-E_{11},\qquad\,
f_1(\lambda)=E_{13}\lambda^{-1} .\label{gen_c1}
\end{gather}
The principal gradation is defined by the linear map \eqref{deg:principal3}, where $\rho^\vee=h_0/2$. Explicitly, we have
\begin{align*}
&\widetilde{\mf g}^{6k}=\mb C h_0\lambda^{2k},\qquad
\widetilde{\mf g}^{6k+1}=\mb C e_0\lambda^{2k}\oplus\mb C e_1(\lambda)\lambda^{2k},\qquad
\widetilde{\mf g}^{6k+2}=\mb C (E_{21}-E_{32})\lambda^{2k+1},
\\
&\widetilde{\mf g}^{6k+3}=\mb C (E_{11}-2E_{22}+E_{33})\lambda^{2k+1},\qquad
\widetilde{\mf g}^{6k+4}=\mb C (E_{12}-E_{23})\lambda^{2k+1},
\\
& \widetilde{\mf g}^{6k+5}=\mb C f_0\lambda^{2k+2}\oplus\mb C f_1(\lambda)\lambda^{2k+2},
\end{align*}
where $k\in\mb Z$.
The standard gradation corresponding to the vertex $c_1$ is the gradation in powers of $\lambda$. Hence we have
\begin{gather*}
\widetilde{\mf g}_{2k}=\mb C f_0\lambda^{2k}\oplus\mb C h_0\lambda^{2k}\oplus\mb C e_0\lambda^{4k},
\\
\widetilde{\mf g}_{2k+1}=\mb C f_1\lambda^{2k}\oplus\mb C(E_{21}-E_{32})\lambda^{2k+1}\oplus
\mb C (E_{11}-2E_{22}+E_{33})\lambda^{2k+1}
\\ \hphantom{\widetilde{\mf g}_{2k+1}=}
{}\oplus\mb C (E_{12}-E_{23})\lambda^{2k+1}\oplus\mb C e_1\lambda^{2k}.
\end{gather*}
Note that $\mf a=\widetilde{\mf g}_0\cong\mf{sl}_2$,
moreover, $\mf n = \mb Cf_0 \subset \mf b= \mb C f_0 \oplus \mb C h_0$.

Recall that the element $\Lambda(\lambda)=e_0+e_1(\lambda)\in\widetilde{\mf g}$ is semisimple and
we have the direct sum decomposition~\eqref{dec:Lambda}. Let, as in Section~\ref{section1},
$\mc H=\Ker \ad\Lambda(\lambda)$. It is immediate to check that
\begin{equation}\label{eq:Lambda-c1}
\mc H=\Bigl(\widehat\bigoplus_{k\in\mb Z}\mb C\Lambda(\lambda)\lambda^{2k}\Bigr)\oplus
\Bigl(\widehat\bigoplus_{k\in\mb Z}\mb C(f_0+2f_1(\lambda))\lambda^{2k}\Bigr).
\end{equation}
We rewrite~\eqref{eq:Lambda-c1} as in~\eqref{mcHdecompbasislambda}
using the following basis $\{\Lambda_i(\lambda)\mid i\in E\}$:
\begin{gather*}
\Lambda_1(\lambda)=\Lambda(\lambda),
\ \ \Lambda_5(\lambda)=\bigg(\!\frac12f_0+f_1(\lambda)\!\bigg)\lambda^2,
\ \ \Lambda_{1+6k}(\lambda)=\Lambda_1(\lambda)\lambda^{2k},
\ \ \Lambda_{5+6k}(\lambda)=\Lambda_5(\lambda)\lambda^{2k},
\end{gather*}
where $k\in\mb Z$.
This basis satisfies the normalization conditions \eqref{eq:normalization1} and \eqref{eq:normalization}.

\subsubsection{The matrix resolvent}
Take the DS gauge $V=\mb Cf_0$ (cf.~\eqref{bVen}). The element $\mc L^{\text{can}}$ in~\eqref{Lcan} has the form
\[
\mc L^{\text{can}}=\partial+\Lambda(\lambda)+u \frac{f_0}2
=\begin{pmatrix}
\partial&1&0\\
u&\partial&1\\
\lambda&u&\partial
\end{pmatrix}\!.
\]
We have that $\R=\mb C[u,u_x,u_{2x},\dots]$ is the algebra of differential polynomials
in~$u$.
Let
\begin{equation}\label{eq:R-KK}
R(\lambda)=\begin{pmatrix}
a_1(\lambda)+a_2(\lambda)&b_1(\lambda)+b_2(\lambda)&p(\lambda)\\
c_1(\lambda)+c_2(\lambda)&-2a_2(\lambda)&b_1(\lambda)-b_2(\lambda)\\
r(\lambda)&c_1(\lambda)-c_2(\lambda)&a_2(\lambda)-a_1(\lambda)
\end{pmatrix}
\in \mc R\otimes \widetilde{\mf g}
\end{equation}
be a resolvent of~$\mc L^{\text{can}}$.
Then $[\mc L^{\text{can}},R(\lambda)]=0$; cf.~\eqref{LRcomm}.
Solving this linear system we find that
\begin{align*}
&a_1(\lambda)=b_{1,x}(\lambda),\qquad a_2(\lambda)=-\frac13 up(\lambda)+\frac16p_{2x}(\lambda),\qquad b_2(\lambda)=\frac12p_x(\lambda) ,
\\
&c_1(\lambda)=ub_1(\lambda)+p(\lambda)\lambda -b_{1,2x}(\lambda)
,
\qquad
c_2(\lambda)=\frac{1}{3}u_x p(\lambda)+\frac56 up_x(\lambda)-\frac{1}{6}p_{3x}(\lambda)
,
\\
&r(\lambda)=b_1(\lambda)\lambda+\frac13\big(3u^2-u_{2x}\big)p(\lambda)-\frac76 u_x p_x(\lambda)-\frac43 u p_{2x}(\lambda)+\frac16p_{4x}(\lambda)
,
\end{align*}
where
$b_1=b_1(\lambda)\in\mc R((\lambda^{-2}))\lambda$ and $p=p(\lambda)\in\mc R((\lambda^{-2}))$ satisfy the following system of ODEs:
\begin{equation}\label{eq:system1bis}
\begin{cases}
18 b_{1,x}\lambda+2(8u u_x- u_{3x})p +\big(16 u^2 -9 u_{2x}\big)p_x -15 u_x p_{xx}-10 up_{3x}+p_{5x}=0,
\\
3p_x\lambda+2u_xb_1+4ub_{1,x}-2b_{1,3x}=0.
\end{cases}
\end{equation}
To find the basic resolvent $R_1(\lambda)$, we write it as in~\eqref{eq:R_a-principal} as follows:
\[
R_1(\lambda)=\Lambda(\lambda)+\text{ terms of lower degree}.
\]
Then one can solve \eqref{eq:system1bis} for $p(\lambda)=\sum_{k\geq0}p_{k}\lambda^{-2k-1}$ and $b_1(\lambda)=\sum_{k\geq0}b_{1,k}\lambda^{-2k}$
recursively. The first few terms are as follows:
\begin{gather}
p(\lambda)=-\frac23u\lambda^{-1}+\frac{1}{243}\big(6 u_{6x}-84 uu_{4x}+336 u^2 u_{2x}-147 u_{2x}^2
+420 u u_{x}^2\nn
\\[1mm] \hphantom{p(\lambda)=}
{}-210 u_{x}u_{3x}-112 u^4\big)\lambda^{-3}+\cdots,\notag
\\[1mm] b_1(\lambda)=1+\frac1{81}\big(32u^3-18u_{x}^2-36uu_{2x}+3u_{4x}\big)\lambda^{-2}
+\frac1{6561}\big({-}9 u_{10x}+216 uu_{8x}\notag
\\[1mm] \hphantom{b_1(\lambda)=}
{}-1908 u^2u_{6x}+7728 u^3u_{4x}+2718 u_{4x}^2-15174 u u_{3x}^2-15120 u^4 u_{2x}+34776 u^2 u_{2x}^2\notag
\\[1mm] \hphantom{b_1(\lambda)=}
{}-11463 u_{2x}^3-30240 u^3 u_{x}^2+7749 u_{x}^4+864 u_{x}u_{7x}+2493 u_{2x}u_{6x}
+4455 u_{3x}u_{5x}\notag
\\[1mm] \hphantom{b_1(\lambda)=}
{}-11448 uu_{x}u_{5x}-24714 uu_{2x}u_{4x}-14067 u_{x}^2u_{4x}
+46368 u^2u_{x}u_{3x}\nn
\\[1mm] \hphantom{b_1(\lambda)=}
{}+77364 u u_{x}^2 u_{2x}-48456 u_{x}u_{2x}u_{3x}+2240 u^6\big)\lambda^{-4}+\cdots.\label{p-KK}
\end{gather}
Similarly, we write
\[
R_5(\lambda)=\Lambda_5(\lambda)+\text{terms of lower degree},
\]
and we solve \eqref{eq:system1bis} for $p(\lambda)=\lambda +\sum_{k\geq0}p_{k}\lambda^{-2k-1}$
and $b_1(\lambda)=\sum_{k\geq0}b_{1,k}\lambda^{-2k}$
recursively. The first few terms are as follows
\begin{gather}
p(\lambda)=\lambda+\frac1{81}\big(40u^3-45u_{x}^2-60uu_{2x}+6u_{4x}\big)\lambda^{-1}+
\frac{1}{6561}\big(150480 u u_{x}^2 u_{2x}\notag
\\[1mm] \hphantom{p(\lambda)=}
{}-100188 u_{x} u_{2x}u_{3x}-47520 u^3 u_{x}^2+77616 u^2 u_{x}u_{3x}-21384 u u_{x}u_{5x}
+19602 u_{x}^4\notag
\\[1mm] \hphantom{p(\lambda)=}
{}-30888 u_{x}^2u_{4x}+1782 u_{x}u_{7x}-21120 u^4 u_{2x}+56232 u^2 u_{2x}^2-44352 u u_{2x}u_{4x}\notag
\\[1mm] \hphantom{p(\lambda)=}
{}-22044 u_{2x}^3+4950 u_{2x}u_{6x}+11616 u^3u_{4x}
-3168 u^2u_{6x}-27324 u u_{3x}^2+396 u u_{8x}\nn
\\[1mm] \hphantom{p(\lambda)=}
{}+5445 u_{4x}^2+8910 u_{3x} u_{5x}-18 u_{10x}+2816 u^6\big)\lambda^{-3}+\cdots,\notag
\\[1mm]
b_1(\lambda)=\frac1{9}\big(u_{2x}-4u^2\big)
+\frac{1}{729}\big({-}3 u_{8x}+60 uu_{6x}-408 u^2u_{4x}+252 u_{3x}^2
+1120 u^3 u_{2x}\notag
\\ \hphantom{b_1(\lambda)=}
{}-1224 u u_{2x}^2+1680 u^2 u_{x}^2+180 u_{x}u_{5x}+402 u_{2x}u_{4x}
-1632 uu_{x}u_{3x}\notag
\\ \hphantom{b_1(\lambda)=}
{}-1188 u_{x}^2 u_{2x}-256 u^5\big)\lambda^{-2}+\cdots.\label{p-KK2}
\end{gather}
Recall from Section~\ref{sec:3.2} that $\Lambda(\lambda)=e+e_1(\lambda)$, where $e=e_0$ and
$e_1(\lambda)=\tilde e_1\lambda$. From \eqref{gen_c1} we
have that $\tilde e_1=E_{31}$. Let $R(\lambda)$ be as in \eqref{eq:R-KK}, then $(R(\lambda)|\tilde e_1)=p(\lambda)$.
Hence, recalling the definition of the series $G_a(\lambda)$, $a=1,2$, given in \eqref{eq:G} and
\eqref{p-KK}--\eqref{p-KK2} we have the following expression for the first few terms of the
tau-structure of the DS hierarchy
\begin{gather}
\Omega_{1,0;1,0}=-\frac{2}{3}u,\nn
\\
\Omega_{1,1;1,0}=\frac{1}{243}\big(6 u_{6x}-84 uu_{4x}+336 u^2 u_{2x}-147 u_{2x}^2
+420 u u_{x}^2-210 u_{x}u_{3x}-112 u^4\big),\nn
\\
\Omega_{2,0;1,0}=\frac1{81}\big(40u^3-45u_{x}^2-60uu_{2x}+6u_{4x}\big),\nn
\\
\Omega_{2,1;1,0}=\frac{1}{6561}
\big(150480 u u_{x}^2 u_{2x}-100188 u_{x} u_{2x}u_{3x}
-47520 u^3 u_{x}^2+77616 u^2 u_{x}u_{3x}\notag
\\ \hphantom{\Omega_{2,1;1,0}=}
{}-21384 u u_{x}u_{5x}+19602 u_{x}^4-30888 u_{x}^2u_{4x}
+1782 u_{x}u_{7x}-21120 u^4 u_{2x}\notag
\\ \hphantom{\Omega_{2,1;1,0}=}
{}+56232 u^2 u_{2x}^2-44352 u u_{2x}u_{4x}-22044 u_{2x}^3+4950 u_{2x}u_{6x}+11616 u^3u_{4x}\nn
\\ \hphantom{\Omega_{2,1;1,0}=}
{}-3168 u^2u_{6x}-27324 u_{3x}^2 u+396 u u_{8x}+5445 u_{4x}^2+8910 u_{3x} u_{5x}\nn
\\ \hphantom{\Omega_{2,1;1,0}=}
{}-18 u_{10x}+2816 u^6\big).\label{tau:KK}
\end{gather}
Similarly to what done in Section~\ref{sec:SKhierarchy}, one can show that the DS hierarchy for $A_2^{(2)}$ and $c_1$ can be written again using equation \eqref{latestchange}.
Hence, from \eqref{latestchange} and \eqref{tau:KK} we get the first few equations of the hierarchy:
\begin{gather*}
\frac{\partial u}{\partial t_1}=-u_{x},
\\
\frac{\partial u}{\partial t_6}=
\frac{1}{27} u_{7x}-\frac{14}{27} uu_{5x}+\frac{56}{27} u^2u_{3x}
-\frac{224}{81} u^3 u_{x}+\frac{70}{27} u_{x}^3\!-\frac{49}{27} u_{x}u_{4x}\!
-\frac{28}{9} u_{2x}u_{3x}\!+\frac{28}{3} u u_{x}u_{2x},
\\
\frac{\partial u}{\partial t_5}=\frac{1}{9} u_{5x}-\frac{25}{9} u_{x} u_{2x}-\frac{10}{9} uu_{3x}
+\frac{20}{9} u^2 u_{x},
\\
\frac{\partial u}{\partial t_{11}}=
-\frac{1}{243} u_{11x}+\frac{22}{243} uu_{9x}-\frac{176}{243} u^2u_{7x}
+\frac{1936}{729} u^3u_{5x}-\frac{3520}{729} u^4u_{3x}+\frac{2816}{729} u^5 u_{x}
\\ \hphantom{\frac{\partial u}{\partial t_5}=}
{}-\frac{880}{27} u^2 u_{x}^3+\frac{121}{243} u_{x}u_{8x}+\frac{374}{243} u_{2x}u_{7x}
+\frac{770}{243} u_{3x}u_{6x}-\frac{1540}{243} uu_{x}u_{6x}
+\frac{1100}{243} u_{4x}u_{5x}
\\ \hphantom{\frac{\partial u}{\partial t_5}=}
{}-\frac{3652}{243} uu_{2x}u_{5x}-\frac{968}{81} u_{x}^2u_{5x}
{}-\frac{5500}{243} uu_{3x}u_{4x}+\frac{6248}{243} u^2u_{x}u_{4x}
+\frac{3520}{81} u^2u_{2x}u_{3x}
\\ \hphantom{\frac{\partial u}{\partial t_5}=}
{}-\frac{3080}{81} u_{2x}^2u_{3x}
+\frac{16984}{243} u u_{x}^2u_{3x}-\frac{7084}{243} u_{x} u_{3x}^2
-\frac{29920}{729} u^3 u_{x} u_{2x}+\frac{2552}{27} u u_{x} u_{2x}^2
\\ \hphantom{\frac{\partial u}{\partial t_5}=}
{}+\frac{12716}{243}u_{x}^3 u_{2x}
-\frac{11462}{243} u_{x}u_{2x}u_{4x}.
\end{gather*}
The equation corresponding to the flow $\frac{\partial}{\partial t_5}$ is the Kaup--Kupershmidt equation~\cite{Kaup80}.

\subsubsection{Matrix resolvent and residues of fractional powers of Lax operators} \label{subsubsectionkk}
Let us consider the space $\mc R\big(\big(\lambda^{-1}\big)\big)^3$.
Following \cite{DS85} we introduce a $\mc R\big(\big(\partial^{-1}\big)\big)$-module structure on $\mc R\big(\big(\lambda^{-1}\big)\big)^3$ by setting
\[
\partial^n.\eta(\lambda)=\big(\partial+q^{\rm can}+\Lambda(\lambda)\big)^n(\eta(\lambda)),\qquad
\eta(\lambda)\in\mc R\big(\big(z^{-1}\big)\big)^3,\quad
n\in\mb Z.
\]
Note that $\Lambda$ is invertible, hence the action of $\partial^{-1}$ is well defined using the geometric series expansion
as in Section~\ref{sec:LaxSK} (cf.~\eqref{eq:20210708:eq1}).

Let $\psi=\left(\begin{smallmatrix}0\\0\\1\end{smallmatrix}\right)$.
Using the arguments in \cite{DS85} we can show that any vector in $\mc R[\lambda]^3$
can be uniquely expressed
as $A(\partial).\psi$, where $A(\partial)\in\mc R[\partial]$ is a differential operator.
Hence, there exists unique $L(\partial)\in\mc R[\partial]$ such that
\[
L(\partial).\psi=\lambda\psi
.
\]
It is straightforward to check that $L(\partial)=\partial^3-u\partial-\partial\circ u$
from which it follows, using the standard arguments in \cite{DS85},
that $\big(\lambda^{2k}R_1(\lambda)\big)\psi=L(\partial)^{\frac{6k+1}{3}}.\psi$ and $\big(\lambda^{2k}R_5(\lambda)\big)\psi=L(\partial)^{\frac{6k+5}{3}}.\psi$, $k\in\mb Z_{\geq0}$.

Recall from \cite{DS85} that the DS hierarchy for $A_2^{(2)}$ and vertex $c_1$ of its Dynkin diagram
can be rewritten as
\[
\frac{\partial L(\partial)}{\partial t_i}=\big[L(\partial),\big(L(\partial)^{\frac{i}{3}}\big)_+\big],
\]
where $i\in E\cap\mb Z_{\geq1}$.
\begin{Proposition}\label{prop:KK}
For $a=1,2$ and $k\in\mb Z_+$, we have
\[
\Omega_{a,k;1,0}=\Res_\partial L(\partial)^{\frac{m_a+6k}{3}}.
\]
\end{Proposition}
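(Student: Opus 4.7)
The plan is to mirror the argument for Proposition~\ref{prop:SK}, replacing the SK-specific data by the KK counterparts collected in Section~\ref{sec:KK}. The two ingredients I will rely on are (i) the identification $\bigl(\lambda^{2k}R_{m_a}(\lambda)\bigr)\psi = L(\partial)^{\frac{6k+m_a}{3}}.\psi$ stated just before Proposition~\ref{prop:KK}, and (ii) the geometric-series expansion of $\partial^{-1}=(\partial+q^{\rm can}+\Lambda(\lambda))^{-1}$, as in~\eqref{eq:20210708:eq1}, which reduces the $\lambda$-asymptotics of $\partial^{-n}.\psi$ to iterated action of $\Lambda(\lambda)^{-1}$ up to strictly lower-order corrections coming from $\partial+q^{\rm can}$.

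The first key step is an asymptotic analysis of the vector $L(\partial)^{\frac{6k+m_a}{3}}.\psi\in\mathcal R\bigl(\bigl(\lambda^{-1}\bigr)\bigr)^3$. Using $\Lambda(\lambda)=E_{12}+E_{23}+E_{31}\lambda$, I compute $\Lambda(\lambda)^{-1}\psi=(\lambda^{-1},0,0)^T$ and then check inductively that $\Lambda(\lambda)^{-n}\psi$ has vanishing top entry for $n=2,3$ and top entry of order $\lambda^{-2}$ from $n=4$ onward. Hence among the negative-$\partial$ terms of the pseudodifferential operator $L(\partial)^{\frac{6k+m_a}{3}}$, only the $\partial^{-1}$ coefficient contributes to the $\lambda^{-1}$-coefficient of the top entry of $L(\partial)^{\frac{6k+m_a}{3}}.\psi$, giving
\[
L(\partial)^{\frac{6k+m_a}{3}}.\psi \;=\; (\text{a polynomial in }\lambda) \;+\; \begin{pmatrix}\Res_\partial L(\partial)^{\frac{6k+m_a}{3}}\\ \ast\\ \ast\end{pmatrix}\lambda^{-1}+O\bigl(\lambda^{-2}\bigr).
\]

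The second step reads off the same top entry from the matrix-resolvent side. The third column of $R_{m_a}(\lambda)$ in~\eqref{eq:R-KK} is $(p(\lambda),b_1(\lambda)-b_2(\lambda),a_2(\lambda)-a_1(\lambda))^T$, so the top entry of $\bigl(\lambda^{2k}R_{m_a}(\lambda)\bigr)\psi$ equals $p(\lambda)\lambda^{2k}$. Equating the $\lambda^{-1}$-coefficients of the two expressions produces the scalar identity $\Res_\lambda p(\lambda)\lambda^{2k}=\Res_\partial L(\partial)^{\frac{6k+m_a}{3}}$. It remains to link this to the tau-structure: from $\tilde e_1=E_{31}$ in~\eqref{gen_c1} and the trace form one has $(R_{m_a}(\lambda)\,|\,\tilde e_1)=p(\lambda)$, so $G_a(\lambda)=\pi_\lambda p(\lambda)$; checking that the terms of $p(\lambda)$ discarded by $\pi_\lambda$ (non-negative powers of $\lambda$, or powers not congruent to $-1$ modulo $N_1=2$) contribute $0$ to $\Res_\lambda(\,\cdot\,)\lambda^{2k}$, I conclude via~\eqref{eq:G} that $\Omega_{a,k;1,0}=\Res_\lambda p(\lambda)\lambda^{2k}=\Res_\partial L(\partial)^{\frac{6k+m_a}{3}}$.

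Compared with the SK case, the factor of $2$ in Proposition~\ref{prop:SK} is absent here precisely because $\tilde e_1$ is a single elementary matrix, so $(R_{m_a}\,|\,\tilde e_1)=p$ rather than $2b_1$. I do not anticipate any serious obstacle; the only delicate point is the asymptotic vanishing of the top entry of $\Lambda(\lambda)^{-n}\psi$ for $n=2,3$, which follows from direct iteration, and the observation that the further corrections from $q^{\rm can}$ in the expansion~\eqref{eq:20210708:eq1} only contribute at strictly lower orders in $\lambda$ because multiplication by $\Lambda(\lambda)^{-1}$ does not increase the $\lambda$-degree.
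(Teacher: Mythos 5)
Your proposal is correct and follows essentially the same route as the paper's proof (which itself simply repeats the argument of Proposition~\ref{prop:SK}): identify $\bigl(\lambda^{2k}R_{m_a}(\lambda)\bigr)\psi$ with $L(\partial)^{\frac{6k+m_a}{3}}.\psi$, extract the $\lambda^{-1}$-coefficient of the top entry on both sides via the expansion~\eqref{eq:20210708:eq1}, and use $(R_{m_a}(\lambda)\,|\,\tilde e_1)=p(\lambda)$ together with~\eqref{eq:G}. Your extra verification that the top entry of $\Lambda(\lambda)^{-n}\psi$ vanishes for $n=2,3$ and drops to order $\lambda^{-2}$ for $n\geq 4$, and your explanation of why the factor $2$ of the SK case disappears, are correct elaborations of what the paper leaves implicit.
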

\begin{proof}
The argument is the same as in the proof of Proposition \ref{prop:SK}.
Recall that, if $A(\partial)\in\mc R[\partial]$, then $A(\partial).\psi\in\mc R[\lambda]^3$. Hence,
using \eqref{eq:20210708:eq1} we have that
\begin{equation}\label{eq:20210728:eq2b}
L(\partial)^{\frac{6k+m_a}{3}}.\psi=\text{a polynomial in $\lambda$}
+\begin{pmatrix}
\Res_\partial L(\partial)^{\frac{6k+m_a}{3}}
\\
0
\\
*
\end{pmatrix}\lambda^{-1}+O\big(\lambda^{-2}\big).
\end{equation}
On the other hand, let $R_{m_a}(\lambda)$ be written as in \eqref{eq:R-KK}. Then we have that
\begin{equation}\label{eq:20210728:eq3b}
(\lambda^{2k}R_{m_a}(\lambda))\psi=\begin{pmatrix}p(\lambda)\\b_1(\lambda)-b_2(\lambda)
\\a_2(\lambda)-a_1(\lambda)\end{pmatrix}\lambda^{2k}.
\end{equation}
Since $\big(\lambda^{2k}R_{m_a}(\lambda)\big)\psi=L(\partial)^{\frac{6k+m_a}{3}}.\psi$, from equations \eqref{eq:20210728:eq2b} and \eqref{eq:20210728:eq3b} we then have
\[
\Res_\partial L^{\frac{6k+m_a}{3}}=\Res_\lambda p(\lambda)\lambda^{2k}=\Res_\lambda(R_{m_a}(\lambda)|\tilde e_1)\lambda^{2k}
=\Res_{\lambda}G_a(\lambda)\lambda^{4k}=\Omega_{a,k;1,0}
.
\]
In the last identity we used \eqref{eq:G}. This concludes the proof.
\end{proof}

\begin{Remark}
Similarly to Remark~\ref{rem:SK} we have that the differential polynomials
$\Res_{\partial}L^{\frac{m_a+6k}{3}}(\partial)$, $k\geq0$, $a=1,2$, up to constant multiples can be served as Hamiltonian
densities for the Hamiltonian structure of the KK hierarchy.
\end{Remark}

\section[Hamiltonian structure and tau-structure]{Hamiltonian structure and tau-structure\footnote{Section~\ref{sec:Ham} is written by D.V. and D.Y. after Boris Dubrovin passed away.
This section contains a generalization of Propositions~\ref{prop:SK} and~\ref{prop:KK} for the
explicit $A_2^{(2)}$ examples, and a
proof of Theorem~\ref{thmunt} that generalizes the $A_1^{(1)}$ case. }
} \label{sec:Ham}

\subsection{Review of known results on Hamiltonian structures of the DS hierarchies}

\subsubsection{Poisson structures}\label{sec:km}
Recall from~\cite{DS85} that (up to a constant factor) there exists a unique non-degenerate symmetric invariant $\mb C$-valued
bilinear form $\kappa\colon\widetilde{\mf g}\times \widetilde{\mf g}\to\mb C$ which is coordinated with the principal and standard gradations, that
is, $\kappa(a,b)=0$ if $a\in\widetilde{\mf g}^k$, $b\in\widetilde{\mf g}^l$ and $k+l\neq0$ (similarly for the standard gradation).
We note that the direct sum decomposition~\eqref{dec:Lambda} is orthogonal with respect to~$\kappa$.
The restriction
$(\cdot\,|\,\cdot)=\kappa|_{\mf a\times\mf a}$ of $\kappa$
to the semisimple subalgebra $\mf a$ is a non-degenerate symmetric invariant bilinear form~\cite{DS85}.
Let us extend the bilinear form $(\cdot\,|\,\cdot)$ on~$\mf a$
 to a bilinear form
(which we still denote with the same symbol with an abuse of notation) on smooth functions
$u=u(x)$, $v=v(x)\in C^{\infty}(S^1,\mf a)$, from the circle $S^1$ in $\mf a$, in the natural way by letting
\begin{equation}\label{uvequaluv}
(u|v)=\int (u(x)|v(x))\,{\rm d}x.
\end{equation}
Here and below, for a smooth function $g(x)$, we denote
\[
\int g(x)\,{\rm d}x:=\frac{1}{2\pi}\int_{S^1} g(x)\,{\rm d}x.
\]
Recall from Section~\ref{section1} that $\mc R\subset\mc A^q$ denotes the space of gauge invariants. Let
\[
\mc F=\bigl\{\bar f[q]:=\tint f(q,q_x,q_{2x},\dots)\,{\rm d}x\mid f\in\mc R\bigr\}
\]
be the space of local functionals whose densities are gauge invariant differential polynomials.
The space~$\mc F$ can be canonically identified
with the quotient space $\mc R/\partial\mc R$.

Let $\bar f := \tint f(q,q_x,q_{2x},\dots)\,{\rm d}x\in\mc F$. Then, $f(q,q_x,q_{2x},\dots)\in\mc R \subset\mc A^q$ is a differential polynomial
in the entries of~$q$, and their $x$-derivatives.

Let $V$ be a Drinfeld--Sokolov gauge (cf.~\eqref{bVen}).
Let $v_1,\dots,v_{\dim \mf b}$ be a basis of $\mf b$ homogeneous with respect to~\eqref{dec:dynkin},
such that $v_1, \dots, v_\ell$ is a basis for~$V$, and $v_{\ell+1},\dots,v_{\dim \mf b}$
is a basis for~$[e,\mf n]$.
Write $q=\sum_{i=1}^{\dim \mf b}q_iv_i$.
We define the variational derivative of~$\bar f$ with respect to $q_i$ as
\begin{equation}\label{varder}
\frac{\delta \bar f}{\delta q_i}=\sum_{k\geq0}(-\partial)^k\frac{\partial f}{\partial (\partial^kq_i)},
\qquad i=1,\dots,\dim \mf b .
\end{equation}
Let $v^1,\dots,v^{\dim \mf b}$ be the basis of~$\mf a^{\geq0}$, which is dual, with respect to the non-degenerate bilinear form $(\cdot\,|\,\cdot)$
on $\mf a$, to the basis
$v_1,\dots,v_{\dim \mf b}$ of $\mf b$. Recall from \cite[Lemma~3.13]{DSKV13} that $(v^i)_{i=1,\dots,\ell}$ is a basis of
$\mf a^e=\{a\in \mf a \mid [a,e]=0\}$, the centralizer
of~$e$ in $\mf a$.
We identify the collection $\frac{\delta \bar f}{\delta q}=\big(\frac{\delta \bar f}{\delta q_i}\big)_{i=1,\dots,\dim \mf b}$
of all variational derivatives \eqref{varder} of~$\bar f$ with respect to the variables~$q_i$ with a smooth functions with values in $\mf a^{\geq0}$
by
\begin{equation}\label{eq:varder}
\frac{\delta \bar f}{\delta q}=\sum_{i=1}^{\dim \mf b} \frac{\delta \bar f}{\delta q_i}v^i.
\end{equation}
In \cite{DS85} it is shown that the following formula:
\begin{align}\label{eq:Poisson1}
\bigl\{\bar f, \bar g\bigr\}_2[q]=\int\biggl(\frac{\delta \bar f}{\delta q}\bigg|\biggl[\frac{\delta \bar g}{\delta q},\partial+f+q\biggr]\biggr)\,{\rm d}x,
\qquad \forall\, \bar f,\bar g\in\mc F,
\end{align}
defines a local Poisson bracket on $\mc F$ (namely a Lie algebra structure on $\mc F$).
In particular, for any gauge transformation~\eqref{gauge2} given by $N\in\mc A^q\otimes \mf n$,
$\big\{\bar f,\bar g\big\}_2\big[q^N\big]=\big\{\bar f,\bar g\big\}_2[q]$, so it is a~well-defined element of $\mc F$.

Recall from the Section~\ref{section1} that, $f(q,q_x,q_{xx},\dots)\in\mc R$ is a differential polynomial in the
variables $u_i$, $i=1,\dots,\ell$ \big($q^{\rm can}=\sum_{i=1}^\ell u_i v_i$, cf.~\eqref{Lcan}\big).
Hence, we can also consider the vector of variational derivatives
\[
\frac{\delta \bar f}{\delta q^{\rm can}}=\left(\frac{\delta \bar f}{\delta u_i}\right)_{i=1}^\ell,
\]
where $\frac{\delta \bar f}{\delta u_i}$ is defined as in~\eqref{varder}, where $q_i$ is replaced by~$u_i$, $i=1,\dots,\ell$.
As before, we will also use the following notation
\begin{equation}\label{eq:vardercan}
\frac{\delta \bar f}{\delta q^{\rm can}}=\sum_{i=1}^\ell \frac{\delta \bar f}{\delta u_i} v^i
\end{equation}
 to identify the variational derivative with a function with values in~$\mf a^e$.
The Lie bracket~\eqref{eq:Poisson1} can be rewritten as
\begin{align}\label{eq:Poisson1b}
\bigl\{\bar f,\bar g\bigr\}_2(q^{\rm can})=\int \frac{\delta \bar f}{\delta q^{\rm can}}P(\partial)\frac{\delta \bar g}{\delta q^{\rm can}}\,{\rm d}x,
\qquad \bar f,\bar g\in\mc F,
\end{align}
where $P(\partial)=(P_{ij}(\partial))_{i,j=1}^\ell$ is a local Hamiltonian operator~\cite{DSKV13, DSKV16,DS85, DZ-norm}.

Let us extend the bilinear form~$\kappa$ on $\widetilde{\mf g}$ to a bilinear form
on smooth functions with values in~$\widetilde{\mf g}$ in the natural way as in~\eqref{uvequaluv}.
For any $\Theta\in\mc H$, define $\bar h_\Theta =\tint \kappa(\Theta,H)\,{\rm d}x\in\mc F$ (we review the proof of the fact
$\bar h_\Theta $ belongs to $\mc F$ in Section~\ref{sec:Ham1}).
Here $H$ is defined in~\eqref{eq:L0}.
Then, the DS hierarchy~\eqref{DS:hier} can be written in Hamiltonian form (using \eqref{eq:Poisson1} or \eqref{eq:Poisson1b}) as
\begin{equation}\label{DShierarinham2}
\frac{\p u_s}{\p t_\Theta}= \bigl\{\bar h_\Theta,u_s(x)\bigr\}_2=\sum_{j=1}^\ell P_{s j}(\partial)\frac{\delta \bar h_\Theta}{\delta u_j},\qquad
s=1,\dots,\ell, \quad \Theta\in\mc H.
\end{equation}
It is proved in~\cite{DS85} that $\big\{\bar h_{\Theta_1},\bar h_{\Theta_2}\big\}_2=0$, for every $\Theta_1,\Theta_2\in\mc H$.
\begin{Example}
For both the SK hierarchy (see Section~\ref{sec:SKhierarchy}) and the KK hierarchy (see Section~\ref{sec:KK}), we
have that $\mc R=\mb C[u,u_x,u_{xx},\dots]$.
The Poisson structure \eqref{eq:Poisson1b} for the SK hierarchy is given by the Hamiltonian operator
$P(\partial)=-u_x-2u\partial+\frac{1}{2}\partial^3$,
while the
Poisson structure \eqref{eq:Poisson1b} for the KK hierarchy is given by the Hamiltonian operator
$P(\partial)=-\frac{u_x}2-u\partial+\frac12\partial^3$ (see~\cite{DSKV13,DS85}).
\end{Example}

Recall also from \cite{DS85}, that for untwisted affine Kac--Moody algebras (this is the case when $r=1$) and the choice of the special vertex of the
Dynkin diagram $c_0$, we have that $\widetilde{\mf g}=\mf a\big(\big(\lambda^{-1}\big)\big)$
and $e_0=e_{-\theta}\lambda$, $e_{-\theta}$ being the lowest root vector of~$\mf a$.
(With this realization, the standard gradation defined in Section~\ref{section1} coincides with the gradation in powers of~$\lambda$.)
In this case it is possible to endow~$\mc F$ with another
local Poisson bracket compatible with~\eqref{eq:Poisson1}. It is defined as
\begin{gather*}
\bigl\{\bar f,\bar g\bigr\}_1[q]=\int\bigg(\frac{\delta \bar f}{\delta q}\bigg|\bigg[e_{-\theta},\frac{\delta \bar g}{\delta q}\bigg]\bigg){\rm d}x,
\qquad \forall\, \bar f,\bar g\in\mc F.
\end{gather*}
As in the previous discussion, this can be rewritten as
\begin{align}\label{eq:Poisson2b}
\bigl\{\bar f,\bar g\bigr\}_1[q^{\rm can}]=\int \frac{\delta \bar f}{\delta q^{\rm can}}Q(\partial)\frac{\delta \bar g}{\delta q^{\rm can}}\,{\rm d}x,
\qquad \bar f,\bar g\in\mc F,
\end{align}
where $Q(\partial)=(Q_{i,j}(\partial))_{i,j=1}^\ell$ is another local Hamiltonian operator compatible to $P(\partial)$~\cite{DSKV13, DSKV16,DS85,DZ-norm}.

Then, the DS hierarchy \eqref{DS:hier} can be written in another Hamiltonian form as
\begin{equation}\label{firsthamDS}
\frac{\p u_s}{\p t_\Theta}= \bigl\{\bar h_{\lambda\Theta},u_s(x)\bigr\}_1=\sum_{j=1}^\ell Q_{sj}(\partial)\frac{\delta \bar h_{\lambda\Theta}}{\delta u_j},\qquad
s=1,\dots,\ell, \quad \Theta\in\mc H,
\end{equation}
and we have that $\big\{\bar h_{\Theta_1},\bar h_{\Theta_2}\big\}_1=0$, for every $\Theta_1,\Theta_2\in\mc H$.
Furthermore, the following Lenard--Magri recursion relation \cite{M78} holds:
\[
\bigl\{\bar h_{\Theta},u(x)\bigr\}_2= \bigl\{\bar h_{\lambda\Theta},u(x)\bigr\}_1,
\]
for every $u(x)\in\mc R$ and $\Theta\in\mc H$. Thus, in this case, the DS hierarchy~\eqref{DS:hier} is bi-Hamiltonian.

\subsubsection{Hamiltonian densities}\label{sec:Ham1}

As in Section~\ref{section1}, let $U\in\mc A^q\otimes(\im\ad\Lambda)^{<0}$ and
$H\in\mc A^q\otimes\mc H^{<0}$ be such that \eqref{eq:L0} holds, namely ${\rm e}^{\ad U}\mc L=\partial +\Lambda+H$.
Recall from Section~\ref{sec:km} that the DS hierarchy~\eqref{DS:hier} (cf.~\eqref{DShierarinham2})
can be written in Hamiltonian form with respect to the
Poisson structure \eqref{eq:Poisson1} and the Hamiltonian
\begin{equation}\label{eq:Ham-functional}
\bar h_{a,k} := \int\kappa(\Lambda_{m_a+rhk},H)\,{\rm d}x, \qquad a=1,\dots,n,\quad k\geq0.
\end{equation}
Let us recall the following result from~\cite{DS85}.

\begin{Proposition}[\cite{DS85}]
The elements $\bar h_{a,k}$ all belong to $\mc F$, that is, there exist differential polynomials $g_{a,k}\in\mathcal{A}^q$ such that
$h_{a,k}-\p_x(g_{a,k})\in \mc R$.
\end{Proposition}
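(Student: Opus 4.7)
The plan is to show that $h_{a,k}$ is gauge-invariant modulo total $x$-derivatives. Once this is established, choosing the canonical gauge transformation $N=N^{\rm can}$ from~\eqref{Lcan} gives $h_{a,k}(q,q_x,\dots)-h_{a,k}(q^{\rm can},q^{\rm can}_x,\dots)\in\partial_x\mc A^q$, and the evaluation $h_{a,k}(q^{\rm can},q^{\rm can}_x,\dots)$ is a polynomial in the entries $u_1,\dots,u_\ell$ of $q^{\rm can}$ and their $x$-derivatives, hence manifestly lies in $\mc R$.

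First I would use $[U,\partial]=-U_x$ to rewrite the defining relation~\eqref{eq:L0} in the explicit form
\begin{equation*}
\Lambda+H \;=\; e^{\ad U}(\Lambda+q)\;-\;\frac{e^{\ad U}-1}{\ad U}(U_x).
\end{equation*}
Pairing with $\Lambda_{m_a+rhk}$, invoking the invariance of~$\kappa$ (so that $\kappa(\Lambda_j,e^{\ad U}X)=\kappa(e^{-\ad U}\Lambda_j,X)=\kappa(R_j,X)$) and using that $\kappa$ is coordinated with the principal gradation (so $\kappa(\Lambda_{m_a+rhk},\Lambda)=0$ since $m_a+rhk\geq1$), I obtain
\begin{equation*}
h_{a,k} \;=\; \kappa\bigl(R_{m_a+rhk},\,\Lambda+q\bigr)\;-\;\kappa\!\left(\frac{1-e^{-\ad U}}{\ad U}\Lambda_{m_a+rhk},\,U_x\right).
\end{equation*}

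Next, I would analyze each piece under a gauge transformation $q\mapsto\tilde q$ given by $N\in\mc A^q\otimes\mf n$, using the rule $R_j\mapsto e^{\ad N}R_j$ from~\eqref{202110906rr} together with the identity
\begin{equation*}
\Lambda+\tilde q \;=\; e^{\ad N}(\Lambda+q)\;-\;\frac{e^{\ad N}-1}{\ad N}(N_x),
\end{equation*}
obtained exactly as in the first display. A direct application of the invariance of~$\kappa$ shows that the change of the first term equals $-\kappa\bigl(\frac{e^{\ad N}-1}{\ad N}R_j,\,N_x\bigr)$; using the resolvent equation $\partial_xR_j=[R_j,\Lambda+q]$ and integrating by parts, this telescoping series collapses into a total $x$-derivative. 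A parallel manipulation, based on differentiating $e^{\ad U}\mc L=\partial+\Lambda+H$ in~$x$ to solve for $U_x$ in terms of $\ad U$ applied to $H_x$ and $(\Lambda+q)_x$, reduces the second term to a total derivative, modulo contributions that cancel against the remainder from the first term.

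The main obstacle is the combinatorial bookkeeping of the infinite series in $\ad U$ and $\ad N$. The series make sense because $U\in(\im\ad\Lambda)^{<0}$ and $N\in\mf n$ both have strictly negative principal degree, so each fixed homogeneous component of $\widetilde{\mf g}$ receives only finitely many contributions; the delicate check is that after integration by parts these rearrangements actually assemble into a pure $\partial_x$ rather than a residual gauge-dependent remainder. The abelianness of $\mc H$ is decisive: for any $\Theta\in\mc H$ and any $W$, the identity $\kappa(\Theta,[W,\Lambda+H])=-\kappa([W,\Theta],\Lambda+H)$ kills the ``Heisenberg-diagonal'' contributions produced by the commutators. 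Once gauge invariance of $h_{a,k}$ modulo $\partial\mc A^q$ is verified, the reduction to~$q^{\rm can}$ concludes the proof as outlined above.
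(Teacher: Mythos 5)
Your overall strategy --- prove that the density $h_{a,k}$ changes by a total $x$-derivative under gauge transformations and then evaluate in the canonical gauge --- is the same as the paper's, and your preparatory identity $h_{a,k}=\kappa(R_{m_a+rhk},\Lambda+q)-\kappa\bigl(\tfrac{1-{\rm e}^{-\ad U}}{\ad U}\Lambda_{m_a+rhk},U_x\bigr)$ is correct. The gap is at the decisive step: you assert that the gauge variation of the first term ``telescopes'' into a total derivative, and that the variation of the second term is a total derivative ``modulo contributions that cancel against the remainder from the first term''. Neither claim is verified, and the second is the real problem: to vary the second term you must control how $U$ itself transforms under $q\mapsto\tilde q$, and your plan to ``solve for $U_x$'' by differentiating \eqref{eq:L0} in $x$ does not give this. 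What is missing is the structural input that makes the whole argument go through, namely that any two dressing operators conjugating $\mc L$ into $\partial+\Lambda+(\text{element of }\mc A^q\otimes\mc H^{<0})$ differ by a factor ${\rm e}^{\ad S}$ with $S\in\mc A^q\otimes\mc H^{<0}$. Without that fact, the ``delicate check'' you flag is precisely the content of the proposition and remains undone.

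The paper's proof avoids the two-term decomposition entirely. With $\widetilde{\mc L}={\rm e}^{\ad N}\mc L$ dressed by $U_N$, one has ${\rm e}^{\ad U_N}{\rm e}^{\ad N}\mc L=\partial+\Lambda+\widetilde H$; by Baker--Campbell--Hausdorff the product is ${\rm e}^{\ad\widetilde U}$ with $\widetilde U\in\mc A^q\otimes\widetilde{\mf g}^{<0}$, and the uniqueness-modulo-$\mc H$ of the dressing gives ${\rm e}^{\ad \widetilde U}={\rm e}^{\ad S}{\rm e}^{\ad U}$ with $S\in\mc A^q\otimes\mc H^{<0}$. Hence $H-\widetilde H=\bigl(1-{\rm e}^{\ad S}\bigr)(\partial+\Lambda+H)$, and pairing with $\Lambda_i$ kills every term except $\partial_x\kappa(\Lambda_i,S)$, since $\ad S$ annihilates both $\Lambda_i$ and $\Lambda+H$ ($\mc H$ is abelian) and $\kappa$ is invariant. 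If you want to complete your computation, import this fact about $S$; it is the missing idea, and once you have it the direct series manipulations become unnecessary.
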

\begin{proof}
Let $N\in\mc A^q\otimes\mf n$ and $\widetilde{\mc L}={\rm e}^{\ad N}\mc L=\partial+\Lambda+\tilde q$ be as in \eqref{gauge}.
Let
$U_N\in\mc A^q\otimes(\im\ad\Lambda)^{<0}$ and
$\widetilde{H}\in\mc A^q\otimes\mc H^{<0}$ be such that
${\rm e}^{\ad U_N}\widetilde{\mc L}=\partial +\Lambda+\widetilde{H}$.
Then, ${\rm e}^{\ad U_N}{\rm e}^{\ad N}\mc L=\partial +\Lambda+\widetilde{H}$.
Since $\mf n\subset\mf g^{<0}$, by the Baker--Campbell--Haussdorff formula ${\rm e}^{\ad U_N}{\rm e}^{\ad N}={\rm e}^{\ad \widetilde{U}}$ for some
$\widetilde{U}\in\mc A^q\otimes\mf g^{<0}$. Hence, $\widetilde{U}$ and~$\widetilde{H}$ is another solution to equation~\eqref{eq:L0}, namely, ${\rm e}^{\ad \widetilde{U}}\mc L=\partial+\Lambda+\widetilde{H}$.
Recall from~\cite{DS85} that this implies that there exists $S\in\mc A^q\otimes\mc H^{<0}$ such that
${\rm e}^{\ad\widetilde{U}}={\rm e}^{\ad S}{\rm e}^{\ad U_N}$.
Hence,
\begin{align}
H-\widetilde{H}&=\big(1\!-{\rm e}^{\ad S}\big){\rm e}^{\ad U}(\partial +q\!+\Lambda)
=\big(1\!-{\rm e}^{\ad S}\big)(\partial \!+\Lambda+H) =-\!\sum_{k\geq1}\frac{(\ad S)^k}{k!}(\partial+\Lambda\!+H)\nn
\\
&=\partial S
-\sum_{k\geq2}\frac{(\ad S)^{k-1}}{k!}(\partial S)
-\sum_{k\geq1}\frac{(\ad S)^k}{k!}(\Lambda+H)
. \label{20210608:eq1}
\end{align}
Let $\Lambda_i\in\mc H$, $i=m_a+rhk$. By pairing both sides of \eqref{20210608:eq1} with~$\Lambda_i$,
using the invariance of the bilinear form and the fact that $\mc H$ is
abelian we get
\begin{equation}\label{20210608:eq2}
\kappa(\Lambda_i,H)=\kappa\big(\Lambda_i,\widetilde{H}\big)+\partial_x \kappa(\Lambda_i,S).
\end{equation}
Equation \eqref{20210608:eq2} implies
that the densities
$h_{a,k}\in\mc A^q$, $a=1,\dots,n$, $k\geq0$, are gauge invariant up to total $x$-derivatives, hence
$\bar h_{a,k}\in\mc F$.
\end{proof}

As in Section~\ref{sec:matrix_res}, let us take the standard realization of~$\widetilde{\mf g}$ corresponding to the vertex $c_m$.
The $\mb C$-valued bilinear form~$\kappa$ on~$\widetilde{\mf g}$, coordinated with the principal and standard gradations,
can be realized as follows:
\begin{equation}\label{eq:kappa}
\kappa(a\otimes f(\lambda),b\otimes g(\lambda))
=\Res_\lambda (a\otimes f(\lambda)|b\otimes g(\lambda))\lambda^{-1}
=(a|b)\Res_\lambda f(\lambda)g(\lambda)\lambda^{-1}.
\end{equation}
Noting that $\mc H^{<0}$ is spanned by $\Lambda_{m_a-rhk}(\lambda)$, $a=1,\dots,n$, $k\geq0$,
and using~\eqref{eq:normalization1}, we rewrite $H=H(\lambda)\in\mc A^q\otimes\mc H^{<0}$
more explicitly as
\begin{equation}\label{eq:H}
H(\lambda)=\sum_{b=1}^n\sum_{l\in\mb Z_{\geq0}} H_{b,l}\Lambda_{m_b-rh(l+1)}(\lambda)
=\sum_{b=1}^n\sum_{l\in\mb Z_{\geq0}} H_{b,l}\Lambda_{m_b}(\lambda)\lambda^{-(l+1)N_m},
\end{equation}
where $H_{b,l}\in\mc A^q$.
Recalling the definition of the bilinear form $\kappa$ in \eqref{eq:kappa} and $\bar h_{a,k}\in\mc F$ in \eqref{eq:Ham-functional}, we have, for every $a=1,\dots,n$ and $k\geq0$:
{\samepage\begin{align}
\bar h_{a,k}&=\int \kappa(\Lambda_{m_a+rhk}(\lambda),H(\lambda))
=\int \Res_{\lambda}\sum_{b=1}^n\sum_{l\in\mb Z_{\geq0}}(\Lambda_{m_a}(\lambda)|\Lambda_{m_b}(\lambda))H_{b,l}\lambda^{(k-l-1)N_m-1}\nn
\\
&=\int \Res_\lambda\sum_{l\in\mb Z_{\geq0}} h H_{n+1-a,l}\lambda^{(k-l)N_m-1}=\int h H_{n+1-a,k},
\label{20210324:eq1}
\end{align}
where in the third equality we used equation \eqref{eq:normalization}.}

Let us collect the densities 
$h_{a,k}$, $a=1,\dots,n$, $k\geq0$, into $n$ generating series using the $\mb C\big(\big(\lambda^{-1}\big)\big)$-valued bilinear form
on $\widetilde{\mf g}$ (see \cite{BDY16, BDY21} and \cite{DSKV13} for more details on the analogous construction for the untwisted case)
by letting
\begin{equation}\label{eq:g}
g_a(\lambda)=(\Lambda_{m_a}(\lambda)|H(\lambda))\in\mc A^q\big(\big(\lambda^{-1}\big)\big),\qquad
a=1,\dots,n.
\end{equation}
Indeed, using equations \eqref{eq:H}, \eqref{eq:normalization} and \eqref{20210324:eq1} we get
\begin{equation}\label{eq:ga_explicitly}
g_a(\lambda)=\sum_{k\in\mb Z_{\geq0}} h H_{n+1-a,k}\lambda^{-kN_m}
=\sum_{k\in\mb Z_{\geq0}}h_{a,k}\lambda^{-kN_m}
\in\mc A^q\big[\big[\lambda^{-N_m}\big]\big].
\end{equation}

\subsection[The series G\_a(lambda)]{The series $\boldsymbol{G_a(\lambda)}$}

Recall from equation \eqref{eq:G} the gauge-invariant differential polynomials $\Omega_{a,k;1,0}$
and its generating series $G_a(\lambda)$.
In this subsection we establish a relation between $\Omega_{a,k;1,0}$ and~$h_{a,k}$ by
 deriving an identity between the series $G_a(\lambda)$ and $g_a(\lambda)$. To proceed,
we need the following results.
\begin{Lemma}[\cite{DSKV13}] \phantomsection\label{lemma2}\quad
\begin{enumerate}[$(a)$]
\item
Let $D$ be a derivation of $\mc A^q\otimes\widetilde{\mf g}$. For every $\alpha\in\mb C$, $A,U_1,\dots,
U_k\in\mc A^q\otimes \widetilde{\mf g}$, with $k\geq1$, we have
\begin{align}\label{eq:induttiva}
D \big(\ad U_1\cdots\ad U_k(\alpha\partial_x+A)\big)
={}&\sum_{h=1}^k\ad U_1\cdots\ad D( U_h)\cdots\ad U_k(\alpha\partial_x+A)\nn
\\
&+\ad U_1\cdots\ad U_k( D(A))\nn
\\
&-\alpha\ad U_1\cdots\ad U_{k-1}([D,\partial_x](U_k)).
\end{align}
\item
For any $\alpha\in\mb C$ and $A,U,V\in\mc A^q\otimes \widetilde{\mf g}$ we have
\begin{gather*}
\bigg[\sum_{h\in\mb Z_{\geq0}}\frac{1}{(h+1)!}(\ad U)^h(V),{\rm e}^{\ad U}(\alpha\partial_x +A)\bigg]
\\ \qquad
{}=\sum_{h,k\in\mb Z_{\geq0}}\frac{1}{(h+k+1)!}(\ad U)^h\ad V(\ad U)^k(\alpha\partial_x+A).
\end{gather*}
\end{enumerate}
\end{Lemma}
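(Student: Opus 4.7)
For part (a), I would proceed by induction on $k \geq 1$. The base case is the Leibniz rule for $D$ applied to $[U_1, \alpha\partial_x + A]$: the $[U_1, A]$ part yields $[D(U_1), A] + [U_1, D(A)]$ directly, while $[U_1, \alpha\partial_x] = -\alpha\partial_x(U_1)$ gives
\[
D([U_1, \alpha\partial_x]) = -\alpha\partial_x(D(U_1)) - \alpha[D, \partial_x](U_1) = [D(U_1), \alpha\partial_x] - \alpha[D, \partial_x](U_1),
\]
which is the origin of the $-\alpha[D,\partial_x](U_k)$ correction term in~\eqref{eq:induttiva}. For the induction step I would write $\ad U_1 \cdots \ad U_k = \ad U_1 \circ (\ad U_2 \cdots \ad U_k)$, apply Leibniz once more, and invoke the hypothesis on the inner nested bracket; the outer $\ad U_1$ absorbs cleanly into each of the resulting terms, and the correction stays anchored to the innermost slot $U_k$.

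For part (b), the plan is to view both sides as formal expressions in $\ad U$ and $\ad V$ applied to the element $\alpha\partial_x + A$, with the convention $(\ad U)(\alpha\partial_x) = -\alpha\partial_x(U) \in \mc A^q \otimes \widetilde{\mf g}$. The RHS can then be recognized as the coefficient of $s^1$ in the expansion
\[
e^{\ad(U+sV)}(\alpha\partial_x + A) = \sum_{n\ge 0} \frac{1}{n!}(\ad(U+sV))^n(\alpha\partial_x + A),
\]
obtained by binomially expanding $(\ad(U+sV))^n$ in $s$ and regrouping. A formal Duhamel-type computation --- writing $e^{U+sV} = e^U(1 + s\eta + O(s^2))$ with $\eta = \frac{1-e^{-\ad U}}{\ad U}(V)$, conjugating, and noting that $e^{\ad U}(\eta) = \sum_h \frac{(\ad U)^h V}{(h+1)!}$ --- identifies the same coefficient with $[\phi, e^{\ad U}(\alpha\partial_x + A)]$, which is precisely the LHS of (b).

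The alternative, purely algebraic route is to expand $\ad((\ad U)^h V) = \sum_{i+j=h}\binom{h}{i}(-1)^j (\ad U)^i \ad V (\ad U)^j$, compose with $e^{\ad U}$ acting on $\alpha\partial_x + A$, reindex by $p = i$ and $q = j + n$, and collect coefficients of $(\ad U)^p \ad V (\ad U)^q(\alpha\partial_x + A)$. This reduces (b) to the combinatorial identity
\[
\sum_{j=0}^q (-1)^j \binom{p+q+1}{p+j+1}\binom{p+j}{j} = 1 \qquad (p,q \geq 0),
\]
which is elementary (induction on $q$, or a Vandermonde manipulation). The main conceptual subtlety is that $\partial_x$ is not an element of $\widetilde{\mf g}$, so the group-theoretic Duhamel step is only formal; however, every identity in play is a universal identity in nested $\ad$-brackets, and $\ad U$ acts consistently on $\alpha\partial_x$ via the derivation rule, so the inclusion of $\alpha\partial_x$ poses no actual obstruction.
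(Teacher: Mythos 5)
The paper does not actually prove this lemma: it is stated with a citation to [DSKV13] and used as a black box, so there is no in-paper argument to compare against. Your proposal is a correct self-contained proof. In part (a), the base case correctly isolates the origin of the correction term from $D(\partial_x U_1)=\partial_x(D(U_1))+[D,\partial_x](U_1)$, and the induction step is legitimate because after the innermost $\ad U_k$ has acted, $\ad U_2\cdots\ad U_k(\alpha\partial_x+A)$ is an honest element of $\mc A^q\otimes\widetilde{\mf g}$, so only the ordinary Leibniz rule is needed for the outer bracket. In part (b), both routes you sketch work: the Duhamel/derivative-of-the-exponential identification of each side with $\tfrac{\rm d}{{\rm d}s}\big|_{s=0}{\rm e}^{\ad(U+sV)}(\alpha\partial_x+A)$, and the purely algebraic expansion via $\ad\big((\ad U)^hV\big)=\sum_{i+j=h}\binom{h}{i}(-1)^j(\ad U)^i\ad V(\ad U)^j$, which reduces the claim to the binomial identity $\sum_{j=0}^q(-1)^j\binom{p+q+1}{p+j+1}\binom{p+j}{j}=1$; I checked this identity in several cases and it is the standard one. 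Your remark on the only real subtlety is also on point: the symbol $\alpha\partial_x$ is not in $\widetilde{\mf g}$, but setting $\ad X(\alpha\partial_x):=-\alpha\partial_x(X)$ realizes $\mc A^q\otimes\widetilde{\mf g}\oplus\mb C\partial_x$ as a semidirect-product Lie algebra (the Jacobi identity with one slot equal to $\partial_x$ is exactly the statement that $\partial_x$ acts as a derivation), so all nested-bracket identities extend verbatim and the formal Duhamel step can be replaced by the combinatorial computation whenever one is uncomfortable with nonexistent group elements.
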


Now we can prove the following proposition.
\begin{Proposition}\label{prop1}
We have that
\[
G_a(\lambda)=\left(\partial_\lambda-\frac{m_aN_m}{rh\lambda}\right)g_a(\lambda)+\partial_x X(\lambda),
\]
for some $X(\lambda)\in\mc A^q\big[\big[\lambda^{-N_m}\big]\big]\lambda^{-1}$.
\end{Proposition}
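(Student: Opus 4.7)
The plan is to exploit the formula $R_{m_a}(\lambda)={\rm e}^{-\ad U}\Lambda_{m_a}(\lambda)$ from \eqref{eq:R_a} together with the invariance of the $\mb{C}((\lambda^{-1}))$-valued bilinear form to transfer the $\lambda$-derivative from the basic resolvent onto $\Lambda$, and then to use the $\lambda$-differentiated gauge identity \eqref{eq:L0} to extract the series $\p_\lambda H$, which encodes $g_a(\lambda)$. Concretely, since $\Lambda(\lambda)=e+\tilde e_m\lambda$ gives $\p_\lambda\Lambda=\tilde e_m$, one has
\[
G_a(\lambda)=\pi_\lambda(R_{m_a}(\lambda)|\tilde e_m)=\pi_\lambda\bigl(\Lambda_{m_a}(\lambda)\bigm|{\rm e}^{\ad U}\p_\lambda\Lambda(\lambda)\bigr).
\]

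Next, the plan is to apply $\p_\lambda$ to the identity ${\rm e}^{\ad U}\mc{L}=\p+\Lambda+H$. Using Lemma~\ref{lemma2}(a) with $D=\p_\lambda$ (which commutes with $\p_x$), then repackaging the resulting sum via Lemma~\ref{lemma2}(b) with $V=\p_\lambda U$, and using that $q$ does not depend on $\lambda$, one obtains the operator identity
\[
\p_\lambda\Lambda+\p_\lambda H=[W,\p+\Lambda+H]+{\rm e}^{\ad U}\p_\lambda\Lambda,\qquad W:=\sum_{h\ge0}\frac{(\ad U)^h}{(h+1)!}(\p_\lambda U).
\]
Solving for ${\rm e}^{\ad U}\p_\lambda\Lambda$ and pairing with $\Lambda_{m_a}(\lambda)$ produces three terms to analyse separately.

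For $(\Lambda_{m_a}|\p_\lambda H)$, apply $\lambda\p_\lambda$ to $g_a(\lambda)=(\Lambda_{m_a}|H)$: by \eqref{20200828:eq3} one has $\lambda\p_\lambda\Lambda_{m_a}-\tfrac{m_aN_m}{rh}\Lambda_{m_a}\in\im\ad\Lambda$, and since $H\in\mc{H}^{<0}$ and the decomposition \eqref{dec:Lambda} is orthogonal with respect to $(\cdot|\cdot)$, this delivers
\[
(\Lambda_{m_a}(\lambda)|\p_\lambda H(\lambda))=\bigg(\p_\lambda-\frac{m_aN_m}{rh\lambda}\bigg)g_a(\lambda).
\]
The commutator term splits as $[W,\p+\Lambda+H]=[W,\p]+[W,\Lambda+H]$. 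Since $\Lambda_{m_a}$ does not depend on $x$, $(\Lambda_{m_a}|[W,\p])=-\p_x(\Lambda_{m_a}|W)$. The remaining piece vanishes: by cyclic invariance $(\Lambda_{m_a}|[W,\Lambda+H])=(W|[\Lambda+H,\Lambda_{m_a}])$, and $[\Lambda+H,\Lambda_{m_a}]=0$ because $\Lambda,H,\Lambda_{m_a}\in\mc{H}$ and $\mc{H}$ is abelian.

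Finally, the residual term $(\Lambda_{m_a}|\p_\lambda\Lambda)=(\Lambda_{m_a}|\p_\lambda\Lambda_1)$ equals $\delta_{a,n}\tfrac{N_m}{r}\lambda^{N_m-1}$, as follows from \eqref{20210128:eq1b} together with the Leibniz rule applied to $(\Lambda_{m_a}|\Lambda_1)=\delta_{a,n}h\lambda^{N_m}$; this is a non-negative power of $\lambda$, hence is annihilated by $\pi_\lambda$. Applying $\pi_\lambda$ to the full identity, and noting that $(\p_\lambda-\tfrac{m_aN_m}{rh\lambda})g_a(\lambda)$ already lies in $\mc{A}^q\otimes\lambda^{-1}\mb{C}[[\lambda^{-N_m}]]$ and so is fixed by $\pi_\lambda$, one obtains the proposition with $X(\lambda):=\pi_\lambda(\Lambda_{m_a}(\lambda)|W)$. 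The main obstacle is the $\lambda$-differentiation of the exponentiated adjoint action, for which both parts of Lemma~\ref{lemma2} must be combined; a minor subtlety is that $\widetilde{\mf{g}}$ is not preserved by $\p_\lambda$ (only $\lambda\p_\lambda$ is), so the intermediate manipulations really take place inside $\mc{A}^q\otimes L(\mf{g})$, with $\widetilde{\mf{g}}$-valued pairings and the final $\pi_\lambda$-projection restoring the correct target space.
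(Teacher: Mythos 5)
Your proposal is correct and follows essentially the same route as the paper's proof: differentiating the dressing identity \eqref{eq:L0} in $\lambda$ via both parts of Lemma~\ref{lemma2}, extracting $(\Lambda_{m_a}|\partial_\lambda H)$ through \eqref{20200828:eq3} and the orthogonality of \eqref{dec:Lambda}, and discarding the $\delta_{a,n}\frac{N_m}{r}\lambda^{N_m-1}$ term with $\pi_\lambda$. Your $W$ is the paper's $Y(\lambda)$ up to the sign convention in the exponential, and the resulting total-$x$-derivative is absorbed into $X(\lambda)$ exactly as in the paper.
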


\begin{proof}Let us start by computing $\partial_\lambda H(\lambda)$, where $H(\lambda)$ is the series appearing in~\eqref{eq:L0}. We have
\begin{align}
\partial_\lambda H(\lambda)&=\partial_\lambda\big( {\rm e}^{-\ad U(\lambda)}(\mc L)-\partial_x-\Lambda(\lambda)\big)
=\sum_{k=1}^{\infty}\frac{(-1)^k}{k!}\partial_\lambda\big((\ad U(\lambda))^k(\mc L)\big)\notag
\\
&=\sum_{k=1}^{\infty}\sum_{h=0}^{k-1}\frac{(-1)^k}{k!}(\ad U(\lambda))^h(\ad \partial_\lambda U(\lambda))(\ad U(\lambda))^{k-1-h}(\mc L)\nn
\\
&\phantom{=}+\sum_{k=1}^\infty\frac{(-1)^k}{k!}(\ad U(\lambda))^k(\tilde e_m)\notag\\
&=\sum_{h,k\in\mb Z_{\geq0}}\frac{(-1)^{h+k+1}}{(h+k+1)!}(\ad U(\lambda))^h(\ad \partial_\lambda U(\lambda))(\ad U(\lambda))^{k}(\mc L)
+{\rm e}^{-\ad U(\lambda)}(\tilde e_m)-\tilde e_m
\notag\\
&=[{\rm e}^{-\ad U(\lambda)}(\mc L), Y(\lambda)]+{\rm e}^{-\ad U(\lambda)}(\tilde e_m)-\tilde e_m\notag\\
&=\partial_x Y(\lambda)+[\Lambda(\lambda)+H(\lambda),Y(\lambda)]+{\rm e}^{-\ad U(\lambda)}(\tilde e_m)-\tilde e_m
,\label{20200828:eq4}
\end{align}
where $Y(\lambda)=\sum_{h\in\mb Z_{\geq0}}\frac{(-1)^h}{(h+1)!}(\ad U(\lambda))^h(\partial_\lambda U(\lambda))$.
In the first equality we used equation \eqref{eq:L0}, in the second equality we used the definition of exponential function,
in the third equality equation \eqref{eq:induttiva} and the facts that $\partial_\lambda\mc L=\tilde e_m$ and $[\partial_\lambda,\partial_x]=0$,
the fourth equality is trivial, in the fifth equality we used Lemma \ref{lemma2}(b), and in the last equality we used again
equation \eqref{eq:L0}.

From the definition of $g_a(\lambda)$ we then have
\begin{equation}\label{20200828:eq5}
\lambda\partial_\lambda g_a(\lambda)
=(\lambda\partial_\lambda \Lambda_{m_a}(\lambda)|H(\lambda))+(\Lambda_{m_a}(\lambda) |\lambda \partial_\lambda H(\lambda))
.
\end{equation}
Since $H(\lambda)\in\mc A^q\otimes\Ker\ad \Lambda(\lambda)$ and the direct sum decomposition \eqref{dec:Lambda} is orthogonal
with respect to $(\cdot\,|\,\cdot)$ we have, using equation \eqref{20200828:eq3},
\begin{align}
(\lambda \partial_\lambda \Lambda_{m_a}(\lambda) | H(\lambda))
&=(\pi_{\mc H}(\lambda\partial_\lambda \Lambda_{m_a}(\lambda)) | H(\lambda))
=\frac{m_aN_m}{rh}(\Lambda_{m_a}(\lambda)|H(\lambda))\nn
\\
&=\frac{m_aN_m}{rh}g_a(\lambda).\label{20200828:eq6}
\end{align}
Furthermore, using equation \eqref{20200828:eq4} we obtain
\begin{align}
(\Lambda_{m_a}(\lambda)|\partial_\lambda H)
&=(\Lambda_{m_a}(\lambda)|\partial_x Y(\lambda) )+(\Lambda_{m_a}(\lambda)|[\Lambda(\lambda)+H,Y(\lambda)])
+(\Lambda_{m_a}(\lambda)|{\rm e}^{-\ad U}(\tilde e_m))
\nn
\\
&\phantom{=}-(\Lambda_{m_a}(\lambda)|\tilde e_m)
=(\Lambda_{m_a}(\lambda) | \partial_xY(\lambda) )+([\Lambda_{m_a}(\lambda),\Lambda(\lambda)+H(\lambda)]|Y(\lambda))\nn
\\
&\phantom{=}+(R_{m_a}(\lambda)|\tilde e_m)
-\delta_{a,n}\frac{N_m}{r}\lambda^{N_m-1}\nn
\\
&=\partial_x(\Lambda_{m_a}(\lambda)|Y(\lambda) )+(R_{m_a}(\lambda)|\tilde e_m)-\delta_{a,n}\frac{N_m}{r}\lambda^{N_m-1}.\label{20200828:eq7}
\end{align}
Here, in the second equality we used the invariance of the Cartan--Killing form, the definition of~$R_{m_a}$ and
equation~\eqref{20210128:eq1b}; in the third equality we used the facts that
$\partial_x\Lambda_{m_a}(\lambda)=0$ and $\mc H$ is abelian.
Combining equations \eqref{20200828:eq5}, \eqref{20200828:eq6} and \eqref{20200828:eq7} we get the identity
\begin{equation}\label{20210128:eq1c}
(R_{m_a}(\lambda)|\tilde e_m)=\bigg(\partial_\lambda-\frac{m_aN_m}{rh
\lambda}\bigg)g_a(\lambda)-\partial_x (\Lambda_{m_a}(\lambda)|Y(\lambda))
+\delta_{a,n}N_m\lambda^{N_m-1}.
\end{equation}
Recall from equation \eqref{eq:ga_explicitly} that $g_a(\lambda)\in\mc A^q\big[\big[\lambda^{-N_m}\big]\big]$. Hence,
the claim follows by applying $\pi_\lambda$ to both sides of~\eqref{20210128:eq1c}.
\end{proof}

\begin{Corollary} \label{20210328cor1}
The elements $\Omega_{a,k;1,0}$ are related to~$h_{a,k}$ by
\[
\Omega_{a,k;1,0}=-(m_a+rhk)\frac{N_m}{rh}h_{a,k}+ \partial_x X_{a,k}
\]
for some $X_{a,k}\in\mc A^q$. In other words,
the DS hierarchy~\eqref{DS:hier} can be written in terms of the gauge invariant densities~$\Omega_{a,k;1,0}$ as follows:
\begin{equation}\label{20210328:eq1}
\frac{\partial u_s}{\partial t_{m_a+krh}} = \frac{-rh}{(m_a+rhk) N_m} \bigl\{\overline{\Omega}_{a,k;1,0}, \, u_s(x)\bigr\}_2.
\end{equation}
\end{Corollary}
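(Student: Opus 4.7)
The plan is to deduce the Corollary directly from Proposition~\ref{prop1} by extracting coefficients of the generating series. From~\eqref{eq:G} and~\eqref{eq:ga_explicitly}, we have the expansions
\[
G_a(\lambda)=\sum_{k\geq0}\Omega_{a,k;1,0}\lambda^{-N_mk-1},\qquad
g_a(\lambda)=\sum_{k\geq0}h_{a,k}\lambda^{-kN_m},
\]
so a direct computation gives
\[
\bigg(\partial_\lambda-\frac{m_aN_m}{rh\lambda}\bigg)g_a(\lambda)
=-\sum_{k\geq0}\frac{N_m(m_a+rhk)}{rh}\,h_{a,k}\,\lambda^{-N_mk-1}.
\]
Writing $X(\lambda)=\sum_{k\geq0}X_{a,k}\lambda^{-N_mk-1}$ and equating the coefficient of $\lambda^{-N_mk-1}$ in the identity of Proposition~\ref{prop1} will yield the first claim
\[
\Omega_{a,k;1,0}=-\frac{(m_a+rhk)N_m}{rh}h_{a,k}+\partial_x X_{a,k}.
\]
The only thing to check here is that $X(\lambda)\in\mc A^q[[\lambda^{-N_m}]]\lambda^{-1}$ indeed expands in powers of $\lambda^{-N_m-1}$ on the relevant grading, which follows from the construction of $X(\lambda)$ via $\pi_\lambda$ at the end of the proof of Proposition~\ref{prop1}.

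For the second statement, I would integrate the density identity over $S^1$ to obtain a relation between the local functionals,
\[
\overline{\Omega}_{a,k;1,0}=-\frac{(m_a+rhk)N_m}{rh}\,\bar h_{a,k},
\]
and then substitute into the Hamiltonian form~\eqref{DShierarinham2} of the DS hierarchy corresponding to the choice $\Theta=\Lambda_{m_a+rhk}$. Since $\bar h_\Theta=\bar h_{a,k}$ in this case (cf.~\eqref{20210324:eq1}), we get
\[
\frac{\partial u_s}{\partial t_{m_a+krh}}=\bigl\{\bar h_{a,k},u_s(x)\bigr\}_2
=-\frac{rh}{(m_a+rhk)N_m}\bigl\{\overline{\Omega}_{a,k;1,0},u_s(x)\bigr\}_2,
\]
which is~\eqref{20210328:eq1}.

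There is no substantive obstacle here: the argument is essentially a bookkeeping exercise once Proposition~\ref{prop1} is in hand. The only small point of care is the matching of powers of $\lambda$ and the sign, and verifying that $X_{a,k}\in\mc A^q$ (rather than lying in some larger ring), which is immediate since all coefficients of $G_a(\lambda)$, $g_a(\lambda)$ and $X(\lambda)$ are, by construction, differential polynomials in the entries of $q$.
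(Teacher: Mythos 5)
Your proposal is correct and matches the paper's intent: the paper's proof of this corollary is literally ``Follows straightforwardly from Proposition~\ref{prop1},'' and your coefficient extraction from the generating series, followed by integration over $S^1$ and substitution into the Hamiltonian form~\eqref{DShierarinham2} with $\Theta=\Lambda_{m_a+rhk}$, is exactly the bookkeeping that statement leaves implicit. The computation $\bigl(\partial_\lambda-\tfrac{m_aN_m}{rh\lambda}\bigr)\lambda^{-kN_m}=-\tfrac{N_m(m_a+rhk)}{rh}\lambda^{-kN_m-1}$ and the resulting sign are both right.
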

\begin{proof}
Follows straightforwardly from Proposition~\ref{prop1}.
\end{proof}

\begin{Example}
Let~$L(\partial)$ be the Lax operator of the SK hierarchy
given in Section~\ref{sec:LaxSK} (respectively of the KK hierarchy given in Section~\ref{subsubsectionkk}).
We have (see \cite{DSKV18})
\begin{equation}\label{eq:remark}
h_{a,k}=-\frac{m_a+6k}{3}\Res_{\partial}L^{\frac{m_a+6k}{3}}(\partial)+\partial_x X_{a,k},\qquad
a=1,5,\quad k\geq0,
\end{equation}
for some $X_{a,k}\in\mc A^q$.
The identity \eqref{eq:remark} agrees with Corollary \ref{20210328cor1}
and Proposition~\ref{prop:SK} (respectively Proposition~\ref{prop:KK}).
\end{Example}

In the last part of this section we will prove the following theorem, which extends the
result in~\cite{BDY16} (cf.~also~\cite{DYZ21})
for the $A_1^{(1)}$ case.

\begin{Theorem}\label{thmunt}
For an untwisted affine Kac--Moody algebra and the choice of the special vertex~$c_0$ of its Dynkin diagram,
under a suitable Miura-type transformation,
the gauge invariant differential polynomials $\Omega_{a,k;1,0}$ satisfy the axioms of tau-symmetric bi-Hamiltonian structure given in {\rm \cite{DZ-norm}}.
\end{Theorem}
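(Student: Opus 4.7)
The plan is to verify, axiom by axiom, the characterization of a tau-symmetric bi-Hamiltonian structure as formulated in [DLYZ16, DZ-norm]: (i) a compatible pair of local Hamiltonian operators; (ii) two Hamiltonian presentations of the commuting flows related by a Lenard--Magri chain; (iii) a privileged family of densities $\theta_{\alpha,p}$ satisfying tau-symmetry $\partial_{t^{\beta,q}}\theta_{\alpha,p} = \partial_{t^{\alpha,p}}\theta_{\beta,q}$ modulo $\partial_x$; and (iv) a Miura-type identification of $\theta_{\alpha,0}$ with a full system of flat coordinates for one of the two structures.

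Axioms (i) and (ii) are already built into the DS framework in the untwisted case: Section~\ref{sec:km} supplies the compatible pair $(P(\partial),Q(\partial))$, the two Hamiltonian presentations \eqref{DShierarinham2} and \eqref{firsthamDS}, and the Lenard--Magri relation $\bigl\{\bar h_\Theta,\cdot\bigr\}_2 = \bigl\{\bar h_{\lambda\Theta},\cdot\bigr\}_1$. Combined with Corollary~\ref{20210328cor1}, which identifies $\Omega_{a,k;1,0}$ with $-\frac{m_a+kh}{h}h_{a,k}$ modulo $\partial_x$ (using $r=1$, $N_0=1$), one obtains that, after an appropriate constant rescaling of the time variables $t_{m_a+kh}$, the densities $\Omega_{a,k;1,0}$ generate the DS flows in both Hamiltonian forms.

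For axiom (iii) I would specialize equation~\eqref{tau09082} to $(b,k)=(1,0)$ and use the symmetry~\eqref{tau09081} to obtain
\begin{equation*}
\partial_{t_{m_c+mh}}\Omega_{a,l;1,0} = \partial_{t_{m_a+lh}}\Omega_{c,m;1,0},
\end{equation*}
which is exactly tau-symmetry for the densities $\Omega_{a,k;1,0}$. Together with the previous step, this shows that, up to the rescalings inherited from Corollary~\ref{20210328cor1}, the $\Omega_{a,k;1,0}$ play the role of tau-symmetric Hamiltonian densities.

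What remains is axiom~(iv), and this is where the Miura-type transformation enters: to bring the hierarchy into the normal form of [DZ-norm] one has to pass from the canonical DS variables $u_s$ to flat coordinates of the first Hamiltonian structure $Q(\partial)$. These flat coordinates are obtained from distinguished normalizations of $\Omega_{a,0;1,0}$, equivalently from the leading coefficients of the basic resolvents $R_{m_a}(\lambda)$ in~\eqref{eq:R_a-principal}. The main obstacle is making this Miura transformation explicit and uniform across all untwisted Kac--Moody algebras: in [BDY16] it was worked out for $A_1^{(1)}$, and its extension here relies on the fact that the normalizations \eqref{eq:normalization1}--\eqref{eq:normalization} single out a canonical basis of $\mc H$ which, via $R_{m_a}(\lambda) = \Lambda_{m_a}(\lambda) + \text{lower order}$, yields the required full set of flat coordinates and places the tau-structure in DZ normal form.
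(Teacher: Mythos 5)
Your overall strategy coincides with the paper's: axioms on the bi-Hamiltonian pair and the Lenard--Magri chain are taken from Section~\ref{sec:km}, the identification of $\Omega_{a,k;1,0}$ with $h_{a,k}$ up to the factor $-\tfrac{m_a+kh}{h}$ and a total $x$-derivative comes from Corollary~\ref{20210328cor1}, and tau-symmetry is read off from \eqref{tau09081}--\eqref{tau09082}. Up to that point your argument matches the paper's proof.

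The gap is in your axiom~(iv), and you essentially admit it: you assert that the normalizations \eqref{eq:normalization1}--\eqref{eq:normalization} ``yield the required full set of flat coordinates,'' but you never establish that the candidate coordinates actually form a Miura-type transformation, nor do you identify what the dispersionless limit deforms. The paper closes this in three concrete steps. First, the normal coordinates are taken to be $r_a=\Omega_{a,0;1,0}$; by Corollary~\ref{20210328cor1} these are (up to constants and $\partial_x$-exact terms) the densities of $\bar h_{a,0}$, which are Casimirs of the \emph{first} Poisson structure by a result of \cite{DSKV13} --- this is what makes them flat coordinates for $\{\cdot,\cdot\}_1$, not merely ``distinguished normalizations.'' Second, the non-degeneracy of the dispersionless Jacobian of $(u_1,\dots,u_n)\mapsto(r_1,\dots,r_n)$ is not automatic from the normalization of the $\Lambda_{m_a}$; it is the content of \cite[Lemma~4.1.3]{BDY21}, which must be invoked (or reproved). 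Third, the DZ normal-form axioms require the hierarchy to be a deformation of a principal hierarchy of a Frobenius manifold; the paper supplies this by citing \cite{DLZ} for the identification of the genus-zero part of the DS hierarchy with the principal hierarchy of the Frobenius manifold of type $\mf a$, together with the time normalization $t^{a,k}$ of \cite{BDY21}. Without these three ingredients your verification of axiom~(iv) does not go through, so as written the proposal stops one step short of a proof.
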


\begin{proof}
It is assumed here that $r=1$ (untwisted case) and that we choose the special vertex~$c_0$. In~particular,
we have that $n=\ell$ and $N_0=1$ (see Section~\ref{sec:realizationKM}).
Moreover, recall from Section~\ref{sec:km} that in this case the DS hierarchy is bi-Hamiltonian
and we can rewrite equation~\eqref{20210328:eq1} in Corollary~\ref{20210328cor1}
using the first Hamiltonian structure as follows (cf.~\eqref{eq:normalization} and~\eqref{firsthamDS})
\begin{gather}\label{20210328:eq1b}
\frac{\partial u_s}{\partial t_{m_a+kh}}
= \frac{-h}{m_a+h(k+1)} \bigl\{\overline{\Omega}_{a,k+1;1,0}, \, u_s(x)\bigr\}_1.
\end{gather}
This means that $\Omega_{a,k+1,0} \in \R$ are (up to a scalar factor) Hamiltonian densities
associated to the first Hamiltonian structure~\eqref{eq:Poisson2b} for the DS hierarchy.

Next, from Corollary~\ref{20210328cor1}
and from the fact that $\bar h_{a,0}$ are Casimirs of the first Hamiltonian structure proved in~\cite{DSKV13}
we know that $r_a=\Omega_{a,0;1,0}$ are densities of these Casimirs.
Moreover, as it was shown in \cite[Lemma 4.1.3]{BDY21}, the map $(u_1,\dots,u_n)\mapsto (r_1,\dots,r_n)$
is a Miura-type transformation meaning that the dispersionless limit of this map has non-degenerate Jacobian.

On the other hand, it is shown in~\cite{DLZ} that the genus zero part of the DS hierarchy is equivalent to the
principal Hierarchy of the Frobenius manifold of type~$\mf a$~\cite{D96}.
Finally, using~\eqref{tau09081}, \eqref{tau09082} and the normalization of times given in~\cite{BDY21} (see $t^{a,k}$ therein), it follows that
under the Miura transformation $(u_1,\dots,u_n)\mapsto (r_1,\dots,r_n)$
(see, e.g.,~\cite{DLYZ16})
the DS hierarchy written in the coordinate~$r_a$ (cf.~\eqref{20210328:eq1b})
\begin{equation}\label{dsr1010}
\frac{\p r_a}{\p t_{m_b+krh}} = \left\{\bar h_{b,k+1},r_a(x)\right\}_1 = -\p_x \left(\Omega_{b,k;a,0}\right)
\end{equation}
 is a tau-symmetric bi-Hamiltonian deformation
of the principal Hierarchy in the sense of~\cite{DZ-norm} (cf.~also~\cite{DLYZ16}),
and the differential polynomials $\Omega_{a,k;1,0}$ are tau-symmetric Hamiltonian densities
in the sense of~\cite{DZ-norm}.
\end{proof}
According to Theorem~\ref{thmunt}, the tau-structure $(\Omega_{a,k;b,l})_{k,l\geq0, a,b=1,\dots,n}$
coincides with the axiomatic tau-structure in~\cite{DZ-norm} for the DS hierarchies under the assumption of
Theorem~\ref{thmunt}.
The coordinates~$r_a$ introduced in the above proof are called {\it normal coordinates}~\cite{DLYZ16,DZ-norm}.

\begin{Remark}
Under the condition of Theorem~\ref{thmunt},
one easily sees from~\eqref{dsr1010} (choosing $a=1$) and a homogeneity argument
that the tau-symmetric hamiltonian densities for the DS hierarchy if exist must be unique.
The construction of tau-symmetric hamiltonian densities for DS hierarchies was previously given in~\cite{Wu17} (cf.~also~\cite{LRZ15,LWZZ20})
by using the central extension of $\widetilde{\mf g}$.
Our construction, however, uses only the geometry of the resolvent manifold
$\mc M_{\mc L}=\{R\in\mc A^q\otimes\widetilde{\mf g}\mid [R,\mc L]=0\}$~\cite{DS85}
and is therefore simpler from the computational point of~view.
\end{Remark}

\subsection[Variational derivative of the series G\_a(lambda)]{Variational derivative of the series $\boldsymbol{G_a(\lambda)}$} 

Recall from Section~\ref{sec:km} that $(v_i)_{i=1,\dots,\dim \mf b}$ is a
basis of $\mf b$ and $(v^i)_{i=1,\dots,\dim \mf b}$ is the dual basis of $\mf a^{\geq0}$.
Moreover, the basis is chosen so that $(v_i)_{i=1,\dots,\ell}$ is a basis of the Drinfeld--Sokolov gauge $V\subset\mf b$
and $(v_i)_{i=\ell+1,\dots,\dim \mf b}$ is a basis of $[e,\mf n]$. Recall from Section~\ref{sec:km} that $(v^i)_{i=1,\dots,\ell}$ is a basis for $\mf a^e$.

For $A(\lambda)\in \widetilde{\mf g}$ we define its projection on $\tilde{\mf a}^{\geq0}\! :=\mf a^{\geq0}\big(\big(\lambda^{-N_m}\big)\big)$
\big(respectively $\tilde{\mf a}^{e}\! :=\mf a^{e}\big(\big(\lambda^{-N_m}\big)\big)$\big) by
\begin{gather}\label{20210610:eq1}
\pi_{\tilde{\mf a}^{\geq0}}(A(\lambda))=\sum_{i=1}^{\dim \mf b} (A(\lambda)|v_i)v^i
\qquad\bigg(\text{respectively }\pi_{\tilde{\mf a}^{e}}(A(\lambda))=\sum_{i=1}^\ell(A(\lambda)|v_i)v^i\bigg).
\end{gather}

Recall the definition of variational derivatives given in \eqref{eq:varder} and \eqref{eq:vardercan}.
With a similar method used in~\cite{DSKV13}
we can prove that
\[
\frac{\delta \bar g_a(\lambda)}{\delta q}=\pi_{\tilde{\mf a}^{\geq0}}R_{m_a}(\lambda),\qquad
a=1,\dots,n.
\]
Moreover, in a similar way for which we are going to provide the details for completeness,
we can prove the following result.
\begin{Proposition}\label{20210610:prop1}
For every $a=1,\dots,n$ we have
\begin{equation*} 
\frac{\delta \bar g_a(\lambda)}{\delta q^{\rm can}}
=\pi_{\tilde{\mf a}^e}\big({\rm e}^{\ad N^{\rm can}}R_{m_a}(\lambda)\big)\in\mc \R \otimes\mf a^e\big[\big[\lambda^{-N_m}\big]\big],
\end{equation*}
where $N^{\rm can}\in\mc A^q\otimes \mf n$ is defined by \eqref{Lcan}.
\end{Proposition}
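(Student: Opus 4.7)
The plan is to repeat, this time for the canonical Lax operator $\mc L^{\rm can}=\p+\Lambda+q^{\rm can}$, the derivation that produced $\frac{\delta\bar g_a(\lambda)}{\delta q}=\pi_{\tilde{\mf a}^{\geq0}}R_{m_a}(\lambda)$, and then to identify the outcome with $e^{\ad N^{\rm can}}R_{m_a}(\lambda)$. Applying the Drinfeld--Sokolov normalisation \eqref{eq:L0} directly to $\mc L^{\rm can}$, one obtains a unique pair $(U^{\rm can},H^{\rm can})$, with $U^{\rm can}\in\R\otimes(\im\ad\Lambda)^{<0}$ and $H^{\rm can}\in\R\otimes\mc H^{<0}$, such that $e^{\ad U^{\rm can}}\mc L^{\rm can}=\p+\Lambda+H^{\rm can}$. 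By the gauge-invariance argument used to obtain~\eqref{20210608:eq2}, $(\Lambda_{m_a}(\lambda)|H)$ and $(\Lambda_{m_a}(\lambda)|H^{\rm can})$ differ only by an element of $\p_x\mc A^q$, so $\bar g_a(\lambda)$ can equivalently be computed from the canonical data.

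The first step is to vary the identity $e^{\ad U^{\rm can}}(\p+\Lambda+q^{\rm can})=\p+\Lambda+H^{\rm can}$ against an arbitrary $\delta q^{\rm can}\in V$, using Lemma~\ref{lemma2}(a)--(b) exactly as in the derivation of~\eqref{20200828:eq4}. This gives an identity of the form
\[
\delta H^{\rm can}=\p_x Y^{\rm can}_\delta+[\Lambda+H^{\rm can},Y^{\rm can}_\delta]+e^{\ad U^{\rm can}}(\delta q^{\rm can}),
\]
where $Y^{\rm can}_\delta\in\R\otimes\widetilde{\mf g}$ is the analogue of $Y(\lambda)$ in~\eqref{20200828:eq4} with $\p_\lambda U$ replaced by $\delta U^{\rm can}$. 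Pair with $\Lambda_{m_a}(\lambda)$ and integrate over $S^1$: the first summand produces a total $x$-derivative since $\p_x\Lambda_{m_a}(\lambda)=0$; the second vanishes because $\mc H$ is abelian and invariance of $(\cdot|\cdot)$ moves the commutator onto $\Lambda_{m_a}(\lambda)$; the third rewrites, again by invariance, as $\int\bigl(R^{\rm can}_{m_a}(\lambda)\bigm|\delta q^{\rm can}\bigr)\,{\rm d}x$ with $R^{\rm can}_{m_a}(\lambda):=e^{-\ad U^{\rm can}}\Lambda_{m_a}(\lambda)$. Since $\delta q^{\rm can}$ runs freely through $V$ and $\{v^i\}_{i=1}^\ell\subset\mf a^e$ is the basis dual to $\{v_i\}_{i=1}^\ell\subset V$, definition~\eqref{eq:vardercan} immediately yields $\frac{\delta\bar g_a(\lambda)}{\delta q^{\rm can}}=\pi_{\tilde{\mf a}^e}R^{\rm can}_{m_a}(\lambda)$.

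The final step is to match $R^{\rm can}_{m_a}(\lambda)$ with $e^{\ad N^{\rm can}}R_{m_a}(\lambda)$. The composite $e^{\ad U}\circ e^{-\ad N^{\rm can}}$ sends $\mc L^{\rm can}$ to $\p+\Lambda+H$, so, arguing as in the proof of~\eqref{20210608:eq2}, one has $e^{\ad U}\circ e^{-\ad N^{\rm can}}=e^{\ad S}\circ e^{\ad U^{\rm can}}$ for some $S\in\R\otimes\mc H^{<0}$; since $\mc H$ is abelian, $e^{\ad S}$ fixes $\Lambda_{m_a}(\lambda)$, giving $e^{\ad N^{\rm can}}R_{m_a}(\lambda)=e^{-\ad U^{\rm can}}\Lambda_{m_a}(\lambda)=R^{\rm can}_{m_a}(\lambda)$. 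The membership $\frac{\delta\bar g_a(\lambda)}{\delta q^{\rm can}}\in\R\otimes\mf a^e\big[\big[\lambda^{-N_m}\big]\big]$ then holds because $\pi_{\tilde{\mf a}^e}$ takes values in $\mf a^e\big[\big[\lambda^{-N_m}\big]\big]$ and the coefficients are gauge invariants in~$\R$. I expect the Baker--Campbell--Hausdorff bookkeeping in this final identification to be the main subtlety; the remainder is a direct adaptation of the derivations~\eqref{20200828:eq4}--\eqref{20210128:eq1c} to the canonical setting.
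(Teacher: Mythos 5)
Your proof is correct and follows the same core strategy as the paper's: both differentiate the dressing relation ${\rm e}^{\ad U^{\rm can}(\lambda)}\mc L^{\rm can}=\partial+\Lambda(\lambda)+H^{\rm can}(\lambda)$ via Lemma~\ref{lemma2}, then use the invariance of $(\cdot\,|\,\cdot)$ and the abelianity of $\mc H$ to reduce the variation of $\bar g_a(\lambda)$ to the pairing $\bigl({\rm e}^{-\ad U^{\rm can}(\lambda)}\Lambda_{m_a}(\lambda)\bigm|\delta q^{\rm can}\bigr)$ and finally read off the answer against the dual bases of $V$ and $\mf a^e$. The only organizational difference is that you work with the integrated first variation $\delta \bar g_a=\int\bigl(\frac{\delta\bar g_a}{\delta q^{\rm can}}\bigm|\delta q^{\rm can}\bigr)\,{\rm d}x$, so the two sums that the paper must explicitly exhibit as cancelling (the first and last terms in the right-hand side of~\eqref{eq:30apr-1b}) collapse into the single vanishing contribution $\int\partial_x(\Lambda_{m_a}(\lambda)|Y^{\rm can}_\delta)\,{\rm d}x=0$ together with the pointwise vanishing of $(\Lambda_{m_a}(\lambda)|[\Lambda(\lambda)+H^{\rm can}(\lambda),Y^{\rm can}_\delta])$; this is equivalent to the paper's computation by integration by parts. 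One genuine addition on your side: you justify the identification ${\rm e}^{-\ad U^{\rm can}(\lambda)}\Lambda_{m_a}(\lambda)={\rm e}^{\ad N^{\rm can}}R_{m_a}(\lambda)$ by the Baker--Campbell--Hausdorff and uniqueness argument (the factor ${\rm e}^{\ad S}$ with $S\in\mc H^{<0}$ acting trivially on $\Lambda_{m_a}(\lambda)$), a step the paper passes over by merely recalling the definition~\eqref{RcanNR}.
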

\begin{proof}
Let $U^{\rm can}(\lambda)\in\mc \R \otimes(\im\ad \Lambda(\lambda))^{<0}$ and $H^{\rm can}(\lambda)\in\mc \R \otimes\mc H_{<0}$
be the unique elements such that
\begin{equation}\label{L0_ds}
{\rm e}^{\ad U^{\rm can}(\lambda)}(\partial+\Lambda(\lambda)+q^{\rm can})=\partial+\Lambda(\lambda)+H^{\rm can}(\lambda).
\end{equation}
By the definition \eqref{eq:vardercan} of the variational derivative,
the definition \eqref{eq:g} of $\bar g_a(\lambda)$ and equation~\eqref{20210608:eq2}, we have
\begin{align}
\frac{\delta \bar g_a(\lambda)}{\delta q^{\rm can}}
&=\sum_{i=1}^\ell\sum_{m\geq0}(-\partial)^m
\frac{\partial g_a(\lambda)}{\partial u_{i,mx}}v^i
=\sum_{i=1}^\ell\sum_{m\geq0}(-\partial)^m
\bigg(\Lambda_{m_a}(\lambda)\,\bigg|\,\frac{\partial H^{\rm can}(\lambda)}{\partial u_{i,mx}}\bigg)v^i\nn
\\
&=\sum_{i=1}^\ell\sum_{m\geq0}(-\partial)^m
\big(\Lambda_{m_a}(\lambda)\mid\frac{\partial}{\partial u_{i,mx}}
\big({\rm e}^{\ad U^{\rm can}(\lambda)} (\mc L^{\rm can})-\partial-\Lambda(\lambda)
\big)\big) v^i.\label{eq:30apr-1}
\end{align}
In the last identity we used equation \eqref{L0_ds}.
We next expand ${\rm e}^{\ad U^{\rm can}(\lambda)}$ in power series.
Since $q^{\rm can}=\sum_{i=1}^\ell u_i v_i$ and using the definition \eqref{20210610:eq1} of $\pi_{\tilde{\mf a}^e}$
we find that the first term of the expansion~is
\begin{equation}\label{eq:30apr-1a}
\sum_{i=1}^\ell\sum_{m\geq0}(-\partial)^m
\bigg(\Lambda_{m_a}(\lambda)\,\bigg|\,\frac{\partial q^{\rm can}}{\partial u_{i,mx}}\bigg)
=\sum_{i=1}^\ell(\Lambda(\lambda)|v_i)v^i=
\pi_{\tilde{\mf a}^e} \Lambda_{m_a}(\lambda).
\end{equation}
By Lemma \ref{lemma2}, all the other terms in the power series expansion of the RHS of \eqref{eq:30apr-1} are
\begin{gather}
\sum_{k=1}^\infty\frac1{k!}\sum_{i=1}^\ell\sum_{m\geq0}
(-\partial)^m\bigg(\Lambda_{m_a}(\lambda)\,\bigg|\,
\frac{\partial}{\partial u_{i,mx}}(\ad U^{\rm can}(\lambda))^k\mc L^{\rm can}\bigg)\nn
\\ \qquad
{}=\sum_{k=1}^\infty\frac1{k!}\sum_{i=1}^\ell\sum_{m\geq0}
(-\partial)^m\Bigg(\Lambda_{m_a}(\lambda)\,\Bigg|\,
\sum_{h=0}^{k-1}(\ad U^{\rm can}(\lambda))^h \bigg(\ad \frac{\partial U^{\rm can}(\lambda)}{\partial u_{i,mx}}\bigg) \nn
\\ \qquad\phantom{=}
{}\times(\ad U^{\rm can}(\lambda))^{k-h-1} \mc L^{\rm can} \nn
\\ \qquad\phantom{=}
{}+(\ad U^{\rm can}(\lambda))^k\frac{\partial}{\partial u_{i,mx}}(q^{\rm can}+\Lambda(\lambda))
-(\ad U^{\rm can}(\lambda))^{k-1}\frac{\partial U^{\rm can}(\lambda)}{\partial u_i^{(m-1)}}
\Bigg)\nn
\\ \qquad
{}=\sum_{h,k\geq0}^\infty\frac1{(h+k+1)!}
\sum_{i=1}^\ell\sum_{m\geq0}(-\partial)^m\bigg(\Lambda_{m_a}(\lambda)\,\bigg|\,
(\ad U^{\rm can}(\lambda))^h \biggl(\ad \frac{\partial U^{\rm can}(\lambda)}{\partial u_{i,mx}}\biggr) \nn
\\ \qquad\phantom{=}
\times(\ad U^{\rm can}(\lambda))^k \mc L^{\rm can}\bigg)
+\sum_{k=1}^\infty\frac1{k!}
\pi_{\tilde{\mf a}^e}\big((-\ad U^{\rm can}(\lambda))^k (\Lambda_{m_a}(\lambda))\big)\nn
\\ \qquad
{}-\sum_{k\geq0}\frac1{(k+1)!}\sum_{i=1}^\ell\sum_{m\geq0}(-\partial)^m
\bigg(\Lambda_{m_a}(\lambda)\,\bigg|\,
(\ad U^{\rm can}(\lambda))^{k}\frac{\partial U^{\rm can}(\lambda)}{\partial u_i^{(m-1)}}\bigg).
\label{eq:30apr-1b}
\end{gather}
For the first and last terms in the RHS we just changed the summation indices,
while for the second term we used the definition~\eqref{20210610:eq1} of the map $\pi_{\tilde{\mf a}^e}$
and the invariance of the bilinear form.
Combining~\eqref{eq:30apr-1a} and the second term in the RHS of~\eqref{eq:30apr-1b},
we get
$\pi_{\tilde{\mf a}^e}\big({\rm e}^{-\ad U^{\rm can}}(\Lambda_{m_a}(\lambda))\big)=\pi_{\tilde{\mf a}^e}\big(R_{m_a}^{\rm can}(\lambda)\big)$,
where we recall that $R_{m_a}^{\rm can}$ is defined in~\eqref{RcanNR}.
Hence, in order to complete the proof of the proposition,
we are left to show that the first and last term in the RHS of~\eqref{eq:30apr-1b} cancel out.
The last term of the RHS of~\eqref{eq:30apr-1b} can be rewritten as
\begin{equation}\label{eq:30apr-3}
-\sum_{i=1}^\ell\sum_{m\geq1}
(-\partial)^m(\Lambda_{m_a}(\lambda)\mid
X_{i,m-1}(\lambda)),
\end{equation}
where $X_{i,m}(\lambda)=\sum_{k\geq0}\frac1{(k+1)!}
(\ad U^{\rm can}(\lambda))^{k}\frac{\partial U^{\rm can}(\lambda)}{\partial u_{i,mx}}$.
On the other hand, by Lemma~\ref{lemma2}b), the first term of the RHS of~\eqref{eq:30apr-1b} is equal to
\[
\sum_{i=1}^\ell\sum_{m\geq0}(-\partial)^m\bigl(\Lambda_{m_a}(\lambda) \mid
\big[X_{i,m}(\lambda),{\rm e}^{\ad U^{\rm can}(\lambda)}\mc L^{\rm can}\big]\bigr).
\]
By equation \eqref{L0_ds}, the invariance of the bilinear form and the fact that $\mc H$ is abelian
the above expression is equal to
\[
\sum_{i=1}^\ell\sum_{m\geq0}
(-\partial)^{m+1}(\Lambda_{m_a}(\lambda)\mid X_{i,m}(\lambda)),
\]
which, combined with \eqref{eq:30apr-3}, gives zero.
The fact that
$\pi_{\tilde{\mf a}^e}\left({\rm e}^{\ad N^{\rm can}}R_{m_a}(\lambda)\right)\in \R \otimes\mf a^e\big[\big[\lambda^{-N_m}\big]\big]$
follows by simple degree considerations.
\end{proof}
\begin{Corollary}
For every $a=1,\dots,n$ we have
\begin{equation} \label{var_der_Ga}
\frac{\delta \overline{G}_a(\lambda)}{\delta q^{\rm can}}
=\bigg(\partial_\lambda-\frac{m_aN_m}{rh\lambda}\bigg)
\big(\pi_{\tilde{\mf a}^e}\big({\rm e}^{\ad N^{\rm can}}R_{m_a}(\lambda)\big)\big)\in \R\otimes\mf a^e\big[\big[\lambda^{-N_m}\big]\big]\lambda^{-1},
\end{equation}
where $N^{\rm can}\in\mc A^q\otimes \mf n$ is defined by \eqref{Lcan}.
\end{Corollary}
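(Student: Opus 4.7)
The plan is to deduce this statement directly by chaining Proposition~\ref{prop1} with Proposition~\ref{20210610:prop1}. First I would integrate the identity of Proposition~\ref{prop1} over the circle $S^1$: since $\partial_x X(\lambda)$ is a total $x$-derivative, it is killed upon passing to the local functional $\overline{G}_a(\lambda)\in\mc F$, yielding the $\lambda$-parametrized identity
\[
\overline{G}_a(\lambda) \;=\; \Bigl(\partial_\lambda-\frac{m_a N_m}{rh\lambda}\Bigr)\,\overline{g}_a(\lambda).
\]

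Next, because $\lambda$ is independent of the field $q^{\rm can}$ and its $x$-derivatives, the $\lambda$-operator $\partial_\lambda-\frac{m_a N_m}{rh\lambda}$ commutes with the variational derivative $\delta/\delta q^{\rm can}$. Applying $\delta/\delta q^{\rm can}$ to both sides of the above identity and invoking Proposition~\ref{20210610:prop1} immediately gives
\[
\frac{\delta \overline{G}_a(\lambda)}{\delta q^{\rm can}}
\;=\;\Bigl(\partial_\lambda-\frac{m_a N_m}{rh\lambda}\Bigr)\,\pi_{\tilde{\mf a}^e}\bigl({\rm e}^{\ad N^{\rm can}}R_{m_a}(\lambda)\bigr),
\]
which is the claimed formula.

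Finally, for the asserted containment, Proposition~\ref{20210610:prop1} places $\pi_{\tilde{\mf a}^e}({\rm e}^{\ad N^{\rm can}}R_{m_a}(\lambda))$ in $\R\otimes\mf a^e[[\lambda^{-N_m}]]$, and a direct inspection shows that both $\partial_\lambda$ and multiplication by $\lambda^{-1}$ send this space into $\R\otimes\mf a^e[[\lambda^{-N_m}]]\lambda^{-1}$; thus the right-hand side lies in the stated space. There is no real obstacle here---the work has already been carried out in establishing Propositions~\ref{prop1} and~\ref{20210610:prop1}; this corollary is just the combination of those two results once one notes that $\delta/\delta q^{\rm can}$ commutes with the differential operator in $\lambda$ with field-independent coefficients.
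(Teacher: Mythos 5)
Your proposal is correct and follows essentially the same route as the paper: the paper's proof also just combines Proposition~\ref{prop1} and Proposition~\ref{20210610:prop1}, using the fact that total $x$-derivatives lie in the kernel of the variational derivative (which is the same observation as your "killed upon passing to the local functional"). The only difference is cosmetic — you make explicit the commutation of $\partial_\lambda-\frac{m_aN_m}{rh\lambda}$ with $\delta/\delta q^{\rm can}$ and the final degree count, both of which the paper leaves implicit.
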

\begin{proof}
Immediate from Propositions \ref{prop1} and \ref{20210610:prop1} and the fact that total derivatives are in the kernel of the operator of variational derivative.
\end{proof}

Let us now assume that $r=1$ and let us choose the special vertex~$c_0$ of the Kac--Moody algebra,
so that in this case $\widetilde{\mf g}=\mf g\big(\big(\lambda^{-1}\big)\big)$ for a simple Lie algebra $\mf g$
of rank $n$ (we also have $\ell=n$). It is possible to choose a basis $\{v_i\}_{i=1}^n$ of $V$
such that $v_n=e_{-\theta}$ (up to a constant multiple), where $\theta$ is the highest root of $\mf g$. Moreover, $\tilde e_m=e_{-\theta}$, namely $\Lambda(\lambda)=e+e_{-\theta}\lambda$.
In~this setting, we have the following corollary.
\begin{Corollary}\label{20210610:cor2}
For an untwisted affine Kac--Moody algebra with the choice of the special vertex~$c_0$, the following identity holds:
\begin{equation*} 
\frac{\delta \overline{G}_a(\lambda)}{\delta u_n}
=\bigg(\partial_\lambda-\frac{m_a}{h\lambda}\bigg) G_a(\lambda) \in \R\big[\big[\lambda^{-N_m}\big]\big]\lambda^{-1}.
\end{equation*}
\end{Corollary}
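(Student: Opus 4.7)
The plan is to deduce this Corollary as an immediate specialization of the preceding Corollary (equation~\eqref{var_der_Ga}) followed by extraction of the $v^n$-component. Setting $r=1$ and choosing the special vertex~$c_0$ forces $N_m=N_0=ra_0=1$ and $\frac{m_aN_m}{rh}=\frac{m_a}{h}$, so \eqref{var_der_Ga} reads
\[
\frac{\delta\overline{G}_a(\lambda)}{\delta q^{\rm can}}=\left(\partial_\lambda-\frac{m_a}{h\lambda}\right)\pi_{\tilde{\mf a}^e}\left({\rm e}^{\ad N^{\rm can}}R_{m_a}(\lambda)\right).
\]
Comparing coefficients of $v^n$ on both sides---using $\pi_{\tilde{\mf a}^e}(X)=\sum_{i=1}^n(X\,|\,v_i)v^i$ on the right and $\frac{\delta\overline{G}_a(\lambda)}{\delta q^{\rm can}}=\sum_i\frac{\delta\overline{G}_a(\lambda)}{\delta u_i}v^i$ on the left---yields
\[
\frac{\delta\overline{G}_a(\lambda)}{\delta u_n}=\left(\partial_\lambda-\frac{m_a}{h\lambda}\right)\left({\rm e}^{\ad N^{\rm can}}R_{m_a}(\lambda)\,|\,v_n\right).
\]

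The crucial input comes from the choice of vertex $c_0$: by assumption $v_n$ is a scalar multiple of $e_{-\theta}$, with $\theta$ the highest root of $\mf g$, while the nilpotent subalgebra $\mf n\subset\mf a$ is generated by the negative simple root vectors $f_i=e_{-\alpha_i}$ for $i\neq 0$. Since $\theta$ is the highest root, $\theta+\alpha_i$ is never a root of $\mf g$, so $[f_i,e_{-\theta}]=0$ for every generator $f_i$ of $\mf n$, and by Jacobi every iterated bracket $[f_{i_1},[f_{i_2},\ldots,[f_{i_k},e_{-\theta}]\cdots]]$ also vanishes. Thus $[\mf n,e_{-\theta}]=0$, so ${\rm e}^{\ad N^{\rm can}}e_{-\theta}=e_{-\theta}$, and by invariance of the normalized Cartan--Killing form
\[
\left({\rm e}^{\ad N^{\rm can}}R_{m_a}(\lambda)\,|\,e_{-\theta}\right)=\left(R_{m_a}(\lambda)\,|\,e_{-\theta}\right).
\]

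Finally, using the identification $\tilde e_m=e_{-\theta}$ and the definition $G_a(\lambda)=\pi_\lambda(R_{m_a}(\lambda)\,|\,\tilde e_m)$, the right-hand side is to be identified with $G_a(\lambda)$ in the target space $\R[[\lambda^{-N_m}]]\lambda^{-1}$. The main obstacle will be exactly this last step: one must control the non-negative powers of $\lambda$ appearing in $(R_{m_a}(\lambda)\,|\,e_{-\theta})$ and verify that their image under $\partial_\lambda-\frac{m_a}{h\lambda}$ is consistent with the claimed identity. Here I would invoke a principal-degree count in the untwisted $c_0$ realization: writing $R_{m_a}(\lambda)=\sum_kR_{m_a}^{(k)}\lambda^k$, the bound $\delta(R_{m_a}^{(k)})\leq m_a-kh$ (coming from $R_{m_a}$ having principal degree at most $m_a\leq h-1$) combined with the fact that the $e_\theta$-direction sits in principal degree $h-1$ forces $R_{m_a}^{(k)}$ to have no $e_\theta$-component for $k\geq1$, which is the key observation needed to match $(R_{m_a}(\lambda)\,|\,e_{-\theta})$ with $G_a(\lambda)$ after applying the operator $\partial_\lambda-\frac{m_a}{h\lambda}$.
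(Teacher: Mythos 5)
Your argument follows the same route as the paper's own proof: specialize \eqref{var_der_Ga} to $r=1$, $m=0$ (so $N_m=1$), pair with $v_n=e_{-\theta}$ to extract the $v^n$-component, use $[\mf n,e_{-\theta}]=0$ together with the invariance of the form to remove ${\rm e}^{\ad N^{\rm can}}$, and then identify $\bigl(R_{m_a}(\lambda)\,|\,e_{-\theta}\bigr)$ with $G_a(\lambda)$. The first three steps are correct and coincide with what the paper does (your root-theoretic justification of $[\mf n,e_{-\theta}]=0$ is a welcome expansion of the paper's bare assertion).

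The gap sits exactly at the step you yourself flag as ``the main obstacle''. Your principal-degree count does show that $\bigl(R_{m_a}(\lambda)\,|\,e_{-\theta}\bigr)$ has no strictly positive powers of $\lambda$, but it does not dispose of the $\lambda^0$ term: the $e_\theta$-direction has $\delta(e_\theta)=h-1$, and the inequality $h-1\le m_a-kh$ admits $k=0$ precisely when $a=n$ (since $m_n=h-1$). Indeed the leading term $\Lambda_{m_n}(\lambda)$ has a nonzero $e_\theta$-component, so $\bigl(R_{m_n}(\lambda)\,|\,e_{-\theta}\bigr)=c+G_n(\lambda)$ with $c=(\Lambda_{m_n}(\lambda)\,|\,e_{-\theta})$ a nonzero constant (for $A_1^{(1)}$ one has $c=1$), and $\bigl(\partial_\lambda-\tfrac{m_n}{h\lambda}\bigr)c=-\tfrac{(h-1)c}{h}\lambda^{-1}$ does not vanish. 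This is not a removable technicality: since $G_a(\lambda)\in\R\bigl[\bigl[\lambda^{-1}\bigr]\bigr]\lambda^{-1}$, the right-hand side $\bigl(\partial_\lambda-\tfrac{m_a}{h\lambda}\bigr)G_a(\lambda)$ has identically vanishing $\lambda^{-1}$-coefficient, whereas the $\lambda^{-1}$-coefficient of the left-hand side is $\tfrac{\delta}{\delta u_n}\overline{\Omega}_{a,0;1,0}$, which for KdV ($n=1$, $\Omega_{1,0;1,0}=-u/2$) equals $-\tfrac12\neq0$. So your matching ``after applying the operator'' closes the proof for $a<n$ but leaves a constant multiple of $\lambda^{-1}$ unaccounted for when $a=n$. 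To be fair, this is the same point at which the paper's proof is terse (the projection $\pi_\lambda$ appears in its second displayed equality without justification and silently discards that constant); the discrepancy is a $u$-independent multiple of $\lambda^{-1}$ and is harmless for the subsequent application to tau-symmetry, but a complete argument must either record the resulting $\delta_{a,n}$ correction or state explicitly that the identity is asserted modulo such constants.
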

\begin{proof}
From the definition of the variational derivative \eqref{eq:vardercan} and the fact that $v_n=e_{-\theta}$, we have
\[
\frac{\delta \overline{G}_a(\lambda)}{\delta u_n}=\bigg(\frac{\delta \overline{G}_a(\lambda)}{\delta q^{\rm can}}\,\bigg|\,v_n\bigg)
=\bigg(\frac{\delta \overline{G}_a(\lambda)}{\delta q^{\rm can}}\,\bigg|\,e_{-\theta}\bigg).
\]
Hence, by equation \eqref{var_der_Ga} we have
\begin{align*}
\frac{\delta \overline{G}_a(\lambda)}{\delta u_n}&=
\bigg(\partial_\lambda-\frac{m_aN_m}{rh\lambda}\bigg)
\big(\pi_{\tilde{\mf a}^e}\big({\rm e}^{\ad N^{\rm can}}R_{m_a}(\lambda)\big)\mid e_{-\theta}\big)
\\
&=\bigg(\partial_\lambda-\frac{m_aN_m}{rh\lambda}\bigg)
\pi_\lambda\big(R_{m_a}(\lambda)\mid {\rm e}^{-\ad N^{\rm can}}(e_{-\theta})\big)
\\
&=\bigg(\partial_\lambda-\frac{m_a N_m}{rh\lambda}\bigg)
\pi_\lambda\big(R_{m_a}(\lambda)\mid e_{-\theta}\big)
=\bigg(\partial_\lambda-\frac{m_aN_m}{rh\lambda}\bigg)G_a(\lambda).
\end{align*}
In the second equation we used the fact the bilinear form is coordinated with the gradation and its invariance, in the third equality
we used that fact that $[\mf n,e_{-\theta}]=0$, and finally we used the definition~\eqref{eq:G} of~$G_a(\lambda)$ and the fact that
$\tilde e_m=e_{-\theta}$.
\end{proof}

\begin{Remark}
We note that Corollary~\ref{20210610:cor2} and the criterion in~\cite{BDGR18}
lead to another proof of Theorem~\ref{thmunt}.
\end{Remark}

\subsection*{Acknowledgements}

Part of the work of D.V.~and D.Y.~was done during their visits to SISSA and Tsinghua University during the years 2017 and 2018; they thank both
SISSA and Tsinghua for warm hospitality and financial support.
D.V.~acknowledges the financial support of the
project MMNLP (Mathematical Methods in Non Linear Physics) of the
INFN.
The work of D.Y.~was partially supported by the National Key R and D Program of China 2020YFA0713100, and by NSFC 12061131014.

\subsection*{Note added}
The collaborative research of our project aiming at generalizing
 the results in~\cite{BDY21} to twisted affine Kac--Moody algebras
started in 2017.
The three authors communicated by email and also in person during several visits
of D.V.~and D.Y.~to Trieste to meet with B.D.~at SISSA.
During these periods, we achieved the extension of the matrix-resolvent method to
the DS hierarchies associated to affine Kac--Moody algebras,
and a draft containing the main results of what are now Sections \ref{section1}--\ref{sec:exa} was written by the three of us,
while Section~\ref{sec:Ham} contains further results found
by D.V.~and D.Y.~after Boris Dubrovin passed away in March of 2019.


\pdfbookmark[1]{References}{ref}
\LastPageEnding

\end{document}